\documentclass[twocolumn]{autart}    

\usepackage{amsmath,amsfonts,bm,mathrsfs,amssymb,gensymb,multirow,color}
\usepackage{array}
\usepackage{textcomp}
\usepackage{stfloats}
\usepackage{url}
\usepackage{verbatim}
\usepackage{graphicx}
\usepackage{ntheorem}
\usepackage{algorithm}
\usepackage{etoolbox}
\usepackage{cite}
\usepackage{booktabs}
\usepackage[T1]{fontenc}
\usepackage[switch,pagewise]{lineno} 
\graphicspath{{Fig2/}}

\let\classAND\AND
\let\AND\relax
\usepackage{algorithmic}

\let\AND\classAND
\AtBeginEnvironment{algorithmic}{\let\AND\algoAND}

\newtheorem{theorem}{Theorem}
\newtheorem{lemma}{Lemma}
\newtheorem{definition}{Definition}

\makeatletter
\let\ps@copyright\ps@plain
\makeatother

\begin{document}
	
	\begin{frontmatter}
		\title{Extended Neighborhood Message Passing for Inference in Loopy Factor Graphs} 
		\thanks[footnoteinfo]{This work was supported in part by the National Natural Science Foundation of China under Grants 62473317, U21B2008, and 62371398. Corresponding author: Zengfu Wang.}

		\author[NPU]{Kuilong Yang}\ead{kuilongyang@mail.nwpu.edu.cn},    
		\author[NPU]{Zengfu Wang*}\ead{wangzengfu@nwpu.edu.cn},    
		\author[NPU]{Hua Lan}\ead{lanhua@nwpu.edu.cn}    

		\address[NPU]{School of Automation, Northwestern Polytechnical University,Xi'an 710072, China.}
		%
		
		\begin{keyword}                     
			Belief propagation; message passing; factor graph; Gaussian graphical model.                 
		\end{keyword}       

    \begin{abstract}
    Belief propagation (BP) is widely used for inference on probabilistic graphical models but often suffers from non-convergence and poor accuracy on loopy factor graphs, yielding inexact variance estimates in the Gaussian case. Generalized BP (GBP) improves accuracy but relies on manual design or application-specific heuristics for region-graph construction. We propose extended neighborhood message passing (ENMP), which automatically constructs extended neighborhoods of variable nodes from the bipartite factor graph topology, capturing loop-induced correlations without manual region design. The framework applies to general distributions on factor graphs with a tunable accuracy--complexity trade-off. In the exact regime, message updates converge to the unique fixed point of the ENMP equations in finitely many iterations, yielding exact marginals; in the approximate regime, ENMP provides approximate marginals. For Gaussian factor graphs, we derive closed-form update rules enabling exact variance estimation in the exact regime, addressing the long-standing inaccurate variance estimates of loopy Gaussian BP. We evaluate ENMP on densely loopy Gaussian factor graphs from power-grid benchmarks and on a discrete Ising factor graph. Results show accuracy improves with neighborhood size, with exact marginals achieved on several systems in both settings. On larger Gaussian systems, ENMP outperforms regularized Gaussian BP (sGaBP-m) in both mean and variance estimation accuracy.
    \end{abstract}

	\end{frontmatter}
	
	\section{Introduction}
	Probabilistic graphical models~(PGMs) provide a rigorous 
	mathematical framework for representing statistical 
	dependencies and performing inference under uncertainty 
	in high-dimensional 
	systems~\cite{koller2009probabilistic,Larranaga2011PGM}. 
	They have found broad applications in state 
	estimation~\cite{Pasqualetti2012,Marelli2015}, Bayesian 
	inference and learning~\cite{Zorzi2020,Berntorp2021}, 
	and system identification~\cite{Zhou2022Automatica,BONI2025107793}.
	Among PGMs, factor graphs provide a natural representation 
	of the factorization structure of a joint distribution by 
	separating variable nodes from factor nodes, and support 
	message-passing inference through the sum-product 
	algorithm~\cite{Kschischang2001Factor,Loeliger2004Introduction}. 
	Factor graphs have been successfully applied to a wide 
	range of inference problems, including robot 
	perception~\cite{Dellaert2017FactorGraphs}, cooperative 
	localization~\cite{Wymeersch2009}, and multitarget 
	tracking~\cite{LanMTT,Bai2024RT,QI2025}.
	In this paper, we consider marginal inference on loopy 
	factor graphs.

	Inference on factor graphs with loops presents significant challenges. Message-passing algorithms, also known as belief propagation~(BP) algorithms, effectively leverage local structural information~\cite{Kschischang2001Factor}. 
	While BP is theoretically guaranteed to be exact on tree-structured graphs, its performance on factor graphs with loops -- often referred to as loopy BP~(LBP) -- is less reliable. LBP is not guaranteed to converge and may converge to inaccurate fixed points~\cite{murphy1999loopy,weiss2000correctness,ihler2005loopy}. 
	Meanwhile, for Gaussian variables, BP presents a long-standing challenge: it may fail to converge, and even when it does converge, it often yields inexact variances~\cite{Weiss1999CorrectnessGBP,Su2015EstimationVariance}. 
	These limitations stem from BP treating incoming messages as independent, an assumption that breaks down in the presence of loops, particularly short loops that introduce strong correlations among neighboring nodes~\cite{Alec2021BPNL}.

	Substantial efforts have been devoted to mitigating loop effects in probabilistic
	graphical model inference.
	Classical clustering methods~\cite[Sec.~4.4.1]{Pearl1988} address loops by grouping
	variables into clusters, so that loop-induced correlations are handled locally within
	each cluster and message passing is performed between clusters.
	This broad principle is further developed in region-based inference frameworks, most
	notably generalized belief propagation~(GBP)~\cite{Yedidia2000GBP,Yedidia2005Constructing},
	where messages are passed between possibly overlapping regions in a region graph.
	For GBP, however, the marginal accuracy and computational cost depend on the selected
	region graph, while a generally optimal and fully automatic region-selection rule is still lacking, and effective region-graph construction typically relies on
	manual design or application-specific heuristics~\cite{ChoiceGBP}.
	This limitation motivates inference schemes in which enlarged neighborhoods are constructed automatically from the topology of the factor graph, rather than from a manually designed region graph.

	A different line of work, without manual region-graph construction, captures 
	short-range loop-induced correlations by enlarging the local neighborhood of each node.
	Kirkley \emph{et al.}~\cite{Alec2021BPNL} proposed a self-consistent message-passing 
	framework for Ising models that defines larger neighborhoods to capture short-range 
	correlations, improving accuracy over BP while eliminating the need for 
	manual region decomposition.
	Building on \cite{Alec2021BPNL}, subsequent work~\cite{TNBP} combines tensor networks 
	with alternative neighborhood definitions to further refine approximation accuracy.
	However, the methods in~\cite{Alec2021BPNL,TNBP} are developed for
	discrete pairwise MRFs, and extending them to general bipartite factor
	graphs is non-trivial due to topological and functional differences.
	Specifically, these approaches rely on aggregating messages over a larger
	Markov blanket, which is naturally defined in MRFs where variables are
	directly connected.
	In contrast, factor graphs exhibit a strict bipartite structure in which
	variable nodes never interact directly but are conditionally linked solely
	through intermediate factor nodes.
	Mapping these MRF-based neighborhoods to a factor graph
	does not reflect the alternating variable--factor structure and
	therefore does not directly yield a message-passing formulation for general
	bipartite factor graphs, necessitating a distinct formulation of neighborhood
	expansion and the associated message-passing rules.
	Moreover, in contrast to the discrete pairwise MRFs considered
	in~\cite{Alec2021BPNL,TNBP}, the factor graphs considered here may
	involve continuous variables.
	Although~\cite{Alec2021BPNL} establishes exactness under its bounded
	primitive-cycle condition, it does not establish that the message updates
	converge in a finite number of iterations, nor is this exactness result
	established for general bipartite factor graphs.

	A separate line of work focuses on Gaussian graphical models, where accurate variance estimation poses a long-standing challenge.
	To improve the accuracy and convergence of Gaussian BP~(GaBP) in loopy graphs, several variants of message-passing schemes have been developed, such as feedback message passing~\cite{Liu2012FMP}, and variance-improvement algorithms~\cite{Su2015EstimationVariance}.
	Notably, the regularized GaBP algorithm (sGaBP-m)~\cite{Kamper2019,Kamper2020}, designed for pairwise Markov graphs, employs a regularization parameter to guarantee convergence to the exact marginal means.
	While sGaBP-m yields exact means upon convergence with sufficient regularization, selecting the regularization parameter remains a heuristic process, and the estimated variances are still approximate without guarantees.
	Furthermore, the framework of sGaBP-m is tailored for pairwise Markov graphs and may not generalize directly to generic factor graphs~\cite{Kamper2020}.

	In summary, prior approaches face different limitations.
	Region-based methods such as GBP depend on the selected region graph, the
	effective construction of which remains problem-dependent and typically
	relies on manual design or application-specific heuristics;
	the neighborhood-expansion method in~\cite{Alec2021BPNL} is formulated for
	discrete pairwise MRFs and does not directly cover general bipartite factor
	graphs with continuous variables;
	and Gaussian-specific variants, such as GaBP and sGaBP-m, still cannot
	guarantee accurate variance estimates.
	These limitations motivate the present work, particularly for factor graphs
	with short loops.

	In response, we propose the extended neighborhood message
	passing~(ENMP) method for inference on loopy factor graphs. 
	Unlike region-based methods such as GBP, for which effective
	region-graph construction remains problem-dependent and typically
	relies on manual design or application-specific heuristics, ENMP
	automatically constructs an extended neighborhood directly from the
	bipartite factor-graph topology to capture loop-induced correlations.
	A single tunable parameter
	controls the size of this neighborhood, trading off inference
	accuracy against computational complexity. We formulate general
	message-passing update rules for factorized probability models and
	further specialize them to Gaussian factor graphs, deriving
	closed-form updates in canonical parameters. We analyze ENMP under
	two inference regimes. 
	In the \emph{exact regime}, where the extended neighborhoods are
	large enough to fully capture the loop-induced correlations in the
	factor graph, the ENMP message updates converge to a unique fixed
	point in a finite number of iterations and yield the exact marginals;
	such exactness and convergence within a finite number of iterations
	are not established for general bipartite factor graphs
	in~\cite{Alec2021BPNL}.
	For Gaussian factor graphs in particular, the exact regime yields
	not only exact marginal means but also exact marginal variances,
	addressing the well-known variance inaccuracy of loopy
	GaBP in this regime~\cite{Weiss1999CorrectnessGBP}.
	In the \emph{approximate regime}, ENMP provides approximate marginals.
	The main contributions of this work are summarized as follows:
	\begin{itemize}
		\item We propose ENMP, a message-passing framework 
		that automatically constructs extended neighborhoods from the bipartite 
		factor-graph topology to capture loop-induced correlations, without 
		requiring the manual design or application-specific heuristics typically 
		used for region-graph construction in GBP~\cite{Yedidia2000GBP, ChoiceGBP}. 
		A single tunable parameter controls the neighborhood size,
		providing a tradeoff between inference accuracy and computational complexity.
		The framework applies 
		to both discrete and continuous variables and admits closed-form 
		updates for Gaussian factor graphs. 
		
		\item
		We show that, in the exact regime, the ENMP message updates
		converge to a unique fixed point in finitely many iterations,
		yielding exact marginal beliefs; for Gaussian factor graphs,
		both the marginal means and variances are exact.
		In the approximate regime, ENMP provides approximate marginals.
		
		\item
		We evaluate ENMP on Gaussian factor graphs arising from power-system
		state estimation over standard IEEE bus test cases, and on a discrete
		Ising factor graph. 
		The results show that the inference accuracy of ENMP generally improves as
		the size of the extended neighborhoods increases, and that the method achieves
		exact marginal inference on several systems in both settings.
		On the larger Gaussian systems, ENMP also outperforms the tested
		sGaBP-m configurations in both mean and variance estimation accuracy,
		while the Ising experiments confirm its applicability to discrete
		factor graphs.
	\end{itemize}

	The remainder of this article is organized as follows.
	Section~\ref{sec:problem_formulation} formulates the inference problem
	on factor graphs. Section~\ref{Sec:ENMP} introduces the ENMP framework,
	including the automated construction of extended neighborhoods and the
	resulting message-passing update rules for general and Gaussian
	distributions. Section~\ref{sec:application} evaluates ENMP against
	the tested baselines on Gaussian power-system state estimation
	benchmarks and on discrete Ising factor graphs. Finally,
	Section~\ref{Sec:discussion} concludes the article.

	\section{Modeling and Problem Formulation}\label{sec:problem_formulation}
	Consider a joint probability distribution function~(PDF) or a probability mass function~(PMF) $P(\bm{x})$ of a random vector $\bm{x} = [x_1, x_2, \dots, x_{n_x}]^T$, where the variables can be either continuous or discrete. 
	Let $\mathbb{G} = (\mathcal{X}, \mathcal{F}, \mathcal{E})$ be the corresponding factor graph of $P(\bm{x})$. 
	Here, $\mathcal{X}=\{x_1,x_2,\dots,x_{n_x}\}$ and
	$\mathcal{F}=\{f_1,f_2,\dots,f_{n_f}\}$ denote the sets of variable
	nodes and factor nodes, respectively, where
	$n_x=|\mathcal X|$ and $n_f=|\mathcal F|$; $\mathcal E$ is the set
	of edges connecting variable nodes and factor nodes.

	Each factor node $f_a \in \mathcal{F}$ represents a local function that depends only on a subset of variables $\bm{x}_a \subseteq \mathcal{X}$, i.e., the variable nodes connected to $f_a$.
	Given these definitions, $P(\bm{x})$ can be decomposed as,
	\begin{equation}\label{Eq:P(x)}
		P(\bm{x}) =\frac{1}{Z} \psi(\bm{x})= \frac{1}{Z} {\textstyle \prod_{f_a \in \mathcal{F}} f_a\left ( \bm{x}_a    \right ) },      
	\end{equation}
	where $Z$ is the partition function.
	
	We aim to infer the marginal PDF of a specific variable $x_i$ ($i = 1, 2, \dots, n_x$), which can be written as,
	\begin{equation}\label{Eq:Marginal_primal}
		P(x_i) = \int P(\bm{x})d\bm{x}^{i-} = \int \frac{1}{Z} {\textstyle \prod_{f_a \in \mathcal{F}} f_a\left ( \bm{x}_a    \right )d\bm{x}^{i-} },
	\end{equation}
	where $\bm{x}^{i-}$ denotes the vector $\bm{x}$ excluding the component $x_i$.

	%
	
	Direct computation of the marginalization in
	\eqref{Eq:Marginal_primal} involves high-dimensional integration
	(or summation), which is generally computationally intractable for
	large-scale graphs. BP~\cite{Kschischang2001Factor} avoids this
	global marginalization by iteratively passing local messages
	$m_{i\rightarrow a}(x_i)$ and $m_{a\rightarrow i}(x_i)$ between
	variable node $x_i$ and factor node $f_a$ along each edge. The BP
	updates and the resulting belief $b_i(x_i)$ of $x_i$ are given by
	\begin{equation*}
		\begin{aligned}
			m_{i\rightarrow a}(x_i)
			&\propto \textstyle
			\prod_{f_b\in\mathcal N_{x_i}\setminus\{f_a\}}
			m_{b\rightarrow i}(x_i),\\
			m_{a\rightarrow i}(x_i)
			&\propto
			\int
			f_a(\bm{x}_a) \textstyle 
			\prod_{x_j\in\mathcal N_{f_a}\setminus\{x_i\}}
			m_{j\rightarrow a}(x_j)
			\,d\bm{x}_{a\setminus i},\\
			b_i(x_i)
			&\propto \textstyle
			\prod_{f_a\in\mathcal N_{x_i}}
			m_{a\rightarrow i}(x_i),
		\end{aligned}
	\end{equation*}
	where $\mathcal N_v$ denotes the set of immediate neighbors of node $v$,
	and $\bm{x}_{a\setminus i}$ denotes $\bm{x}_a$ excluding $x_i$.

	BP performs exact inference on trees but struggles in loopy graphs because it treats incoming messages as independent. This assumption breaks down in the presence of short loops, which introduce strong correlations and cause message feedback. Consequently, BP often fails to distinguish new information from recirculated beliefs, leading to oscillation and inaccurate variance estimates~\cite{Su2015EstimationVariance,Alec2021BPNL}.
	To overcome these challenges and achieve accurate inference in such densely loopy factor graphs, we propose the ENMP method in the following section.

	\section{The ENMP Method}\label{Sec:ENMP}
	
	In this section, we propose the ENMP method for inference in loopy factor graphs and establish its update rules for both general and Gaussian distributions.

	\subsection{Extended neighborhood modeling}\label{sec:extendneigh}
	
	Inspired by~\cite{Alec2021BPNL}, we propose an extended neighborhood for each variable node to account for loop-induced correlations among larger groups of neighboring nodes.
	Different from~\cite{Alec2021BPNL}, which defines larger neighborhoods 
	over MRFs where variables interact directly, our formulation is tailored 
	to the bipartite topology of factor graphs and applies to both discrete 
	and continuous variable domains.
	Specifically, if a variable node $x_i$ and its neighborhood nodes are contained in a loop of length less than a given value, then all other nodes within this loop are considered as extended neighbors.
	
	To formalize these concepts, we introduce the following definitions
	of walks, paths, and loops.
	For two nodes $u$ and $v$ in $\mathbb G$, a walk from $u$ to $v$
	is a node sequence
	$w=(v_0,v_1,\ldots,v_{\ell})$
	such that $v_0=u$, $v_{\ell}=v$, and $v_{k-1}$ and $v_k$
	are adjacent in $\mathbb G$ for every $k=1,2,\ldots,\ell$.
	The length of $w$ is defined as $\ell(w)=\ell$.
	A path from $u$ to $v$ is a walk
	$p=(v_0,v_1,\ldots,v_{\ell})$
	in which the nodes $v_0,v_1,\ldots,v_{\ell}$ are pairwise distinct.
	Let $\operatorname{Path}(u,v)$ denote the set of all such paths
	from $u$ to $v$ in $\mathbb G$.
	More generally, for a subgraph $\mathbb H$ of $\mathbb G$ and
	two nodes $u$ and $v$ in $\mathbb H$,
	$\operatorname{Path}_{\mathbb H}(u,v)$ denotes the set of all paths
	from $u$ to $v$ in $\mathbb H$.
	For a path $p=(v_0,v_1,\ldots,v_{\ell})$, its length is
	$\ell(p)=\ell$.
	A loop in $\mathbb G$ is a closed walk
	$C=(v_0,v_1,\ldots,v_\ell)$ with $\ell\ge3$ such that
	$v_\ell=v_0$ and the nodes
	$v_0,v_1,\ldots,v_{\ell-1}$ are pairwise distinct.
	Its length is $|C|=\ell$.
	
	For any node set
	$\mathcal S\subseteq\mathcal X\cup\mathcal F$, let
	$\mathbb G[\mathcal S]$ denote the subgraph of $\mathbb G$
	that retains the node set $\mathcal S$ together with all edges of
	$\mathbb G$ connecting any two nodes in $\mathcal S$.
	For a subgraph $\mathbb H$, a walk $w$, a path $p$, and a loop $C$,
	let $V(\mathbb H)$, $V(w)$, $V(p)$, and $V(C)$ denote their
	corresponding node sets.
	The length of the longest loop in $\mathbb G$ is denoted by
	$\mathcal L_{\max}$.
	
	To formulate ENMP on a bipartite factor graph, we introduce three
	types of neighborhoods: the preliminary $r$-neighborhood, the extended
	$r$-neighborhood, and the residual $r$-neighborhood. Their formal
	definitions are given below.
	\vspace{0.8\baselineskip}
	\begin{definition}\label{definition1}
		\textbf{[Preliminary $r$-neighborhood]}
		The preliminary $r$-neighborhood, $M_i^{(r)}$, of a variable node
		$x_i\in\mathcal X$ is defined as the subgraph of $\mathbb G$ induced
		by $x_i$, its immediate neighbors, and all paths of length at most
		$r$ between any two distinct neighbors of $x_i$ that do not pass
		through $x_i$. Specifically,
		\begin{equation*}
			\begin{aligned}
				\mathcal P_i^{(r)}
				&=\bigl\{p\mid
				p\in\operatorname{Path}_{
					\mathbb G[(\mathcal X\cup\mathcal F)\setminus\{x_i\}]}(u,v),\\
				&\hphantom{={}\bigl\{p\mid{}}
				u,v\in\mathcal N_{x_i},\;
				u\neq v,\;
				\ell(p)\le r
				\bigr\},
				\\
				M_i^{(r)}
				&=\mathbb G\!\left[
				\{x_i\}\cup\mathcal N_{x_i}
				\cup{\textstyle\bigcup_{p\in\mathcal P_i^{(r)}}}V(p)
				\right].
			\end{aligned}
		\end{equation*}
		Similarly, the preliminary $r$-neighborhood of a factor node
		$f_a\in\mathcal F$, denoted by $M_{f_a}^{(r)}$, is defined as
		\begin{equation*}
			\begin{aligned}
				\mathcal P_{f_a}^{(r)}
				&=\bigl\{p\mid
				p\in\operatorname{Path}_{
					\mathbb G[(\mathcal X\cup\mathcal F)\setminus\{f_a\}]}(u,v),\\
				&\hphantom{={}\bigl\{p\mid{}}
				u,v\in\mathcal N_{f_a},\;
				u\neq v,\;
				\ell(p)\le r
				\bigr\},\\
				M_{f_a}^{(r)}
				&=\mathbb G\!\left[
				\{f_a\}\cup\mathcal N_{f_a}
				\cup{\textstyle\bigcup_{p\in\mathcal P_{f_a}^{(r)}}}V(p)
				\right].
			\end{aligned}
		\end{equation*}
	\end{definition}
	\begin{definition}\label{definition3}
		\textbf{[Extended $r$-neighborhood]}
		The extended $r$-neighborhood, $N_i^{(r)}$, of a variable node
		$x_i$ is constructed by recursively incorporating the preliminary
		$r$-neighborhoods of the included factor nodes.
		For a node set
		$\mathcal S\subseteq\mathcal X\cup\mathcal F$, define
		\[
		T(\mathcal S)
		= \textstyle
		\mathcal S
		\cup
		\bigcup_{f_a\in\mathcal S\cap\mathcal F}
		V(M_{f_a}^{(r)}).
		\]
		Consider the sequence initialized with
		$\mathcal S_i^{(0)}=V(M_i^{(r)})$ and updated via
		$\mathcal S_i^{(k+1)}=T(\mathcal S_i^{(k)})$.
		Since $\mathbb G$ is finite, there exists
		$K\le|\mathcal F|$ such that
		$\mathcal S_i^{(K+1)}=\mathcal S_i^{(K)}$.
		The extended $r$-neighborhood is defined as
		$N_i^{(r)}=\mathbb G[\mathcal S_i^{(K)}]$.
	\end{definition}
	\vspace{0.8\baselineskip}
	\begin{definition}\label{definition4}
		\textbf{[Residual $r$-neighborhood]}
		Given the extended $r$-neighborhoods $N_i^{(r)}$ and
		$N_j^{(r)}$ of two distinct variable nodes $x_i$ and $x_j$
		with $x_j\in V(N_i^{(r)})$,
		let
		$\Delta_{j\setminus i}
		=\{x_j\}\cup
		\bigl(V(N_j^{(r)})\setminus V(N_i^{(r)})\bigr)$.
		To include the variable nodes connected to each factor node in
		$\Delta_{j\setminus i}$, the residual $r$-neighborhood is defined as
		$N_{j\setminus i}^{(r)}
		=\mathbb G\!\left[\Delta_{j\setminus i}\cup
		\bigcup_{f_a\in\Delta_{j\setminus i}\cap\mathcal F}
		\mathcal N_{f_a}\right]$.
	\end{definition}

	Having defined the preliminary, extended, and residual
	$r$-neighborhoods, Algorithm~\ref{alg1} provides an explicit
	implementation of the recursive construction of $N_i^{(r)}$
	in Definition~\ref{definition3}.
	\begin{algorithm}
		\caption{Generation of the \emph{extended $r$-neighborhood}
			$N_i^{(r)}$.}\label{alg1}
		\begin{algorithmic}[1]
			\REQUIRE The factor graph
			$\mathbb G=(\mathcal X,\mathcal F,\mathcal E)$, the variable node
			$x_i\in\mathcal X$, and the value of $r$;
			\ENSURE The \emph{extended $r$-neighborhood} $N_i^{(r)}$;
			\STATE Generate the \emph{preliminary $r$-neighborhood}
			$M_i^{(r)}$ by Definition~\ref{definition1};
			
			\STATE Let $\mathcal S_i=V(M_i^{(r)})$,
			$\mathcal S_i^{\mathrm{old}}=\emptyset$, and
			$\mathcal F_{\mathrm{visited}}=\emptyset$;
			\WHILE{$\mathcal S_i\neq\mathcal S_i^{\mathrm{old}}$}
			\STATE $\mathcal S_i^{\mathrm{old}}=\mathcal S_i$,
			$E=\emptyset$, and
			$\mathcal F_{\mathrm{current}}=\mathcal S_i\cap\mathcal F$;
			\FOR{each factor
				$f_a\in\mathcal F_{\mathrm{current}}
				\setminus\mathcal F_{\mathrm{visited}}$}
			\STATE Generate the \emph{preliminary $r$-neighborhood}
			$M_{f_a}^{(r)}$;
			\IF{$V(M_{f_a}^{(r)})\not\subseteq\mathcal S_i$}
			\STATE $E=E\cup
			\bigl(V(M_{f_a}^{(r)})\setminus\mathcal S_i\bigr)$;
			\ENDIF
			\ENDFOR
			\STATE $\mathcal F_{\mathrm{visited}}
			=\mathcal F_{\mathrm{visited}}
			\cup\mathcal F_{\mathrm{current}}$;
			\STATE $\mathcal S_i=\mathcal S_i\cup E$;
			\ENDWHILE
			\STATE Let $N_i^{(r)}=\mathbb G[\mathcal S_i]$;
			
			\STATE \textbf{Return} $N_i^{(r)}$.
		\end{algorithmic}
	\end{algorithm}
	
	Algorithm~\ref{alg1} realizes the recursive sequence
	$\mathcal S_i^{(k+1)}=T(\mathcal S_i^{(k)})$ in
	Definition~\ref{definition3}. At each iteration, the preliminary
	$r$-neighborhoods of the unvisited factor nodes in the current
	node set are incorporated through
	$E=\bigcup_{f_a}
	\bigl(V(M_{f_a}^{(r)})\setminus\mathcal S_i\bigr)$,
	and the node set is updated accordingly.
	Since each factor node is expanded at most once, the number of
	factor-node expansion operations is bounded by $|\mathcal F|$.
	Therefore, Algorithm~\ref{alg1} terminates regardless of the longest
	loop length $\mathcal L_{\max}$ and returns exactly the extended
	$r$-neighborhood $N_i^{(r)}$ defined in
	Definition~\ref{definition3}.

	Because of the bipartite structure of a factor graph, every path in
	$\mathbb G[(\mathcal X\cup\mathcal F)\setminus\{v\}]$ between two
	distinct nodes in $\mathcal N_v$ has an even length, i.e.,
	$2,4,6,\ldots$. Consequently, only even values of $r$ can enlarge the
	preliminary $r$-neighborhood, while $r=0$ includes no such path.
	Fig.~\ref{fig:neighborhoods}(a) illustrates the preliminary
	$6$-neighborhood, while Fig.~\ref{fig:neighborhoods}(b) illustrates
	the corresponding extended and residual $6$-neighborhoods on the same
	factor graph, with $x_1$ as the focal variable node.

	\begin{figure}[t]
		\centering
		\begin{minipage}{\columnwidth}
			\centering
			\includegraphics[width=0.92\columnwidth]{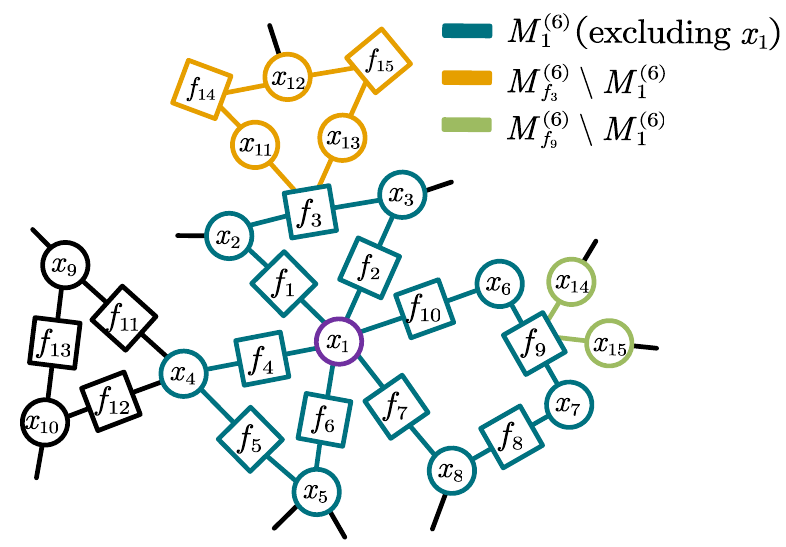}
			{\small (a) Preliminary $6$-neighborhood $M_1^{(6)}$ of variable node
				$x_1$ and the additional node sets
				$V(M_{f_3}^{(6)})\setminus V(M_1^{(6)})$ and
				$V(M_{f_9}^{(6)})\setminus V(M_1^{(6)})$.}
		\end{minipage}
		\begin{minipage}{\columnwidth}
			\centering
			\includegraphics[width=0.92\columnwidth]{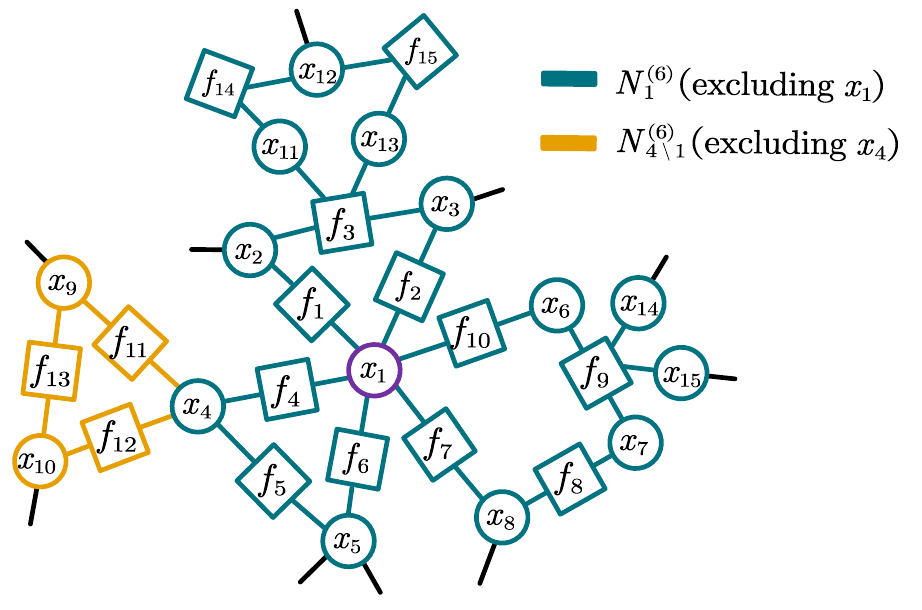}
			{\small (b) Extended $6$-neighborhood $N_1^{(6)}$ of variable node
				$x_1$ and residual $6$-neighborhood $N_{4\setminus1}^{(6)}$.}
		\end{minipage}
		\caption{Illustration of the preliminary, extended, and residual
			$r$-neighborhoods on the same factor graph for $r=6$.
			The focal variable node $x_1$ is shown in purple, and the highlighted
			node sets are identified by the legends in the respective panels.}
		\label{fig:neighborhoods}
	\end{figure}

	For the preliminary $r$-neighborhood shown in
	Fig.~\ref{fig:neighborhoods}(a),
	$V(M_{1}^{(0)})=\{x_1,f_1,f_2,f_4,f_6,f_7,f_{10}\}$, and
	$V(M_{1}^{(2)})=V(M_{1}^{(0)})$ since the minimum loop length is six.
	For $r=4$,
	$V(M_{1}^{(4)}) \!=\! \{x_1,x_2,x_3,x_4,x_5,
	f_1,f_2,f_3,f_4,f_5,f_6,f_7,f_{10}\}$.
	For $r=6$,
	$V(M_1^{(6)})=\{x_1,x_2,x_3,x_4,x_5,x_6,x_7,x_8,
	f_1,\allowbreak f_2,\allowbreak f_3,\allowbreak f_4,\allowbreak f_5,f_6,f_7,f_8,f_9,f_{10}\}$.
	
	For the extended $r$-neighborhood shown in
	Fig.~\ref{fig:neighborhoods}(b),
	$V(N_1^{(0)}) \!=\!\! \{x_1,x_2,x_3,x_4,x_5,x_6,x_8,
	f_1,f_2,f_4,f_6,f_7,f_{10}\}$, and
	$V(N_1^{(2)})=V(N_1^{(0)})$.
	For $r=4$,
	$V(N_1^{(4)})=\{x_1,\allowbreak x_2,\allowbreak x_3,\allowbreak x_4,x_5,x_6,x_8,
	x_{11},x_{12},x_{13},
	f_1,f_2,f_3,f_4,f_5,\allowbreak f_6, f_7,\allowbreak f_{10},f_{14},f_{15}\}$.
	For $r=6$,
	$V(N_1^{(6)})=
	\{x_1,x_2, x_3, x_4, x_5,\allowbreak x_6,x_7,x_8,
	x_{11},x_{12},x_{13},x_{14},x_{15},
	f_1,f_2,f_3, f_4, f_5, \! f_6,\! f_7, \! f_8,\allowbreak f_9,f_{10},f_{14},f_{15}\}$.
	In particular, the orange and green node sets in
	Fig.~\ref{fig:neighborhoods}(a),
	$V(M_{f_3}^{(6)})\setminus V(M_1^{(6)})$ and
	$V(M_{f_9}^{(6)})\setminus V(M_1^{(6)})$, respectively, are
	incorporated through the recursive construction of the extended
	$r$-neighborhood.
	
	Fig.~\ref{fig:neighborhoods}(b) also illustrates the residual
	$r$-neighborhood. For $r=0$,
	$V(N_4^{(0)})=
	\{x_1,x_4,x_5,x_9,x_{10},f_4,f_5,f_{11},f_{12}\}$.
	Together with
	$V(N_1^{(0)})=
	\{x_1,x_2,x_3,x_4,x_5,x_6,x_8,
	f_1,\allowbreak f_2,\allowbreak f_4,\allowbreak f_6,f_7,f_{10}\}$,
	this gives
	$V(N_{4\setminus1}^{(0)})=
	\{x_4,x_5,x_9,\allowbreak x_{10},\allowbreak f_5,f_{11},f_{12}\}$.
	Note that $x_5$ is included in $V(N_{4\setminus1}^{(0)})$ since it
	is connected to the factor node $f_5$.
	For $r=6$,
	$V(N_{4\setminus1}^{(6)})=
	\{x_4,x_9,x_{10},f_{11},f_{12},f_{13}\}$.

	\noindent\textbf{Remark~1:}
	The dominant computational cost of Algorithm~\ref{alg1} arises from
	enumerating paths of length at most $r$ between distinct neighbors of
	each node (Definition~\ref{definition1}), of which computational complexity is
	$\mathcal{O}\!\left((\max_{u\in\mathcal X\cup\mathcal F}
	|\mathcal N_u|)^r\right)$.
	In practice, 
	to reduce the computational cost, we can employ two exactness-preserving pruning rules.
	First, for a focal node $v$, any path joining two neighbors of $v$
	without passing through $v$ forms a loop through $v$ with the two
	incident edges; the search is thus confined to the corresponding
	biconnected component (BCC), computed by the standard depth-first
	lowpoint algorithm~\cite{Tarjan1972}, discarding tree-like branches
	that do not belong to such loops.
	Second, a multi-source breadth-first search from $\mathcal N_v$ on
	$\mathbb G\setminus\{v\}$ gives the shortest-hop distance
	$\operatorname{dist}_v(w)$ from $w$ to its nearest node in
	$\mathcal N_v$; when a partial path of length $\ell$
	is extended by one edge to a candidate node $w$, the extension is
	pruned once $\ell+1+\operatorname{dist}_v(w)>r$, since any continuation
	through $w$ exceeds the length bound $r$ before reaching another
	neighbor. Since $\operatorname{dist}_v(\cdot)$ is
	a lower bound on the remaining path length~\cite{Hart1968},
	this rule removes only infeasible branches.
	The two pruning rules together leave the returned neighborhood unchanged
	and substantially reduce the computational cost.
	
	\subsection{General message passing rules}\label{sec:generalMess}
	Given the definitions of the \emph{preliminary $r$-neighborhood},
	\emph{extended $r$-neighborhood}, and \emph{residual $r$-neighborhood},
	we now derive the general ENMP update rules.
	For notational brevity, we omit the superscript $r$ in the remainder of
	this article (e.g., using $N_{j \setminus i}$ to refer to
	$N_{j \setminus i}^{(r)}$).
	For a neighborhood $N$, either an extended neighborhood $N_i$
	or a residual neighborhood $N_{j\setminus i}$, let
	$\bm x_N=V(N)\cap\mathcal X$ and
	$\bm f_N=V(N)\cap\mathcal F$ denote its variable-node set and
	factor-node set, respectively. For $x_k\in\bm x_N$, let
	$\bm x_{N\setminus k}=\bm x_N\setminus\{x_k\}$.
	When the variables in $\bm x_N$ are ordered,
	$\bm x_N$ denotes the resulting stacked vector.

	Consider the message-passing process rooted at a variable
	node~$x_i$. The key idea of ENMP is to separate the computation
	of the global potential function $\psi(\bm x)$
	in~\eqref{Eq:P(x)} into local computations over the extended and
	residual neighborhoods. Specifically, let
	$\mathcal T_i^{(\ell)}$ denote the set of neighborhoods in the
	$\ell$th layer of the recursive expansion starting from the
	extended neighborhood $N_i$ of $x_i$. These layers are defined by
	\begin{equation}\label{Eq:recursive_layers}
		\begin{aligned}
			\!\! \mathcal T_i^{(1)}
			&=\{N_i\},
			\\
			\!\! \mathcal T_i^{(2)}
			&=
			\left\{
			N_{j\setminus i}
			\,\middle|\,
			x_j\in\bm x_{N_i\setminus i}
			\right\},
			\\
			\!\! \mathcal T_i^{(\ell+1)}
			&=\textstyle
			\bigcup_{N_{j\setminus h}\in\mathcal T_i^{(\ell)}}
			\left\{
			N_{k\setminus j}
			\,\middle|\,
			x_k\in
			\bm x_{N_{j\setminus h}\setminus j}
			\right\},
			\, \ell\ge2.
		\end{aligned}
	\end{equation}
	The sets $\{\mathcal T_i^{(\ell)}\}_{\ell\ge1}$ define a
	layered recursive expansion starting from $N_i$. If
	$\mathcal T_i^{(K+1)}=\emptyset$ for some finite $K$, then
	the expansion terminates after $K$ layers.

	Based on this layered expansion, we first derive an exact
	hierarchical decomposition of $\psi(\bm x)$ and the corresponding
	message-passing equations for
	$r\ge\mathcal L_{\max}-2$. We then discuss how the same equations
	can be used as an approximation when
	$r<\mathcal L_{\max}-2$.
	
	\subsubsection*{1) Exact regime}\label{sec:exact}
	When $r\ge\mathcal L_{\max}-2$, every loop in the factor graph
	$\mathbb G$ has length at most $r+2$. Under this condition, the
	recursive expansion starting from $N_i$ terminates after a finite
	number $K$ of layers, with
	$\mathcal T_i^{(K+1)}=\emptyset$. The global potential function
	$\psi(\bm x)$ then admits the hierarchical decomposition
	\begin{equation}\label{Eq:P_S_1}
		\psi(\bm x)
		=
		\prod_{\ell=1}^{K}
		\prod_{N\in\mathcal T_i^{(\ell)}}
		\Phi_N(\bm x_N),
	\end{equation}
	where $N=N_i$ in the first layer and $N$ is a residual
	neighborhood in every subsequent layer. The corresponding local
	potential function is defined by
	\begin{equation}\label{Eq:fai_1}
		\Phi_N(\bm{x}_N)
		=
		\prod_{f_a\in\bm{f}_N} f_a(\bm{x}_a).
	\end{equation}
	If a neighborhood $N$ contains no factor node, then
	$\bm f_N=\emptyset$, and~\eqref{Eq:fai_1} gives
	$\Phi_N(\bm x_N)=1$. Such a case arises when
	$V(N_j)\subseteq V(N_i)$: then
	$V(N_{j\setminus i})=\{x_j\}$ and
	$\Phi_{N_{j\setminus i}}
	(\bm x_{N_{j\setminus i}})=1$.
	For example, in Fig.~\ref{fig:neighborhoods}(b), when $r\geq4$,
	$V(N_{11})\subseteq V(N_1)$ and
	$V(N_{11\setminus1})=\{x_{11}\}$, so
	$\Phi_{N_{11\setminus1}}
	(\bm x_{N_{11\setminus1}})=1$.

	The following lemma establishes the separation and termination
	properties of the recursive expansion and guarantees the
	exactness of~\eqref{Eq:P_S_1}.

	\vspace{0.8\baselineskip}
	
	\begin{lemma}\label{lemma:exact_separation}
		Consider a connected factor graph $\mathbb G$
		containing at least one loop, and let
		$r\ge\mathcal L_{\max}-2$. For every variable node $x_i$, the
		following properties hold.
		
		\textit{(P1)}\;
		For every $x_j\in\bm x_{N_i\setminus i}$,
		$V(N_{j\setminus i})\cap V(N_i)=\{x_j\}$.
		
		\textit{(P2)}\;
		For every $x_j\in\bm x_{N_i\setminus i}$, let
		$\bm D_{j\setminus i}$ denote the union of the node sets of
		$N_{j\setminus i}$ and all residual
		$r$-neighborhoods generated recursively from it by~\eqref{Eq:recursive_layers}.
		Equivalently,
		\begin{equation}\label{Eq:Dji_def}
			\bm D_{j\setminus i}
			=\textstyle
			V(N_{j\setminus i})
			\cup
			\bigcup_{x_k\in\bm x_{N_{j\setminus i}\setminus j}}
			\bm D_{k\setminus j}.
		\end{equation}
		Then $\bm D_{j\setminus i}$ induces a connected subgraph of
		$\mathbb G$,
		$\bm D_{j\setminus i}\cap V(N_i)=\{x_j\}$, and, for any
		two distinct variables
		$x_{j_1},x_{j_2}\in\bm x_{N_i\setminus i}$,
		$\bm D_{j_1\setminus i}\cap\bm D_{j_2\setminus i}
		=\emptyset$.
		
		\textit{(P3)}\;
		For any variables $x_{j_1},x_{j_2},\ldots,x_{j_n}$, with
		$j_0=i$, satisfying
		\[
		x_{j_1}\in\bm x_{N_i\setminus i},
		\qquad
		x_{j_{\ell+1}}
		\in
		\bm x_{N_{j_\ell\setminus j_{\ell-1}}\setminus j_\ell},
		\quad \ell=1,\ldots,n-1,
		\]
		the variables $x_i,x_{j_1},x_{j_2},\ldots,x_{j_n}$ are pairwise
		distinct. Consequently, the recursive expansion
		in~\eqref{Eq:recursive_layers} is acyclic and satisfies
		$\mathcal T_i^{(K+1)}=\emptyset$ for some
		$K\le|\mathcal X|$.
		
		\textit{(P4)}\;
		The factor-node sets $\bm f_N$,
		$N\in\bigcup_{\ell=1}^{K}\mathcal T_i^{(\ell)}$,
		form a partition of $\mathcal F$. Consequently, the
		hierarchical decomposition~\eqref{Eq:P_S_1} holds with
		equality.
	\end{lemma}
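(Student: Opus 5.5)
The plan is to reduce everything to the block (biconnected-component) structure of $\mathbb G$ and its block--cut tree. The first step is a \emph{loop-capture claim}: if $r\ge\mathcal L_{\max}-2$, then for every node $v$ the preliminary neighborhood $M_v^{(r)}$ contains every loop through $v$. Indeed, a loop $C$ through $v$ has length at most $\mathcal L_{\max}\le r+2$. Deleting $v$ from $C$ leaves a path of length $|C|-2\le r$ between two distinct neighbors of $v$, and this path avoids $v$, so it lies in $\mathcal P_v^{(r)}$. Since the edges through a node $v$ lie in a common block exactly when they lie on a common loop through $v$, it follows that $M_v^{(r)}$ contains every block incident to $v$ that is not a bridge. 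Bridges at $v$ are covered by $\mathcal N_v\subseteq V(M_v^{(r)})$. Hence $V(M_v^{(r)})$ is exactly the union of the blocks containing $v$.

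The second step characterizes $N_i$ through the fixed point of $T$ in Definition~\ref{definition3}. The fixed-point set $\mathcal S_i^{(K)}$ contains all blocks at $x_i$, and for every factor $f_a$ in it, all blocks at $f_a$. Every node of $\mathcal S_i^{(K)}$ lies in a block touching $x_i$ or a previously included factor, and every block is connected, so $V(N_i)$ corresponds to a connected subtree of the block--cut tree rooted at $x_i$. Because every block at an included factor is included, the \emph{boundary} of $N_i$, meaning the nodes of $N_i$ lying in some block outside $N_i$, consists only of variable nodes, and these are cut vertices. Every block of $\mathbb G$ not in $N_i$ then hangs off $N_i$ through a unique boundary cut vertex $x_j$. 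The same description applies to each $N_j$.

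The third step proves (P1) and (P2). Take $x_j\in\bm x_{N_i\setminus i}$. The set $V(N_j)\setminus V(N_i)$ consists of nodes of blocks in the branches of the block--cut tree that leave $N_i$ at $x_j$. A node $y$ lying only in blocks through other boundary vertices cannot be reached from $x_j$ within $N_j$. Such a path would have to re-enter $N_i$ and then exit at another vertex, which requires passing through a factor of $N_i$ whose blocks are already in $N_i$, or through a variable of $N_i$ that $N_j$'s construction never expands. I will need to check carefully that $N_j$'s recursion only expands through factors. Consequently, every factor in $\Delta_{j\setminus i}$ lies in such a branch block, and its neighbors are in the same block. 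The only such neighbor that can also belong to $N_i$ is the cut vertex $x_j$, which gives (P1). Iterating the same argument down the tree, with $N_{j\setminus i}$ playing the role of $N_i$ inside the branch at $x_j$, shows that $\bm D_{j\setminus i}$ is contained in the union of the branches at $x_j$ away from $N_i$. Such a union is connected through $x_j$. Branches at distinct boundary vertices are disjoint because the block--cut tree is a tree, which gives (P2).

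The fourth step derives (P3) and (P4). For (P3), each step $x_{j_\ell}\to x_{j_{\ell+1}}$ moves strictly deeper into the rooted block--cut tree, since $x_{j_{\ell+1}}$ lies in $\bm D_{j_\ell\setminus j_{\ell-1}}\setminus\{x_{j_\ell}\}$ and these sets are nested by (P2). Hence all the indices are distinct, depth is bounded by $|\mathcal X|$, and the expansion terminates. For (P4), every factor lies in some block, and I will show by the same nested-branch argument that it is covered. Disjointness of the factor sets follows from (P1) and (P2) applied recursively, since distinct neighborhoods in the expansion share at most a variable node. Multiplying~\eqref{Eq:fai_1} over this partition of $\mathcal F$ then yields~\eqref{Eq:P_S_1} with equality.

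The main obstacle is the claim in step three that $N_j$ does not leak into other branches of $N_i$. My first attempt is to show that, when $r\ge\mathcal L_{\max}-2$, $N_j$ equals the union of all blocks reachable from $x_j$ via factor-containing blocks, so that it is determined canonically by the block--cut tree. If that fails, I will weaken (P1) to the residual level and argue that the factors of $N_j$ inside $N_i$ are removed by the set difference in $\Delta_{j\setminus i}$.
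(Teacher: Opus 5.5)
Your approach is sound and differs from the paper's. The paper never describes $N_i$ globally. It proves one local fact, Lemma~\ref{lemma:cycle_absorption}: a loop of length at most $r+2$ that meets a factor of $N_i$ lies entirely in $N_i$. It then refutes each hypothetical leak by path surgery. A factor of $\Delta_{j\setminus i}$ adjacent to a second node of $N_i$, a residual chain re-entering $N_i$, or two sets $\bm D_{j_1\setminus i},\bm D_{j_2\setminus i}$ meeting each yields a loop through a factor of $N_i$ and a node outside $N_i$. The one configuration with no such loop (Case~2 of (P1), where $x_k$ separates $f_a$ from $f_b$) is handled by noting that the recursion in Definition~\ref{definition3} never expands a variable. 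Those two cases are the local shadow of your picture: either $f_a,f_b$ share a block at $x_k$, or $x_k$ is a cut vertex between them.

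You instead prove a structure theorem first. In the exact regime $V(M_v^{(r)})$ is the union of the blocks at $v$, so $N_i$ is the union of blocks reachable from $x_i$ through shared factor nodes. Its boundary then consists of variable cut vertices, and the residual neighborhoods are the pieces of the block--cut tree hanging off them. (P1)--(P4) become uniqueness of attachment points, disjointness of branches, strictly increasing depth, and the branches tiling the tree. This gives a canonical description: the recursion is a traversal of the block--cut tree rooted at $N_i$. It also shows the conclusions hold as soon as each $V(M_v^{(r)})$ is the full union of blocks at $v$. That can be far weaker than $r\ge\mathcal L_{\max}-2$ (for a complete bipartite factor graph with $n\ge2$ nodes per side, $r=2$ suffices while $\mathcal L_{\max}=2n$), which explains Remark~3. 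The paper's route is more elementary and self-contained, needing neither block theory nor Whitney's theorem.

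Several steps need tightening.
\begin{itemize}
\item In step one, your edge-based argument only shows that every edge at $v$ in a non-bridge block lies on a loop through $v$. To get every \emph{node} of the block into $M_v^{(r)}$, you need Whitney's theorem: any two vertices of a 2-connected graph lie on a common cycle.
\item Your ``main obstacle'' is settled directly by Definition~\ref{definition3}. $T$ adds $V(M_{f_a}^{(r)})$ only for factor nodes, so after $M_j^{(r)}$ no variable is ever expanded, and any variable cut vertex other than $x_j$ is a barrier. ``Reachable via factor-containing blocks'' is the wrong formulation, since every block contains a factor. Drop the fallback as well, because weakening (P1) would not prove the statement.
\item In (P2), the next level is taken relative to $N_j$, not $N_{j\setminus i}$. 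You must check that the branches at $x_k$ away from $N_j$ lie inside the branch at $x_j$ away from $N_i$. This holds because the component of the block--cut tree minus $x_k$ that contains $x_j$ also contains the blocks at $x_j$, all of which lie in $N_j$.
\item Also in (P2), containment of $\bm D_{j\setminus i}$ in a connected union does not make $\bm D_{j\setminus i}$ connected. Show instead that each residual neighborhood is connected and contains its root variable, which the next one shares.
\item The covering half of (P4) is only promised. Induction on block--cut-tree distance from $N_i$ completes it, as does the paper's argument that $\bm U_i$ is closed under taking neighbors.
\end{itemize}
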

	
	The proof of Lemma~\ref{lemma:exact_separation}, together with an auxiliary result, is provided in Appendix~\ref{appendix:exact_separation}.

	Since the decomposition~\eqref{Eq:P_S_1} holds with equality
	under $r\geq\mathcal L_{\max}-2$, the marginal PDF of $x_i$
	can be expressed as
	\begin{equation}\label{Eq:q_s_marginal_int}
		q_i(x_i)
		\propto
		\int
		\prod_{\ell=1}^{K}
		\prod_{N\in\mathcal T_i^{(\ell)}}
		\Phi_N(\bm x_N)d\bm x^{i-}.
	\end{equation}
	To evaluate~\eqref{Eq:q_s_marginal_int} recursively, for each
	$x_j\in\bm x_{N_i\setminus i}$, define
	\begin{equation}\label{Eq:branch_potential}
		\Psi_{j\setminus i}
		=
		\prod_{f_a\in
			\bm D_{j\setminus i}\cap\mathcal F}
		f_a(\bm x_a).
	\end{equation}
	Since $\mathcal T_i^{(1)}={N_i}$ and
	$\bm D_{j\setminus i}$ contains $N_{j\setminus i}$ and all
	residual neighborhoods generated recursively from it, (P2)
	and (P4) of Lemma~\ref{lemma:exact_separation} give
	\begin{equation}\label{Eq:branch_decomposition}
		\psi(\bm x)
		=
		\Phi_{N_i}(\bm x_{N_i})
		\prod_{x_j\in\bm x_{N_i\setminus i}}
		\Psi_{j\setminus i}.
	\end{equation}
	Moreover, by~\eqref{Eq:Dji_def} and (P2) of Lemma~\ref{lemma:exact_separation},
	\begin{equation}\label{Eq:branch_recursive_decomposition}
		\Psi_{j\setminus i}
		=
		\Phi_{N_{j\setminus i}}
		(\bm x_{N_{j\setminus i}})
		\prod_{x_k\in
			\bm x_{N_{j\setminus i}\setminus j}}
		\Psi_{k\setminus j}.
	\end{equation}
	Accordingly, let $q_{i\leftarrow j}(x_j)$ denote the result of
	marginalizing $\Psi_{j\setminus i}$ over all variables contained
	in $\bm D_{j\setminus i}$ except $x_j$. Using
	\eqref{Eq:branch_decomposition} and
	\eqref{Eq:branch_recursive_decomposition},
	\eqref{Eq:q_s_marginal_int} can then be written recursively as
	\begin{equation}\label{Eq:q_i_marginal_mess1_int}
		\begin{split}
			q_i(x_i) \! = \! \frac{1}{Z_i}  \!\!\int\!\! \Phi_{N_{i}} (\bm{x}_{N_{i}})  {\textstyle \prod_{x_j\in \bm{x}_{N_{i}\setminus{i}}}  q_{i\leftarrow j}(x_j)d\bm{x}_{N_{i}\setminus{i}}   }
		\end{split}
	\end{equation}
	with
	\begin{equation}\label{Eq:q_i_j_marginal_mess1_int}
		\begin{split}
			\!\!\!q_{i\leftarrow j}(x_j) \!\! =\! \!  \int \!\! \Phi_{N_{j\setminus {i}}} ( \bm{x}_{N_{j\setminus {i}}})
			\!\!\!\!\!\!\!\!\prod_{x_k\in \bm{x}_{N_{j\setminus {i}}\setminus{j}}} \!\!\!\!\!\!\! \! q_{j\leftarrow k}(x_{k}) d\bm{x}_{N_{j\setminus {i}}\setminus{j}},
		\end{split}
	\end{equation}
	where $1/Z_i$ is the normalization constant.
	

	Equations~\eqref{Eq:q_i_marginal_mess1_int} 
	and~\eqref{Eq:q_i_j_marginal_mess1_int} establish 
	the general update rules of ENMP, and the ENMP 
	algorithm is summarized in Algorithm~\ref{alg2}, 
	where $q_{i \leftarrow j}^t(x_j)$ denotes the 
	message from $x_j$ to $x_i$ at the $t$th iteration.

	\begin{algorithm}
		\caption{The ENMP algorithm for inferring $x_i$.}\label{alg2}
		\begin{algorithmic}[1]
			\REQUIRE The factor graph $\mathbb{G}$,
			maximum iterations $\beta$,
			convergence criterion $\delta$,
			and the value of $r$;
			\ENSURE Beliefs $q(x_i)$, $i=1,2,\ldots,n_x$;
			\STATE Generate \emph{extended $r$-neighborhood} $N_i$ via
			Algorithm~\ref{alg1} for all $x_i$;
			\STATE Initialize $t = 1$, $e = \infty$;
			\FORALL{$x_i$, $i=1,2,\ldots,n_x$}
			\STATE Initialize $q_{i \leftarrow j}^0 = \text{const}$,
			$\forall x_j \in \bm{x}_{N_i \setminus i}$;
			\ENDFOR
			\WHILE{$e > \delta$ \textbf{and} $t \le \beta$}
			\FORALL{$x_i$, $i=1,2,\ldots,n_x$}
			\FORALL{$x_j \in \bm{x}_{N_i \setminus i}$}
			\STATE Compute $q_{i \leftarrow j}^t(x_j)$ via
			\eqref{Eq:q_i_j_marginal_mess1_int} using
			$\{q_{j \leftarrow k}^{t-1}\}$,
			$\forall x_k \in
			\bm{x}_{N_{j \setminus i}\setminus j}$;
			\ENDFOR
			\ENDFOR
			\STATE $e \leftarrow \max_{i,j} \left\| q_{i \leftarrow j}^t - q_{i \leftarrow j}^{t-1} \right\|_{\infty}$;
			\STATE $t \leftarrow t + 1$;
			\ENDWHILE
			\STATE Compute beliefs $q(x_i)$ via
			\eqref{Eq:q_i_marginal_mess1_int} for all $i$;
			\STATE \textbf{Return} $q(x_i)$.
		\end{algorithmic}
	\end{algorithm}

	Theorem~\ref{thm:exact_inference} below establishes that, under the exact regime
	$r\geq\mathcal{L}_{\max}-2$, the message updates~\eqref{Eq:q_i_j_marginal_mess1_int}
	converge to a unique fixed point in at most $|\mathcal X|$ iterations, and the resulting
	beliefs are the exact marginals.
	
	\vspace{0.8\baselineskip}

	\begin{theorem}\label{thm:exact_inference}
		Consider a connected factor graph $\mathbb{G}$ containing at least one loop. If $r \ge \mathcal{L}_{\max}-2$, then the message updates~\eqref{Eq:q_i_j_marginal_mess1_int} converge to a unique fixed point in at most $|\mathcal{X}|$ iterations. Moreover, the resulting belief satisfies $q_i(x_i) = P(x_i)$ for every variable node $x_i$.
	\end{theorem}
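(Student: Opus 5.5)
The argument rests on Lemma~\ref{lemma:exact_separation}, and in particular on the acyclicity in (P3). The messages are indexed by directed pairs $(x_j\to x_i)$ with $x_j\in\bm x_{N_i\setminus i}$. Update~\eqref{Eq:q_i_j_marginal_mess1_int} makes $q_{i\leftarrow j}$ depend only on the messages $q_{j\leftarrow k}$ with $x_k\in\bm x_{N_{j\setminus i}\setminus j}$. This defines a dependency graph on directed pairs. Starting from $(j,i)$, following dependencies yields exactly the chains $x_i,x_{j_1},x_{j_2},\ldots$ described in (P3). By (P3), every such chain visits pairwise distinct variables, so each has at most $|\mathcal X|$ elements.

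To make this quantitative, I define the \emph{height} $h(j\to i)$ as the length of the longest dependency chain below the pair $(j,i)$. A pair has height $0$ exactly when $\bm x_{N_{j\setminus i}\setminus j}=\emptyset$. In that case the update is $q_{i\leftarrow j}(x_j)=\Phi_{N_{j\setminus i}}(\bm x_{N_{j\setminus i}})$, a fixed function that does not depend on the iteration. Because chains of pairs correspond to distinct variables including $x_i$, we have $h(j\to i)\le|\mathcal X|-1$.

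The first step is convergence, proved by induction on height. The claim is that for every $t\ge h(j\to i)+1$, the message $q^t_{i\leftarrow j}$ equals a fixed function $q^\star_{i\leftarrow j}$ that does not depend on the initialization. The base case $h=0$ holds from $t=1$ on. For the inductive step, suppose every child $(k,j)$ has height at most $h-1$, so it is frozen from iteration $h$ onward. Then~\eqref{Eq:q_i_j_marginal_mess1_int}, evaluated with $q^{t-1}_{j\leftarrow k}$, gives a constant output for all $t\ge h+1$. Hence every message is frozen by iteration $|\mathcal X|$, and the frozen values form a fixed point. Uniqueness follows by the same induction applied to any fixed point: the fixed-point equations determine $q_{i\leftarrow j}$ from lower-height messages, so they determine every message recursively from height $0$ upward.

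The second step is exactness. I show by the same induction that $q^\star_{i\leftarrow j}(x_j)=\int\Psi_{j\setminus i}\,d(\bm D_{j\setminus i}\setminus\{x_j\})$, that is, the integral over all variables in $\bm D_{j\setminus i}$ other than $x_j$. Substituting~\eqref{Eq:branch_recursive_decomposition} makes this a Fubini computation. The requirement is that the branches $\bm D_{k\setminus j}$ for distinct children $x_k$ are pairwise disjoint and meet $V(N_{j\setminus i})$ only at $x_k$.

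This is the main obstacle. (P2) is stated relative to $N_i$, so I need its analogue with $N_{j\setminus i}$ in place of $N_i$. I would obtain it by applying the arguments of (P1) and (P2) at the next layer, or equivalently by using the partition in (P4) together with the distinctness in (P3). These show that variables integrated inside different sub-branches never coincide, so the integral factorizes into the product of the child messages.

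Finally, I plug the messages into~\eqref{Eq:q_i_marginal_mess1_int} and use~\eqref{Eq:branch_decomposition}, where (P2) gives disjointness at the top layer. The integrand becomes $\psi(\bm x)$ integrated over $\bm x^{i-}$. Normalizing then gives $q_i=P(x_i)$ by~\eqref{Eq:Marginal_primal}.
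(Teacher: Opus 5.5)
Your proposal is correct and follows essentially the paper's proof: your height is the paper's update depth $d_{i\leftarrow j}-1$, and induction on it gives freezing within $|\mathcal X|$ iterations and uniqueness. Exactness comes from substituting the fixed-point messages into \eqref{Eq:q_i_marginal_mess1_int} via \eqref{Eq:branch_decomposition}--\eqref{Eq:branch_recursive_decomposition} and (P4).

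The step you flag as the main obstacle is immediate, because Lemma~\ref{lemma:exact_separation} holds at every variable node. Applying (P2) with root $x_j$, together with $V(N_{j\setminus i})\subseteq V(N_j)$, gives $\bm D_{k_1\setminus j}\cap\bm D_{k_2\setminus j}=\emptyset$ for distinct children and $\bm D_{k\setminus j}\cap V(N_{j\setminus i})=\{x_k\}$, which is exactly how the paper justifies \eqref{Eq:branch_recursive_decomposition}. Your alternative via (P3) and (P4) alone would not do this: they give only distinctness along a single chain and disjointness of factor sets, not disjointness of the variables integrated in sibling branches.
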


	\vspace{-0.8\baselineskip}
	
	\begin{pf}
		
		We prove the three conclusions of the theorem in order.
		
		\medskip
		\noindent\emph{(i) The message
			updates~\eqref{Eq:q_i_j_marginal_mess1_int} converge in at most
			$|\mathcal X|$ iterations.}
		
		For each message $q_{i\leftarrow j}(x_j)$, define its update depth
		by
		\begin{equation}\label{Eq:residual_depth}
			d_{i\leftarrow j}
			=
			1+
			\max_{x_k\in\bm x_{N_{j\setminus i}\setminus j}}
			d_{j\leftarrow k},
			\qquad
			\max\emptyset=0.
		\end{equation}
		For any residual neighborhood
		$N_{b\setminus a}\in\mathcal T_i^{(\ell)}$,
		\eqref{Eq:recursive_layers} implies that
		$N_{c\setminus b}\in\mathcal T_i^{(\ell+1)}$ for every
		$x_c\in\bm x_{N_{b\setminus a}\setminus b}$.
		Hence, $d_{a\leftarrow b}$ depends only on the update depths
		associated with residual neighborhoods in
		$\mathcal T_i^{(\ell+1)}$.
		By Lemma~\ref{lemma:exact_separation}\,(P3), the recursive
		expansion is acyclic and
		$\mathcal T_i^{(K+1)}=\emptyset$ for some
		$K\le|\mathcal X|$. The update depths can therefore be defined
		backward from layer $K$, where they equal one. Backward
		induction gives
		\[
		d_{a\leftarrow b}\le K-\ell+1,
		\qquad
		N_{b\setminus a}\in\mathcal T_i^{(\ell)}.
		\]
		Since $N_{j\setminus i}\in\mathcal T_i^{(2)}$, it follows that
		$d_{i\leftarrow j}\le K-1\le|\mathcal X|$.

		For each message, define $\bar q_{i\leftarrow j}(x_j)$ recursively
		by
		\begin{equation}\label{Eq:exact_residual_msg}
			\begin{split}
				\!\!\bar{q}_{i\leftarrow j}(x_j) \!\! = \!\! 
				\int \!\!\! 
				\Phi_{N_{j\setminus {i}}} ( \bm{x}_{N_{j\setminus {i}}})
				\!\!\!\!\!\!\!\!
				\prod_{x_k\in \bm{x}_{N_{j\setminus {i}}\setminus{j}}} 
				\!\!\!\!\!\!\!\! \bar{q}_{j\leftarrow k}(x_k) 
				d\bm{x}_{N_{j\setminus {i}}\setminus{j}},\!\!\!\!
			\end{split}
		\end{equation}
		where the base case $d_{i\leftarrow j}=1$ requires no incoming
		messages, while for $d_{i\leftarrow j}>1$, every message
		$\bar q_{j\leftarrow k}$ on the right-hand side has update depth
		strictly less than $d_{i\leftarrow j}$.

		We now prove by induction on $d_{i\leftarrow j}$ that
		\begin{equation}\label{Eq:finite_convergence_claim}
			q_{i\leftarrow j}^{t}(x_j)
			=
			\bar q_{i\leftarrow j}(x_j),
			\qquad
			t\ge d_{i\leftarrow j}.
		\end{equation}
		If $d_{i\leftarrow j}=1$, then
		$\bm x_{N_{j\setminus i}\setminus j}=\emptyset$. Thus, the update
		of $q_{i\leftarrow j}$ is independent of all incoming messages,
		and its first update gives
		$q_{i\leftarrow j}^{1}=\bar q_{i\leftarrow j}$. 
		Since the same update is performed at every subsequent iteration,
		$q^t_{i\leftarrow j}(x_j)=\bar q_{i\leftarrow j}(x_j)$ for all
		$t\ge 1$.

		If $d_{i\leftarrow j}=s+1$ for some $s\ge 1$, suppose 
		that~\eqref{Eq:finite_convergence_claim} holds for all messages with 
		update depth at most $s$.
		By~\eqref{Eq:residual_depth}, every message $q_{j\leftarrow k}$ 
		with $x_k\in\bm{x}_{N_{j\setminus i}\setminus j}$ has update 
		depth at most $s$, so 
		$q_{j\leftarrow k}^{s}(x_k)=\bar q_{j\leftarrow k}(x_k)$ 
		for every $x_k\in\bm{x}_{N_{j\setminus i}\setminus j}$.
		Substituting $q_{j\leftarrow k}^{s}(x_k)=\bar{q}_{j\leftarrow k}(x_k)$ 
		into~\eqref{Eq:q_i_j_marginal_mess1_int} gives 
		$q_{i\leftarrow j}^{s+1}(x_j)=\bar{q}_{i\leftarrow j}(x_j)$, 
		and hence $q_{i\leftarrow j}^{t}(x_j)=\bar{q}_{i\leftarrow j}(x_j)$ 
		for all $t\ge s+1$.
		This completes the induction.
		
		Since $d_{i\leftarrow j}\le|\mathcal X|$ for every message,
		\eqref{Eq:finite_convergence_claim} implies that all messages satisfy
		$q_{i\leftarrow j}^{t}=\bar q_{i\leftarrow j}$ for
		$t\ge|\mathcal X|$, regardless of initialization. Moreover,
		\eqref{Eq:exact_residual_msg} shows that
		$\{\bar q_{i\leftarrow j}\}$ satisfies the fixed-point equations
		of~\eqref{Eq:q_i_j_marginal_mess1_int}. Hence, the message updates
		converge to this fixed point in at most $|\mathcal X|$ iterations.
		
		\medskip
		\noindent\emph{(ii) The fixed point of the message
			updates~\eqref{Eq:q_i_j_marginal_mess1_int} is unique.}
		
		We have shown that
		$\{\bar q_{i\leftarrow j}\}$ is a fixed point
		of~\eqref{Eq:q_i_j_marginal_mess1_int}. To establish uniqueness,
		let $\{q^*_{i\leftarrow j}\}$ be any fixed point. We prove by
		induction on $d_{i\leftarrow j}$ that
		$q^*_{i\leftarrow j}=\bar q_{i\leftarrow j}$ for every message.
		
		By~\eqref{Eq:q_i_j_marginal_mess1_int}, the update of
		$q_{i\leftarrow j}$ uses exactly the messages
		$
		\left\{
		q_{j\leftarrow k}:
		x_k\in\bm x_{N_{j\setminus i}\setminus j}
		\right\}
		$.
		
		If $d_{i\leftarrow j}=1$, then
		$\bm x_{N_{j\setminus i}\setminus j}=\emptyset$.
		Hence, the right-hand sides
		of~\eqref{Eq:q_i_j_marginal_mess1_int}
		and~\eqref{Eq:exact_residual_msg} are identical. Since
		$\{q^*_{i\leftarrow j}\}$ is a fixed point, the two equations give
		$q^*_{i\leftarrow j}=\bar q_{i\leftarrow j}$.
		
		If $d_{i\leftarrow j}=s+1$ for some $s\ge1$, assume that
		$q^*_{a\leftarrow b}=\bar q_{a\leftarrow b}$ for every message
		with $d_{a\leftarrow b}\le s$. By~\eqref{Eq:residual_depth},
		every message $q_{j\leftarrow k}$ used in the update of
		$q_{i\leftarrow j}$ satisfies $d_{j\leftarrow k}\le s$.
		Therefore,
		\[
		q^*_{j\leftarrow k}(x_k)
		=
		\bar q_{j\leftarrow k}(x_k),
		\qquad
		x_k\in\bm x_{N_{j\setminus i}\setminus j}.
		\]
		Since $\{q^*_{i\leftarrow j}\}$ is a fixed point,
		\eqref{Eq:q_i_j_marginal_mess1_int},
		\eqref{Eq:exact_residual_msg}, and
		$q^*_{j\leftarrow k}=\bar q_{j\leftarrow k}$ for every
		$x_k\in\bm x_{N_{j\setminus i}\setminus j}$ imply
		$q^*_{i\leftarrow j}=\bar q_{i\leftarrow j}$.
		This completes the induction.
		
		Since $\{q^*_{i\leftarrow j}\}$ was arbitrary, every fixed point
		coincides with $\{\bar q_{i\leftarrow j}\}$. Therefore, the fixed
		point is unique.
		
		\medskip
		\noindent\emph{(iii) The resulting beliefs satisfy
			$q_i(x_i)=P(x_i)$ for every $x_i\in\mathcal X$.}
		
		The preceding results show that the message
		updates~\eqref{Eq:q_i_j_marginal_mess1_int} converge to the unique
		fixed point $\{\bar q_{i\leftarrow j}\}$. Substituting these
		fixed-point messages into~\eqref{Eq:q_i_marginal_mess1_int} and
		recursively applying~\eqref{Eq:exact_residual_msg} gives the
		marginalization in~\eqref{Eq:q_s_marginal_int}.
		
		By Lemma~\ref{lemma:exact_separation}\,(P4), the hierarchical
		product in~\eqref{Eq:q_s_marginal_int} equals the global potential
		function:
		\begin{equation}\label{Eq:exact_decomposition_in_proof}
			\prod_{\ell=1}^{K}
			\prod_{N\in\mathcal T_i^{(\ell)}}
			\Phi_N(\bm x_N)
			=
			\prod_{f_a\in\mathcal F}
			f_a(\bm x_a)
			=
			\psi(\bm x).
		\end{equation}  
		Therefore,
		\begin{equation}\label{Eq:exact_marginal_in_proof}
			q_i(x_i)
			=
			\frac{1}{Z_i}
			\int\psi(\bm x)\,d\bm x^{i-}
			=
			P(x_i).
		\end{equation}
		This holds for every $x_i\in\mathcal X$, so the resulting beliefs
		are the exact marginals. This completes the proof.
		\hfill$\square$
		
	\end{pf}

	\subsubsection*{2) Approximate regime}\label{sec:approx}
	When $r < \mathcal{L}_{\max} - 2$, loops longer than $r+2$ exist
	in~$\mathbb{G}$, and the residual neighborhoods may reconnect
	outside~$N_i$.
	In this case, the
	decomposition in~\eqref{Eq:P_S_1} is not guaranteed to hold 
	with equality, as correlations induced by loops longer than $r+2$
	may not be fully captured by the extended neighborhoods.
	Nonetheless, as observed in~\cite{Alec2021BPNL},
	long loops typically introduce weaker correlations than short ones, 
	and their density is often low in practice. 
	Accordingly, the message-passing
	equations~\eqref{Eq:q_i_marginal_mess1_int}
	and~\eqref{Eq:q_i_j_marginal_mess1_int}
	can still be applied as an approximation: they incorporate
	information associated with short loops captured within the
	extended neighborhoods, while the effects of longer loops are
	treated approximately.

	\noindent\textbf{Remark 2:}
	Theorem~\ref{thm:exact_inference} guarantees that the message
	updates~\eqref{Eq:q_i_j_marginal_mess1_int} converge to a fixed point and yield
	the exact marginals when $r\ge\mathcal{L}_{\max}-2$.
	However, when $r<\mathcal{L}_{\max}-2$, the exact separation properties of
	Lemma~\ref{lemma:exact_separation} no longer hold, and the
	message updates~\eqref{Eq:q_i_j_marginal_mess1_int}
	reduce to an approximate iterative method with no convergence
	guarantee, whose empirical convergence behavior is reported in
	Section~\ref{sec:application}. Algorithm~\ref{alg2} implements
	these updates.

	\noindent\textbf{Remark 3:}
	The condition $r \geq \mathcal{L}_{\max}-2$ is sufficient but not 
	necessary for exact inference: because Algorithm~\ref{alg1} recursively 
	absorbs the preliminary $r$-neighborhoods of included factor nodes, 
	$N_i^{(r)}$ may already capture all loop-induced correlations 
	at some $r < \mathcal{L}_{\max}-2$. Since $\mathcal{L}_{\max}$ is 
	NP-hard to compute on general graphs, the exact threshold cannot be 
	determined in advance. 
	In practice, $r$ serves as a tuning parameter: as $r$ increases, the extended neighborhoods capture more loop-induced correlations and inference accuracy generally improves, at the cost of higher computational overhead (cf.\ Remark~1). The value of $r$ should therefore be chosen according to the available computational resources.

	Note that ENMP with $r=0$ does not in general reduce to BP, since the
	extended $0$-neighborhood still incorporates the variables connected
	through the neighboring factor nodes. Consequently, the resulting
	extended and residual $0$-neighborhoods differ from the local
	neighborhoods used by BP, and the ENMP updates
	in~\eqref{Eq:q_i_marginal_mess1_int} and
	\eqref{Eq:q_i_j_marginal_mess1_int} are generally different from the
	BP updates.

	The computation of \eqref{Eq:q_i_marginal_mess1_int}
	and~\eqref{Eq:q_i_j_marginal_mess1_int} depends on the variable type.
	When $x_i$ is discrete, the integrals reduce to sums over its sample
	space. For continuous non-Gaussian variables, analytical evaluation is
	rarely possible, and the required integrals are approximated numerically
	through quadrature, discretization, or sampling-based
	techniques~\cite{evans1995methods}. In the Gaussian case, however,
	closed-form expressions exist, and by working in the precision form,
	marginalization becomes a purely algebraic operation, typically carried
	out via block matrix manipulations such as the Schur
	complement~\cite{Schur}.

	
	\subsection{Gaussian message passing rules}\label{Sec:GaussianMe}
	When the factor graph is Gaussian, the integrals in \eqref{Eq:q_i_marginal_mess1_int} and \eqref{Eq:q_i_j_marginal_mess1_int} admit closed-form solutions.
	Each local factor $f_a(\bm{x}_a) \propto \exp\!\left(-\frac{1}{2}\bm{x}_a^T\bm{\Lambda}_a\bm{x}_a + \bm{\eta}_a^T\bm{x}_a\right)$ in~\eqref{Eq:P(x)} takes the Gaussian canonical form, where $\bm{\Lambda}_a$ and $\bm{\eta}_a$ are the local precision matrix and potential vector.
	Any incoming message $q_{j\leftarrow k}(x_k)$ from the
	previous iteration can also be represented in Gaussian canonical
	form, i.e.,
	$q_{j\leftarrow k}(x_k)
	\propto
	\exp\!\left(
	-\frac{1}{2}\Lambda_{j\leftarrow k}x_k^2
	+\eta_{j\leftarrow k}x_k
	\right)$.
	This representation includes the noninformative constant message,
	for which
	$\Lambda_{j\leftarrow k}=0$ and
	$\eta_{j\leftarrow k}=0$.
	We assume that all eliminated precision blocks are
	positive definite whenever marginalization is required.

	Consequently, computing the message $q_{i\leftarrow j}(x_j)$
	requires two steps: aggregating the incoming information into a
	joint Gaussian potential over $\bm{x}_{N_{j\setminus i}}$, and then
	marginalizing out the remaining variables.
	Let $\bm{x}_u=\bm{x}_{N_{j\setminus i}\setminus j}$, and order the
	variables as
	$\bm{x}_{N_{j\setminus i}}
	=[x_j,\bm{x}_u^{\mathsf T}]^{\mathsf T}$.
	The aggregated precision matrix $\bm{\Lambda}_{N_{j\setminus i}}$
	and potential vector $\bm{\eta}_{N_{j\setminus i}}$ of this joint
	Gaussian potential are given by
	\begin{equation}\label{Eq:Gaussian_Aggregation}
		\begin{split}
			\bm{\Lambda}_{N_{j\setminus i}}
			&\!\!=\!\!
			\begin{bmatrix}
				\Lambda_{jj} & \bm{\Lambda}_{ju}\\
				\bm{\Lambda}_{uj} & \bm{\Lambda}_{uu}
			\end{bmatrix}
			=\!\!\!\!\!\!\!
			\sum_{f_a\in\bm{f}_{N_{j\setminus i}}}
			\!\!\!\!\![\bm{\Lambda}_a]_{\mathrm{ext}}
			+
			\!\!\!\!\!\!\!\!\!\sum_{x_k\in\bm{x}_{N_{j\setminus i}\setminus j}}
			\!\!\!\!\!\!\!\![\Lambda_{j\leftarrow k}]_{\mathrm{ext}},
			\\[1mm]
			\bm{\eta}_{N_{j\setminus i}}
			&=
			\begin{bmatrix}
				\eta_{j}\\
				\bm{\eta}_{u}
			\end{bmatrix}
			=
			\sum_{f_a\in\bm{f}_{N_{j\setminus i}}}
			\!\!\!\!\![\bm{\eta}_a]_{\mathrm{ext}}
			+\!\!\!\!\!\!
			\sum_{x_k\in\bm{x}_{N_{j\setminus i}\setminus j}}
			\!\!\!\!\![\eta_{j\leftarrow k}]_{\mathrm{ext}},
		\end{split}
	\end{equation}
	where $[\cdot]_{\text{ext}}$ denotes the zero-padded expansion of parameters matching the dimension of $\bm{x}_{N_{j\setminus i}}$, 
	$\Lambda_{jj}$ is the scalar block corresponding to $x_j$, $\bm{\Lambda}_{ju}$ and $\bm{\Lambda}_{uj}$ are the corresponding row and column blocks, respectively, $\bm{\Lambda}_{uu}$ is the block corresponding to $\bm{x}_u$, and, since the aggregated precision matrix is symmetric, $\bm{\Lambda}_{uj}=\bm{\Lambda}_{ju}^{\mathsf T}$.
	When $\bm{x}_u$ is nonempty and
	$\bm{\Lambda}_{uu}\succ\bm{0}$, marginalizing $\bm{x}_u$ gives
	\begin{equation}\label{Eq:Gaussian_Params}
		\begin{split}
			\Lambda_{i\leftarrow j}
			&=
			\Lambda_{jj}
			-
			\bm{\Lambda}_{ju}
			\bm{\Lambda}_{uu}^{-1}
			\bm{\Lambda}_{uj},
			\\
			\eta_{i\leftarrow j}
			&=
			\eta_{j}
			-
			\bm{\Lambda}_{ju}
			\bm{\Lambda}_{uu}^{-1}
			\bm{\eta}_{u}.
		\end{split}
	\end{equation}
	If $V(N_{j\setminus i})=\{x_j\}$, then
	$\Phi_{N_{j\setminus i}}(x_j)=1$ and no marginalization is
	required. The resulting noninformative message is represented by
	$\Lambda_{i\leftarrow j}=0$ and
	$\eta_{i\leftarrow j}=0$.
	
	Finally, the marginal belief $q_i(x_i)$ is computed from the
	extended neighborhood $N_i$. Let
	$\bm{x}_v=\bm{x}_{N_i\setminus i}$, and order the variables as
	$\bm{x}_{N_i}
	=
	[x_i,\bm{x}_v^{\mathsf T}]^{\mathsf T}$.
	The aggregated precision matrix and potential vector are
	\begin{equation}\label{Eq:GaussianBe}
		\begin{split}
			\bm{\Lambda}_{N_i}
			&=
			\begin{bmatrix}
				\Lambda_{ii} & \bm{\Lambda}_{iv}\\
				\bm{\Lambda}_{vi} & \bm{\Lambda}_{vv}
			\end{bmatrix}
			= \!\!\!\!
			\sum_{f_a\in\bm{f}_{N_i}}
			\!\!\![\bm{\Lambda}_a]_{\mathrm{ext}}
			+\!\!\!\!\!
			\sum_{x_j\in\bm{x}_{N_i\setminus i}}
			\!\!\!\!\![\Lambda_{i\leftarrow j}]_{\mathrm{ext}},
			\\[1mm]
			\bm{\eta}_{N_i}
			&=
			\begin{bmatrix}
				\eta_i\\
				\bm{\eta}_v
			\end{bmatrix}
			=
			\sum_{f_a\in\bm{f}_{N_i}}
			[\bm{\eta}_a]_{\mathrm{ext}}
			+
			\sum_{x_j\in\bm{x}_{N_i\setminus i}}
			[\eta_{i\leftarrow j}]_{\mathrm{ext}}.
		\end{split}
	\end{equation}
	When $\bm{x}_v$ is nonempty and
	$\bm{\Lambda}_{vv}\succ\bm{0}$, marginalizing $\bm{x}_v$ gives
	the canonical parameters of $q_i(x_i)$:
	\begin{equation}\label{Eq:GaussianBelief_Params}
		\begin{split}
			\Lambda^{\mathrm{bel}}_i
			\!=\!
			\Lambda_{ii}
			-
			\bm{\Lambda}_{iv}
			\bm{\Lambda}_{vv}^{-1}
			\bm{\Lambda}_{vi},\quad\!\!
			\eta^{\mathrm{bel}}_i
			\!=\!
			\eta_i
			-
			\bm{\Lambda}_{iv}
			\bm{\Lambda}_{vv}^{-1}
			\bm{\eta}_v.
		\end{split}
	\end{equation}
	If $\Lambda^{\mathrm{bel}}_i>0$, the estimated marginal mean and variance are then
	\begin{equation}\label{Eq:GaussianBelief}
		\hat{\Sigma}_i
		=
		\left(\Lambda^{\mathrm{bel}}_i\right)^{-1},
		\qquad
		\hat{\mu}_i
		=
		\hat{\Sigma}_i\eta^{\mathrm{bel}}_i.
	\end{equation}
	If $\bm{x}_{N_i\setminus i}$ is empty, no marginalization is
	required, $\Lambda^{\mathrm{bel}}_i=\Lambda_{ii}$, and $\eta^{\mathrm{bel}}_i=\eta_i$.

	Consequently, the general ENMP framework
	(Algorithm~\ref{alg2}) is specialized to Gaussian factor graphs by
	replacing \eqref{Eq:q_i_marginal_mess1_int} with
	\eqref{Eq:GaussianBelief_Params}--\eqref{Eq:GaussianBelief}, and
	\eqref{Eq:q_i_j_marginal_mess1_int} with
	\eqref{Eq:Gaussian_Params}.

	\noindent\textbf{Remark 4:}
	BP yields exact means but inaccurate variances when it
	converges on Gaussian graphical models with loops.
	The Gaussian ENMP updates in~\eqref{Eq:Gaussian_Params} 
	and~\eqref{Eq:GaussianBelief} address this directly: 
	when $r \geq \mathcal{L}_{\max}-2$, 
	Theorem~\ref{thm:exact_inference} guarantees exact 
	$\hat{\mu}_i$ and $\hat{\Sigma}_i$; when $r < \mathcal{L}_{\max}-2$, ENMP tends to 
	improve variance estimates over BP, 
	as demonstrated empirically in 
	Section~\ref{sec:application}.

	\subsection{Computational complexity analysis}\label{sec:complexity}
	
	Having analyzed the cost of constructing the extended neighborhoods via
	Algorithm~\ref{alg1}, we now analyze the computational
	complexity of the message updates. 
	Let $|\bm{x}_{N_{j\setminus i}}|$ denote the number of variable
	nodes contained in the residual $r$-neighborhood
	$N_{j\setminus i}$, where the superscript $r$ is omitted for
	notational brevity. Since the size of this neighborhood depends
	on $r$, the choice of $r$ affects the computational cost of each
	message update.

	For discrete random variables, the integral in~\eqref{Eq:q_i_j_marginal_mess1_int} reduces to a summation over all configurations, scaling exponentially as 
	$\mathcal{O}\left(C^{|\bm{x}_{N_{j \setminus i}}|}\right)$, where $C$ is the size of the sample space. 
	When the residual neighborhoods become too large for exhaustive summation, the message summation may be approximated by Monte Carlo sampling~\cite{Alec2021BPNL}.
	For Gaussian random variables, Gaussian messages admit closed-form solutions via matrix operations, yielding polynomial complexity $\mathcal{O}\left(|\bm{x}_{N_{j \setminus i}}|^3\right)$ dominated by matrix inversion.
	For non-Gaussian continuous random variables, if discretization approximation is adopted, it leads to a similar exponential cost as the case of discrete random variables with $C$ denoting the number of sampling points used per variable.

	In each iteration, ENMP computes $|\bm{x}_{N_i \setminus i}|$ messages 
	for every variable node $x_i$. For discrete or discretized continuous 
	variables, the per-iteration complexity is
	$\mathcal{C}_{\text{iter}}^{(r)} = \mathcal{O}\!\left( \sum_{x_i \in \mathcal{X}} \sum_{x_j \in \bm{x}_{N_{i} \setminus i}} C^{|\bm{x}_{N_{j \setminus i}}|} \right)$.
	For Gaussian distributions, it reduces to
	$\mathcal{C}_{\text{iter}}^{(r)} = \sum_{x_i \in \mathcal{X}} \sum_{x_j \in \bm{x}_{N_{i} \setminus i}} \mathcal{O}\!\left(|\bm{x}_{N_{j \setminus i}}|^3\right)$.
	Let $I^{(r)}$ denote the number of iterations required for convergence 
	at parameter $r$; the cumulative computation time until convergence is 
	then $\mathcal{O}\!\left(I^{(r)}\,\mathcal{C}_{\text{iter}}^{(r)}\right)$. 

	\section{Simulation and Validation}\label{sec:application}
	We evaluate ENMP in two settings that span the continuous and discrete regimes.
	The Gaussian setting addresses DC state estimation in electric power systems over standard IEEE bus test cases (14, 118, 300, and 1354 buses), of which factor graphs are densely loopy; here ENMP is benchmarked against BP, sGaBP-m, and GBP,
	with GBP included only on the IEEE 14-bus system (Section~\ref{sec:GauEX1}). 
	For the discrete setting, we validate ENMP on an Ising factor graph, comparing against BP and GBP (Section~\ref{sec:IsingEx2}).
	Collectively, these settings assess ENMP on both continuous Gaussian and discrete distributions on factor graphs.
	Both settings are conducted with $N_{\mathrm{MC}}=50$ independent Monte Carlo (MC)
	trials per experimental configuration.
	
	\subsection{ENMP for DC state estimation problem}\label{sec:GauEX1}
	
	\subsubsection{Factor graph model}\label{Sec:ProblemFormulation}
	
	We consider the DC state estimation problem in electric power systems~\cite{pulgar2018fast,DistributedBPPS} using standard IEEE bus test cases (14, 118, 300, and 1354 buses).
	The objective is to estimate the posterior distribution of the bus voltage phase angles $\bm{x} = \left [ x_1, x_2,\dots,x_{n_x} \right ]^{\mathrm{T}}$ at all buses; the general DC model framework is given in~\cite{pulgar2018fast}.
	
	The prior distribution of $\bm{x}$ is $p(\bm{x}) = \prod_{i=1}^{n_x} p(x_i)$, and each $x_i$ is assumed to be Gaussian and centered at the initial bus voltage angle estimates provided by the IEEE data, that is, $p(x_i) = \mathcal{N}(x_i ; \mu_i, \Lambda_{x_i}^{-1})$, where $\mu_i$ is the voltage angle from the IEEE case data and $\Lambda_{x_i}$ is the precision matrix.

	Active power injection and power flow measurements are considered.
	The measurement $z_j$ is defined as
	$z_j = h_j(\bm{x}) + \varepsilon_j$,
	where $\varepsilon_j \sim \mathcal{N}(0,\Lambda_{z_j}^{-1})$.
	For an injection at bus $i$ or a power flow on line $(i,k)$,
	$h_j(\cdot)$ is given by~\cite{pulgar2018fast},
	\begin{equation}\label{Eq:meaFunc}
		\!\! h_j(\bm{x})\!\!=\!\!
		\begin{cases}
			\sum_{k \in \mathcal{H}(i)} B_{ik}(x_i-x_k),
			\!\!& \text{power injection} ,\\
			B_{ik}(x_i-x_k),
			\!\!& \text{power flow}.
		\end{cases}
	\end{equation}
	where $B_{ik}$ is the susceptance of the line between buses $i$ and $k$,
	and $\mathcal{H}(i)$ denotes the set of neighboring buses of bus $i$.
	Stacking all measurements gives 
	$\bm{z} = \mathbf{A}\bm{x} + \bm{\varepsilon}$, 
	where 
	$\mathbf{A} = [\bm{a}_1^{\mathrm{T}},\bm{a}_2^{\mathrm{T}},\ldots,\bm{a}_{n_z}^{\mathrm{T}}]^{\mathrm{T}} 
	\in \mathbb{R}^{n_z \times n_x}$, 
	$n_z$ is the number of measurements, 
	$\bm{a}_j = [a_{1,j},a_{2,j},\ldots,a_{n_x,j}]$ 
	is the coefficient vector of $\mathbf{A}$ 
	corresponding to $z_j$, 
	and $\bm{\varepsilon} = [\varepsilon_1,\varepsilon_2,\ldots,\varepsilon_{n_z}]^{\mathrm{T}}$.


		Given the measurement likelihood $p(\bm{z}\mid\bm{x}) = \prod_{j=1}^{n_z} p(z_j \mid \bm{x}) = \prod_{j=1}^{n_z}\mathcal{N}(z_j;\bm{a}_j\bm{x},\Lambda_{z_j}^{-1})$, the posterior is,
		\begin{equation}\label{Eq:JPDF_PS}
			p(\bm{x} \mid \bm{z}) \propto
			{\textstyle \prod_{i=1}^{n_x} p(x_i) \prod_{j=1}^{n_z} p(z_j \mid \bm{x})}.
		\end{equation}
		Setting $f_i = p(x_i)$ and $g_j = p(z_j\mid\bm{x})$, we construct the factor graph
		for $p(\bm{x}\mid\bm{z})$ as illustrated in Fig.~\ref{IEEE14} and Fig.~\ref{IEEE1354}.
		Fig.~\ref{IEEE14} shows the IEEE 14-bus system in full detail, where the unknown
		variable nodes $x_i$ are represented as circles and the factor nodes $f_i$ and
		$g_j$, depicted as rectangles, indicate the associated conditional PDFs.
		Fig.~\ref{IEEE1354} shows the much larger IEEE 1354-bus system, illustrating the
		densely loopy topology with long loops that characterizes the larger benchmarks.
		The IEEE 14, 118, 300, and 1354-bus systems yield factor graphs with 14, 118, 300,
		and 1354 variable nodes and 35, 305, 838, and 3660 factor nodes, respectively.

		\begin{figure}[!t]
			\centering
			\includegraphics[width=0.5\textwidth]{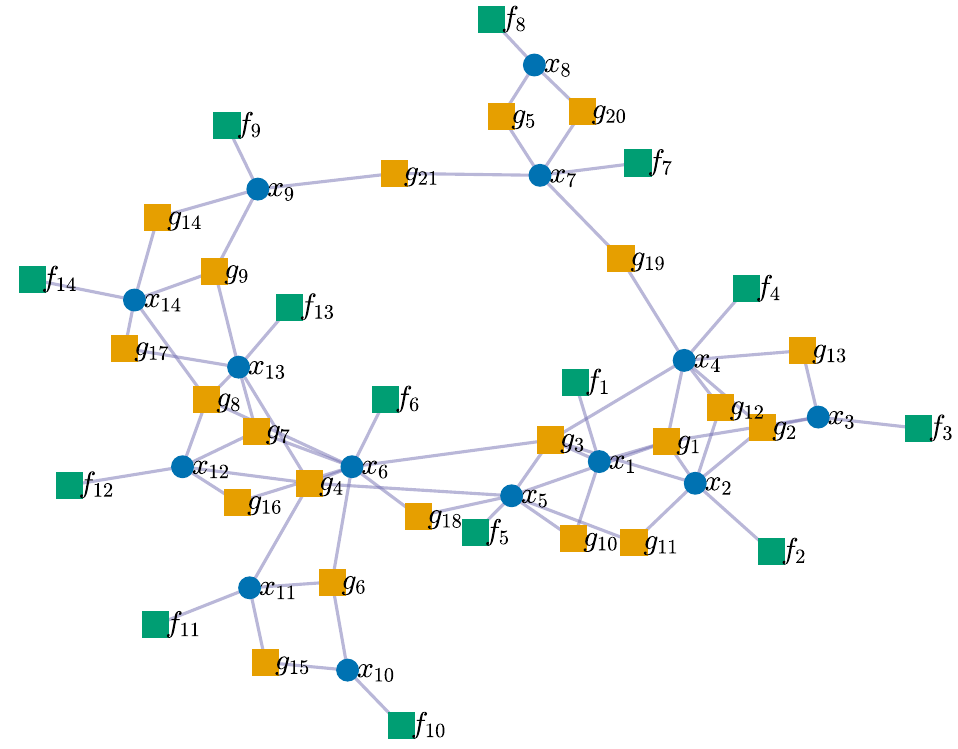}
			\caption{The factor graph of the standard IEEE 14-bus test system. Variable nodes $x_i$ are shown as circles; factor nodes $f_i$ and $g_j$ are shown as rectangles.}
			\label{IEEE14}
		\end{figure}
		
		\begin{figure*}[!t]
			\centering
			\includegraphics[width=0.95\textwidth]{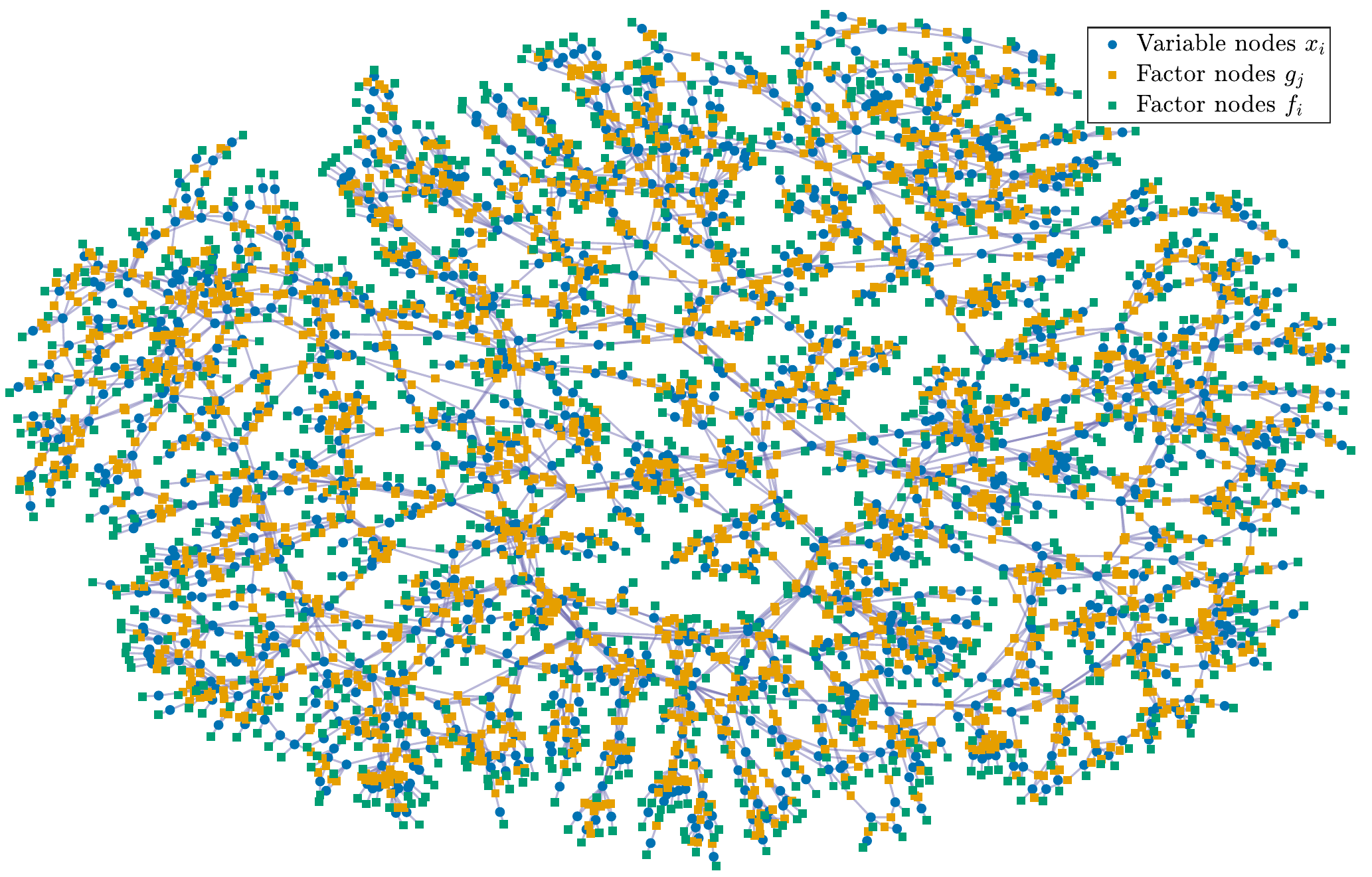}
			\caption{The factor graph of the IEEE $1354$-bus system comprising $1354$ variable nodes and $3660$ factor nodes. The variable nodes $x_i$, factor nodes $f_i$ and $g_j$ are represented as blue circles, green boxes, and orange boxes, respectively.}
			\label{IEEE1354}
		\end{figure*}

		\subsubsection{Experimental setup and baselines}
		
		The prior mean of each bus angle is initialized to the MATPOWER-provided voltage
		angle with an additive Gaussian perturbation~\cite{MATPOWER}, i.e.,
		$\mu_i = x_i^{\text{true}}+\epsilon_i$,
		where $x_i^{\text{true}}$ denotes the ground-truth bus voltage phase angle,
		and $\epsilon_i\sim\mathcal{N}(0,\Lambda_{x_i}^{-1})$; in each MC trial,
		$\{\epsilon_i\}$ and $\{\varepsilon_j\}$ are independently redrawn, yielding a
		distinct factor-graph instance. The reference bus $x_1$ is anchored at zero by a
		tight prior with variance $\Lambda_{x_1}^{-1}=10^{-10}$, while the remaining
		buses use prior variance $\Lambda_{x_i}^{-1}=10^{-2}$ and measurement-noise
		variance $\Lambda_{z_j}^{-1}=10^{-3}$.

		On the IEEE 14-bus system, we run ENMP, BP, sGaBP-m, GBP-A, and GBP-B~(see details later);
		on the larger IEEE 118-, 300-, and 1354-bus systems we run ENMP, BP,
		and sGaBP-m. For ENMP, $r$ is increased in steps of two ($r=0,2,4,\ldots$).
		Messages of ENMP, BP, and GBP are initialized with mean $0$ and
		variance $10^4$; for sGaBP-m, all messages $(\mathbf{Q}_{ji}^{(0)},\bm{v}_{ji}^{(0)})$
		and initial means $\bm{\mu}_i^{(-1)}$ are set to zero (Algorithm~\ref{alg_sgabpm} in
        Section~\ref{AppendixC} of Appendix~\ref{sec:S1}). All methods use a maximum of
		$\beta=300$ iterations and convergence threshold $\delta=10^{-4}$.
		
		For GBP, two manually designed region graphs, GBP-A and GBP-B, with maximum
		region sizes of eight and eleven variable nodes (see Section~\ref{AppendixD}
        of Appendix~\ref{sec:S1}
        for detailed constructions), are evaluated only on the IEEE 14-bus system.
		GBP is not evaluated on the larger systems for two reasons. First, constructing
		an effective region graph requires manual inspection of the graph topology to
		identify variable groups that capture loop-induced correlations, a task that
		grows increasingly prohibitive with the number of nodes and loops. Second, the
		bipartite structure of factor graphs complicates region construction, since each
		factor node couples a different number of variables, resulting in complex overlaps
		among candidate regions. For instance, designing the two configurations (GBP-A
		and GBP-B) for the IEEE 14-bus system (14 variable nodes, 35 factor nodes)
		already required careful manual analysis, while extending this to the IEEE
		118-bus system (118 variable nodes, 305 factor nodes) was found impractical. This
		limitation motivates the automatic neighborhood construction in ENMP.
		
		As sGaBP-m operates on a pairwise Markov graph, the model in~\eqref{Eq:JPDF_PS}
		is first converted into its canonical Gaussian form with global precision matrix
		$\mathbf{S}$ and potential vector $\bm{b}$ (see Appendix~\ref{AppendixC}). The variable
		set $\mathcal{X}$ is partitioned into $p$ disjoint clusters by the METIS balanced
		partitioning algorithm~\cite{Karypis1998METIS} with edge weights derived from
		$\mathbf{S}$, where $p=\lceil\alpha\,n_x\rceil$ with
		$\alpha\in\{0.02,0.05,0.10,0.20,0.50,1.00\}$ and the regularization parameter is
		swept over $\lambda\in\{10^{-2},10^{0},10^{2},10^{4},10^{8},10^{12}\}$. We compare
		ENMP against sGaBP-m only under this baseline configuration.

		Implementations of all methods are provided in Appendix~\ref{sec:S1}.
		All algorithms are implemented in MATLAB, except for the following steps that are
		accelerated by compiled C/MEX code: in ENMP, the core path search in
		extended-neighborhood generation (Algorithm~\ref{alg1}, invoked at Step~1 of
		Algorithm~\ref{alg2}); and in sGaBP-m, the graph partitioning, which invokes the
		C-based METIS library~\cite{Karypis1998METIS} through a MEX interface.

		\subsubsection{Evaluation metrics} \label{sec:GauMetrics}

		All evaluation metrics defined below are first computed for each individual MC
		trial and then averaged over all $N_{\mathrm{MC}}$ trials.
		
		To evaluate estimation accuracy, we use the average AKLD and RMSE for both
		marginal means and variances. Let $p^{(m)}(x_i|\bm{z})$ and $b^{(m)}(x_i)$ denote
		the exact and approximate marginals from the evaluated algorithm in the $m$-th
		trial, respectively, and let $D(p\|q)$ denote the KL divergence between two
		univariate Gaussians $\mathcal{N}(\mu_1,\Lambda_1^{-1})$ and
		$\mathcal{N}(\mu_2,\Lambda_2^{-1})$ given by
		$D(p\|q) = \frac{1}{2}\bigl[
		\ln\frac{\Lambda_2^{-1}}{\Lambda_1^{-1}}
		+\mathrm{Tr}(\Lambda_2\Lambda_1^{-1})
		+(\mu_1-\mu_2)^{\top}\Lambda_2(\mu_1-\mu_2)
		-1\bigr]$.
		These metrics are defined as
		\begin{equation*}\label{Eq:MC_Metrics}
			\begin{aligned}
				\text{AKLD} \!\!&=\!\! {\textstyle\frac{1}{N_{\mathrm{MC}}} \sum_{m=1}^{N_{\mathrm{MC}}} \!\frac{1}{n_x}\!
					\sum_{i=1}^{n_x} \! D\!\left(p^{(m)}(x_i|\bm{z})\parallel b^{(m)}(x_i)\right)},\\
				\text{RMSE}_{\mu} \!\!&=\!\! {\textstyle\frac{1}{N_{\mathrm{MC}}}\sum_{m=1}^{N_{\mathrm{MC}}}
					\sqrt{\frac{1}{n_x}\sum_{i=1}^{n_x}
						\bigl(\hat{\mu}_{i,m}-\mu_{i,m}^{\text{true}}\bigr)^2}},\\
				\text{RMSE}_{\sigma^2} \!\!&=\!\! {\textstyle\frac{1}{N_{\mathrm{MC}}}\sum_{m=1}^{N_{\mathrm{MC}}}
					\sqrt{\frac{1}{n_x}\sum_{i=1}^{n_x}
						\bigl(\hat{\Lambda}_{i,m}^{-1}-\Lambda_{i,m}^{-1,\text{true}}\bigr)^2}},
			\end{aligned}
		\end{equation*}
		where $\hat{\mu}_{i,m}$ and $\hat{\Lambda}_{i,m}^{-1}$ are the estimated
		marginal mean and variance in the $m$-th trial, and $\mu_{i,m}^{\text{true}}$
		and $\Lambda_{i,m}^{-1,\text{true}}$ are the corresponding exact reference values.
		If the AKLD or RMSE of a given trial exceeds $10^{10}$ or $10^5$,
		respectively, it is capped at that value.

		Furthermore, we compare the computational cost by considering the average 
		number of iterations, the average computation time for message updates (Comp), and 
		the pre-construction time (Pre). The Pre is defined as the generation 
		time of the extended $r$-neighborhood for ENMP (Step~1 of Algorithm~\ref{alg2}) 
		and the cluster construction time for sGaBP-m; for GBP, the region graph is 
		constructed manually and thus no pre-construction time is reported. 
		Since the Pre depends only on the graph topology and parameter $r$ (or $p$), 
		it is reported as a single value.

		\subsubsection{Results and analysis}
		
		The results validate the inference accuracy, 
		convergence behavior, and computational cost 
		of the tested methods.
		The experimental results for the IEEE 14-bus, IEEE 118-bus, IEEE 300-bus, and IEEE 1354-bus systems are presented in Fig.~\ref{AKLD}, Fig.~\ref{CMT_MeanVar}, Fig.~\ref{ANI} and Fig.~\ref{CMT}.
		For sGaBP-m, the best-performing configuration of $(\lambda, \alpha)$ 
		is selected for each test system by minimizing the AKLD over the 
		tested grid, and is shown as a horizontal baseline.
		Note that this selection inherently favors sGaBP-m, as it requires
		an exhaustive search over $6 \times 6 = 36$ configurations;
		the majority of these configurations did not satisfy the convergence
		criterion within $\beta=300$ iterations and reach the prescribed caps,
		as detailed in Appendix~\ref{sec:sgabpm_full}.
		GBP-A and GBP-B are included only on the IEEE 14-bus system and are shown as horizontal baselines, since they are independent of $r$.

		Fig.~\ref{AKLD} and Fig.~\ref{CMT_MeanVar} show that the AKLD, 
		$\mathrm{RMSE}_{\mu}$, and $\mathrm{RMSE}_{\sigma^2}$ of ENMP 
		decrease as $r$ increases.
		For the IEEE 14-bus and IEEE 118-bus systems, ENMP achieves exact 
		marginal inference at $r=8$, with this accuracy maintained for all 
		larger tested values of $r$. For the IEEE 300-bus system, exact 
		marginal inference is achieved at $r=10$. In these three systems, 
		exact marginal inference is confirmed by $\mathrm{RMSE}_{\mu}<10^{-10}$, 
		$\mathrm{RMSE}_{\sigma^2}<10^{-15}$, and $\mathrm{AKLD}<10^{-15}$.
		For the IEEE 1354-bus system, exact marginal inference is not achieved 
		within the tested range $r \leq 14$; its AKLD and RMSE continue to 
		decrease as $r$ increases within this range.
		This improvement stems from the fact that a larger $r$ causes the 
		extended neighborhoods to capture more loop-induced correlations, 
		thereby improving inference accuracy.

		\begin{figure}[!h]
			\centering
			\includegraphics[width=0.46\textwidth]{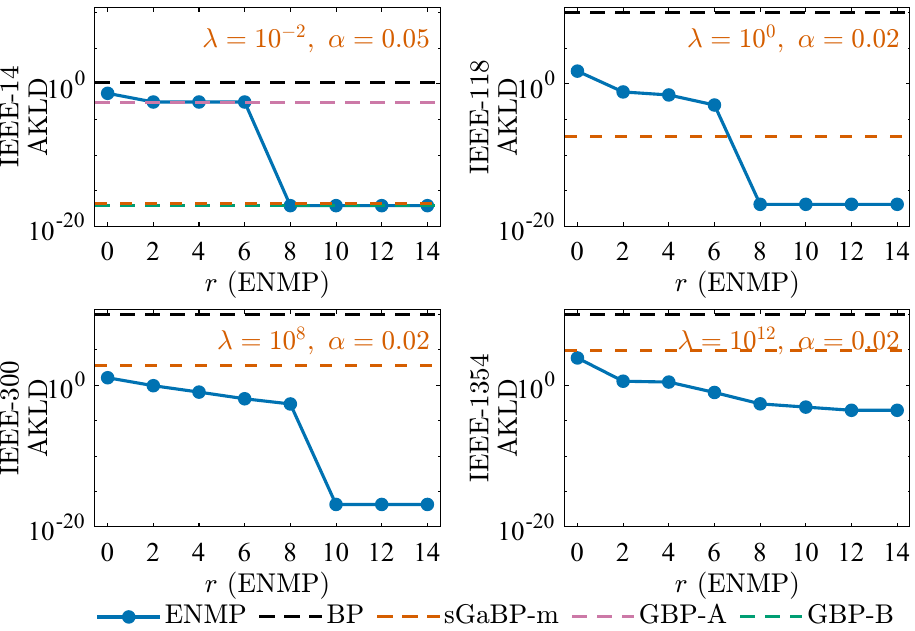}
			\caption{The AKLD of BP, sGaBP-m (best configuration), ENMP, and GBP (GBP-A and GBP-B, on IEEE 14 only) on the IEEE 14, 118, 300, and 1354 test systems. The best-performing sGaBP-m configuration $(\lambda,\alpha)$ is annotated in each subplot. As $r$ increases, the AKLD of ENMP decreases consistently, achieving exact marginal inference on the IEEE-14, IEEE-118, and IEEE-300 systems.}
			\label{AKLD}
		\end{figure}

		\begin{figure}[!t]
			\centering
			\includegraphics[width=0.46\textwidth]{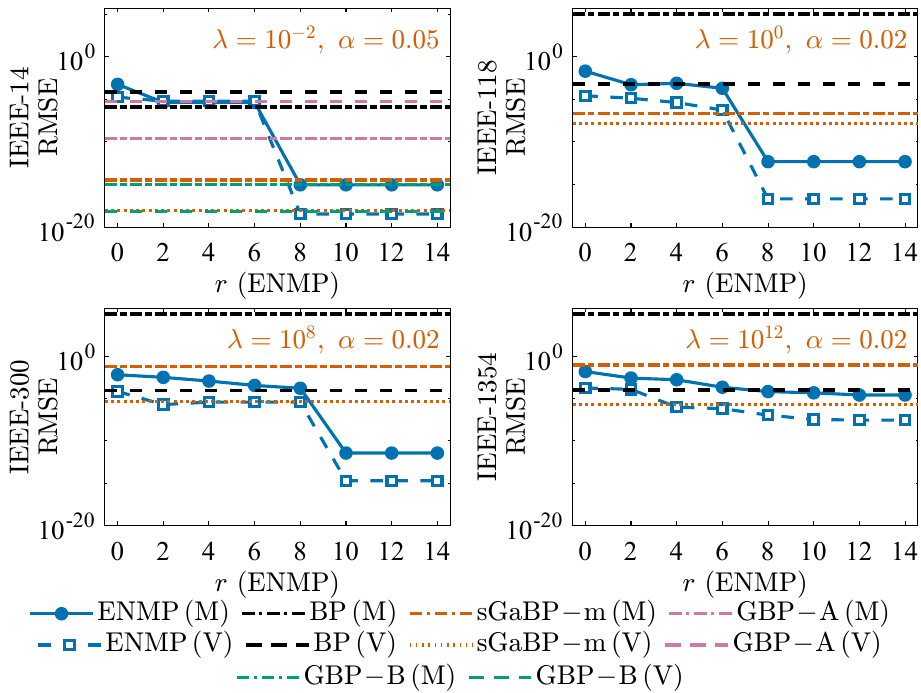}
			\caption{The $\text{RMSE}_{\mu}$ and $\text{RMSE}_{\sigma^2}$ of BP, sGaBP-m (best configuration), ENMP, and GBP (GBP-A and GBP-B, on IEEE 14 only) on the IEEE 14, 118, 300, and 1354 test systems, where (M) and (V) in the legend denote $\text{RMSE}_{\mu}$ and $\text{RMSE}_{\sigma^2}$, respectively. ENMP shows a consistent decrease in both mean and variance errors as $r$ increases.}
			\label{CMT_MeanVar}
		\end{figure}

		The baseline comparisons further distinguish mean 
		and variance estimation performance.
		On the IEEE 14-bus system, GBP-A yields 
		AKLD $< 10^{-3}$, $\mathrm{RMSE}_{\mu} < 10^{-9}$, 
		and $\mathrm{RMSE}_{\sigma^2} < 10^{-5}$, 
		while GBP-B achieves exact marginal inference 
		with AKLD $< 10^{-17}$, 
		$\mathrm{RMSE}_{\mu} < 10^{-14}$, and 
		$\mathrm{RMSE}_{\sigma^2} < 10^{-17}$, 
		comparable to ENMP at $r \geq 8$. 
		This difference demonstrates that the marginal 
		accuracy of GBP depends on the selected 
		region-graph configuration (the largest outer 
		region of GBP-B contains eleven variable nodes, 
		while that of GBP-A contains eight), whereas 
		ENMP achieves comparable accuracy through the 
		single parameter~$r$ without manual region 
		design.
		For sGaBP-m, under the tested configurations, the 
		best configuration on the IEEE 14-bus system achieves 
		AKLD $< 10^{-16}$ and both $\mathrm{RMSE}_{\mu}$ and 
		$\mathrm{RMSE}_{\sigma^2}$ below $10^{-14}$, 
		comparable to ENMP for $r \geq 8$. 
		On the IEEE 118-bus system, the best configuration 
		yields AKLD $< 10^{-6}$, 
		$\mathrm{RMSE}_{\mu} < 10^{-6}$, and 
		$\mathrm{RMSE}_{\sigma^2} < 10^{-7}$, all of which 
		are lower than those of ENMP at $r = 6$ but higher 
		than those of ENMP for $r \geq 8$. 
		On the IEEE 300-bus and IEEE 1354-bus systems, the
		best configurations yield AKLD values above $10^{3}$;
		both the AKLD and $\mathrm{RMSE}_{\mu}$ remain
		higher than those of ENMP across all tested
		values of~$r$, and the $\mathrm{RMSE}_{\sigma^2}$ of sGaBP-m
		likewise exceeds that of ENMP for $r \geq 8$ on the
		IEEE 300-bus system and for $r \geq 4$ on the IEEE 1354-bus
		system.
		Moreover, the performance of sGaBP-m depends on the 
		heuristic regularization parameter~$\lambda$ and the 
		prescribed~$\alpha$; for each test system, many 
		configurations reach the prescribed caps of $10^{10}$ 
		for AKLD and $10^{5}$ for RMSE, as detailed in Appendix~\ref{sec:sgabpm_full}.
		For BP, convergence is not reached within 
		$\beta=300$ iterations on the IEEE 118-bus, IEEE 
		300-bus, and IEEE 1354-bus systems; accordingly, its reported AKLD and 
		RMSE values are capped at $10^{10}$ and $10^{5}$, 
		respectively. On the IEEE 14-bus system, BP 
		converges and provides accurate mean estimates, but 
		its $\text{RMSE}_{\sigma^2}$ remains higher than 
		that of ENMP.

		Fig.~\ref{ANI} shows the convergence behavior 
		of the tested methods.
		For ENMP, the average number of iterations 
		decreases as $r$ increases, and ENMP reaches 
		the convergence criterion within $50$ iterations 
		for $r \geq 4$ on all four systems.
		For GBP, on the IEEE 14-bus system, GBP-A and 
		GBP-B both satisfy the convergence criterion 
		within $50$ iterations, with GBP-A requiring 
		more iterations than ENMP for $r>0$.
		For sGaBP-m, under its best-performing 
		configuration, the convergence criterion is 
		not satisfied within $\beta=300$ iterations 
		on the IEEE 118-, 300-, and 1354-bus systems.
		For BP, the convergence criterion is 
		not satisfied within $\beta=300$ iterations 
		on the IEEE 118-, 300-, and 1354-bus systems.

		\begin{figure}[!t]
			\centering
			\includegraphics[width=0.46\textwidth]{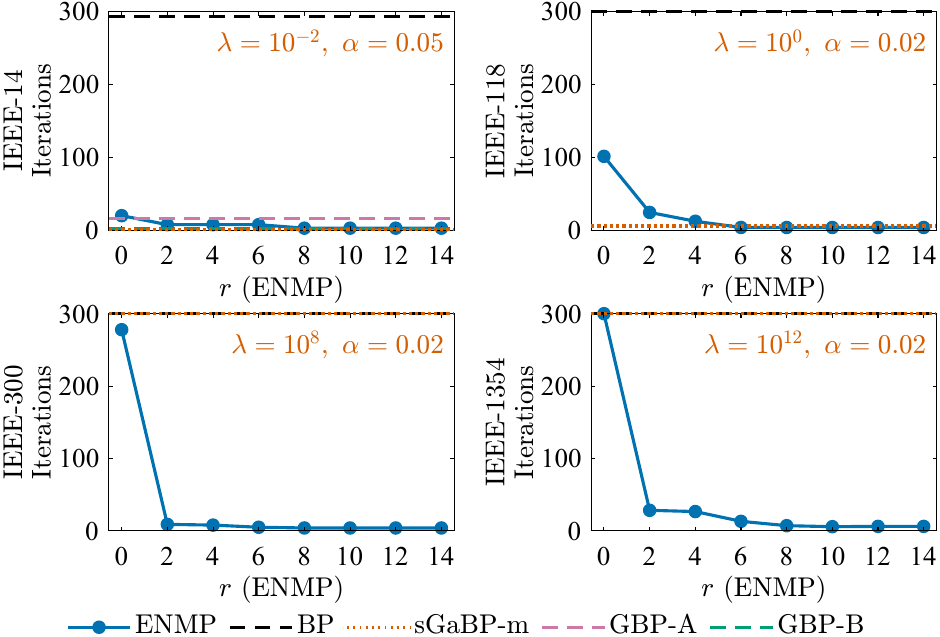}
			\caption{The average number of iterations required for convergence of BP, sGaBP-m (best configuration), ENMP, and GBP (GBP-A and GBP-B, on IEEE 14 only) on the IEEE 14, 118, 300, and 1354 test systems. ENMP converges in fewer iterations as $r$ increases, typically within $50$ iterations for $r \geq 4$.}
			\label{ANI}
		\end{figure}

		Fig.~\ref{CMT} reports the Pre and Comp of the tested methods.
		For ENMP, the Pre generally increases with~$r$ (roughly exponentially), since
		larger~$r$ enlarges the extended neighborhoods; however, the prunings 
		in the extended-neighborhood construction keep the Pre tractable over the tested range
		of~$r$ (e.g., about $45$~s on IEEE-1354 at $r=14$).
		The Comp does not necessarily increase monotonically with~$r$, since although
		a larger~$r$ increases the complexity of each individual message update, it may
		simultaneously reduce the number of iterations required for convergence, as
		shown in Fig.~\ref{ANI}.
		For GBP, the Comp is comparable to that of ENMP under the tested GBP-A and
		GBP-B configurations; however, this comparison does not account for the manual
		design effort required to construct the region graphs, whereas ENMP
		automatically constructs its extended neighborhoods via the single
		parameter~$r$.
		For sGaBP-m, the Comp is typically lower than that of ENMP because its
		per-iteration message updates are simpler; its Pre is also generally lower,
		although the optimized construction has markedly narrowed this gap, and at
		small~$r$ on the larger systems the two become comparable.
		
		\begin{figure}[!t]
			\centering
			\includegraphics[width=0.47\textwidth]{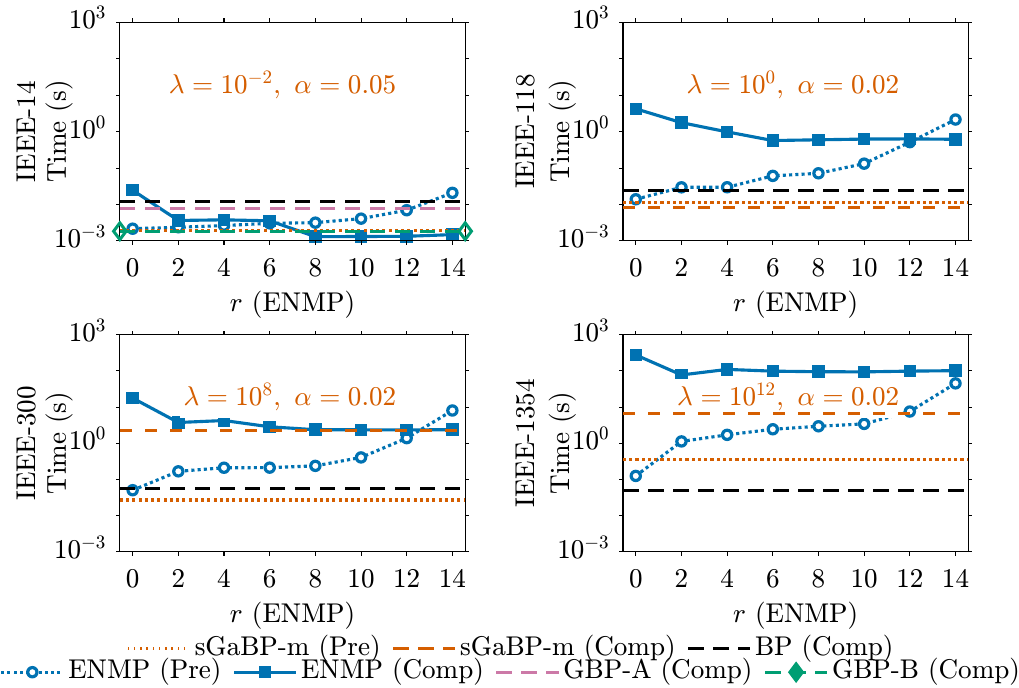}
			\caption{The Pre and Comp of BP, sGaBP-m (best configuration), ENMP, and GBP (GBP-A and GBP-B, on IEEE 14 only) on the IEEE 14, 118, 300, and 1354 test systems. 
				The Pre values are reported as single values since they depend only on the graph topology.}
			\label{CMT}
		\end{figure}

		The maximum values of $|\bm{x}_{N_{i}}|$ and $|\bm{x}_{N_{j\setminus i}}|$ for
		each IEEE system and each tested value of~$r$ are reported in
		Fig.~\ref{fig:size} in Appendix~\ref{sec:MaximumSN}. For all systems, both sizes
		grow with~$r$ and then level off as~$r$ increases, reflecting how many nodes each
		extended neighborhood absorbs. Because the IEEE power-grid graphs are densely
		loopy and contain long loops, the extended neighborhoods expand rapidly with~$r$,
		and the larger systems (IEEE 300- and 1354-bus) reach substantially larger
		neighborhood sizes than the smaller ones (IEEE 14- and 118-bus). Since the
		pre-construction cost grows roughly exponentially in~$r$, attaining exact inference
		on the IEEE 1354-bus system would require an~$r$ beyond the tested range, at which
		this cost becomes prohibitive; this is why exact marginal inference is not attained
		on this system within $r\le 14$.
		
		In summary, these experiments demonstrate that ENMP provides a tunable
		accuracy--complexity trade-off through $r$, achieving exact marginal inference on
		the IEEE 14-bus, IEEE 118-bus, and IEEE 300-bus systems, while continuously
		improving marginal accuracy on the IEEE 1354-bus system over the tested range.
		
		\subsection{ENMP for the Ising inference problem}\label{sec:IsingEx2}
		
		\subsubsection{Factor graph model}\label{Sec:IsingProblemFormulation}
		To assess the applicability of ENMP to discrete factor graphs, we consider an
		Ising model~\cite{Alec2021BPNL} with $n_x$ binary variables
		$x_i\in\{-1,+1\}$ and a set $\mathcal{E}_{\mathrm{I}}$ of pairwise interactions.
		The corresponding factor graph $\mathbb{G}=(\mathcal{X},\mathcal{F},\mathcal{E})$
		has factor set $\mathcal{F}$ consisting of pairwise factors
		$f_{ij}(x_i,x_j)=\exp(J_{ij}x_ix_j)$ for each interacting pair
		$(i,j)\in\mathcal{E}_{\mathrm{I}}$ and factors $f_i(x_i)=\exp(h_ix_i)$, yielding
		the joint PMF,
		\begin{equation}\label{Eq:Ising}
			P(\bm{x}) = \frac{1}{Z}
			\exp\!\Bigl(
			\sum_{(i,j)\in\mathcal{E}_{\mathrm I}} J_{ij}x_i x_j
			+ \sum_{i=1}^{n_x} h_i x_i
			\Bigr).
		\end{equation}
		where $J_{ij}$ is the coupling strength, $h_i$ is the external field, and $Z$ is
		the partition function. The objective is to infer the marginal PMF $P(x_i)$ for
		each variable node $x_i$, $i=1,2, \ldots,n_x$.

		In our experiments, the Ising factor graph is instantiated on the bus topologies
		of the Gaussian model in Section~\ref{Sec:ProblemFormulation}: the physical 
		bus-adjacency topology is retained, the continuous variables are replaced by binary variables
		$x_i\in\{-1,+1\}$, the coupling strength is set to $J_{ij}=1.0$ for every
		interacting pair, and the external field is drawn as
		$h_i\sim\mathcal{U}(-0.5,0.5)$.

		\subsubsection{Experimental setup and baselines}
		
		In each MC trial, $\{h_i\}_{i=1}^{n_x}$ are independently redrawn from
		$\mathcal{U}(-0.5,0.5)$, producing a distinct Ising factor-graph instance.
		
		For discrete variables, exact marginalization requires summing over all states in
		the relevant neighborhood or region, with cost exponential in the neighborhood or
		region size---a limitation shared by ENMP and region-based methods such as GBP.
		Since the IEEE bus topologies are very densely loopy with long loops, the ENMP
		extended neighborhoods grow rapidly with~$r$; we therefore select the 14- and
		118-bus systems, which retain nontrivial loop structures while keeping the
		extended neighborhoods tractable over the tested range of~$r$. 
		For larger or denser graphs, evaluating the message summations
		in~\eqref{Eq:q_i_j_marginal_mess1_int} becomes computationally prohibitive due to
		the exponential cost; approximating them via Monte Carlo sampling~\cite{Alec2021BPNL},
		as noted in Section~\ref{sec:complexity}, is left to future work.

		ENMP, run for $r\in\{0,2,4,\ldots,10\}$, and BP are compared on both
		systems. For GBP, the same GBP-A and GBP-B region-graph configurations as in the
		Gaussian experiment are applied to the IEEE 14-bus Ising factor graph; GBP is
		excluded from the 118-bus system, as constructing comparable configurations
		becomes impractical at this scale. Messages are initialized uniformly over
		$\{-1,+1\}$, and the maximum number of iterations and convergence threshold are
		again $\beta=300$ and $\delta=10^{-4}$. Exact marginals computed by the junction
		tree algorithm~\cite{Lauritzen1988Local} serve as the reference for accuracy
		evaluation in each MC trial. Algorithm implementations are detailed in
		Appendix~\ref{sec:S2}, and all
		algorithms in this experiment are implemented in MATLAB.

		\subsubsection{Evaluation metrics}
		
		Accuracy is evaluated by the average total variation (ATV),
		\begin{equation*}\label{Eq:MC_ATV}
			\!\!\mathrm{ATV} \!\!=\!\!
			\frac{1}{N_{\mathrm{MC}}}\sum_{m=1}^{N_{\mathrm{MC}}}\!
			\frac{1}{2n_x}\sum_{i=1}^{n_x}  \!\! \sum_{x_i}\!
			\left|P^{(m)}(x_i)\!-\!q^{(m)}(x_i)\right|,
		\end{equation*}
		where $P^{(m)}(x_i)$ and $q^{(m)}(x_i)$ denote the exact marginal from the
		junction tree algorithm and the approximate marginal from the evaluated algorithm in the
		$m$-th trial, respectively. Computational cost is assessed using the average number of iterations and the average message-update time (Comp), as defined in
		Section~\ref{sec:GauMetrics}.
		
		\subsubsection{Results and analysis}
		The results for the Ising factor graphs constructed on the topologies of the IEEE 14-bus and IEEE 118-bus systems are presented in Fig.~\ref{ATV}. 
		
		\begin{figure}[!h]
			\centering
			\includegraphics[width=0.47\textwidth]{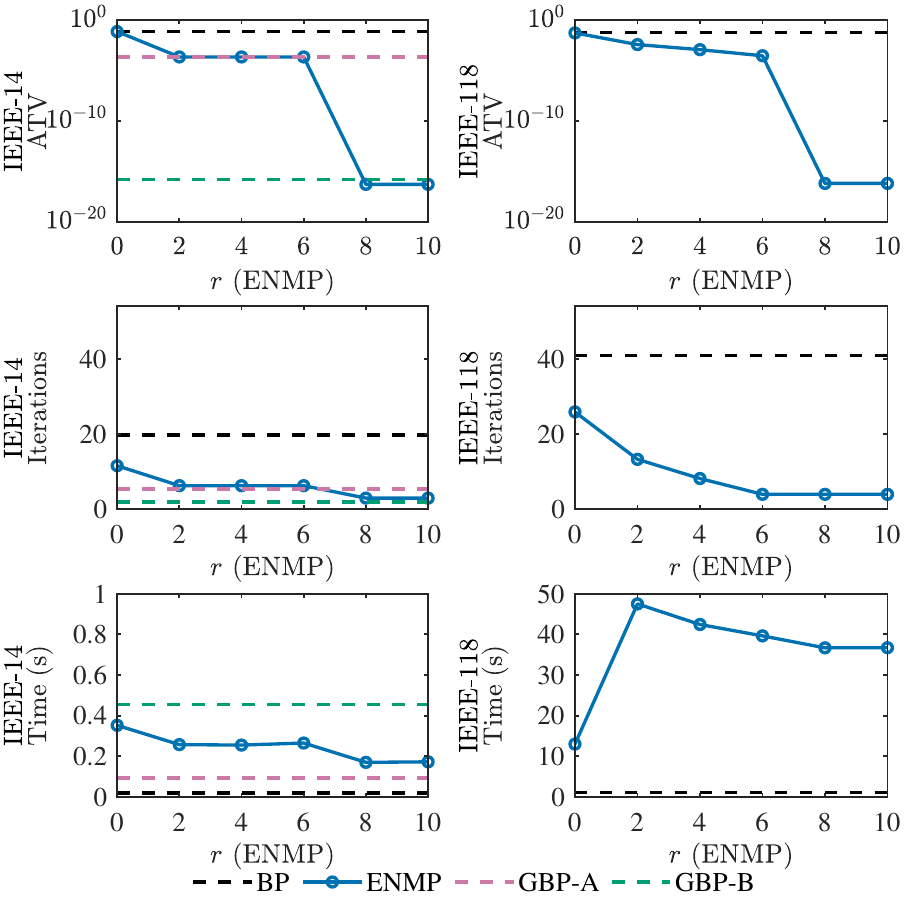}
			\caption{The ATV, the average number of iterations 
				required for convergence, and the Comp
				of BP, ENMP, and GBP on the Ising factor 
				graphs constructed on the topologies of the IEEE 
				14-bus and IEEE 118-bus systems. GBP-A and GBP-B 
				are included only on the IEEE 14-bus system.}
			\label{ATV}
		\end{figure}

		The ATV results are presented at the top of Fig.~\ref{ATV}.
		For ENMP, the ATV decreases substantially as 
		$r$ increases on both test systems. 
		On the IEEE 14-bus system, the ATV decreases 
		from approximately $10^{-1}$ at $r=0$ and 
		reaches approximately $10^{-16}$ at $r \geq 8$, 
		achieving exact marginal inference. 
		On the IEEE 118-bus system, ENMP exhibits a 
		similar trend, with the ATV reaching 
		approximately $10^{-16}$ at $r \geq 8$.
		For GBP, on the IEEE 14-bus system, GBP-A 
		yields ATV approximately $10^{-4}$, comparable 
		to ENMP at $r \in \{2,4,6\}$, while GBP-B 
		achieves exact marginal inference with ATV 
		approximately $10^{-16}$, comparable to ENMP 
		at $r \geq 8$. 
		This difference demonstrates that the marginal 
		accuracy of GBP depends on the selected 
		region-graph configuration.
		For BP, the ATV remains above that 
		of ENMP at all tested values of~$r$ on both 
		test systems.

		The average number of iterations is shown in 
		the middle panels of Fig.~\ref{ATV}. 
		For ENMP, the number of iterations decreases as 
		$r$ increases, from approximately $10$ to $3$ 
		on the IEEE 14-bus system and from 
		approximately $25$ to $4$ on the IEEE 118-bus 
		system. Across all tested values of $r$, ENMP 
		requires fewer iterations than BP. 
		On the IEEE 14-bus system, GBP-A and GBP-B 
		also require fewer iterations than BP 
		under the tested region-graph configurations.
		
		The Comp is shown in the lower panels of Fig.~\ref{ATV}. The Pre is not further
		discussed here, since its dependence on $r$ has already been examined in the
		Gaussian experiments. For ENMP, on the IEEE 14-bus system, the Comp shows a
		generally decreasing trend as $r$ increases. On the IEEE 118-bus system, it first
		increases to a peak at $r=2$ and then decreases as $r$ increases, since the
		reduction in the number of iterations required for convergence dominates the
		increase in per-message cost at larger~$r$. For GBP, on the IEEE 14-bus system,
		the Comp varies with the choice of region-graph configuration: GBP-A yields a
		higher Comp than ENMP, whereas GBP-B yields a lower Comp. For BP, the
		Comp is the lowest among all methods on both systems.
		
		Overall, these experiments validate the applicability of ENMP to discrete factor graphs and demonstrate that adjusting $r$ provides a tunable
		accuracy--complexity trade-off and enables exact marginal inference without manual
		region-graph construction.

		\section{Conclusion}
		\label{Sec:discussion}
		In this work, we proposed the ENMP method for marginal inference in
		loopy factor graphs. 
		ENMP automatically constructs extended neighborhoods from
		the bipartite factor-graph topology to capture loop-induced
		correlations, without requiring the manual design or
		application-specific heuristics typically used for region-graph
		construction in region-based methods such as GBP.
		A single parameter~$r$ trades off
		accuracy against complexity: in the exact regime, the message
		recursion converges to its unique fixed point in finitely many
		iterations and yields the exact marginals, 
		which for Gaussian factor graphs gives exact variances and resolves the
		long-standing inaccuracy of variance estimates in loopy Gaussian BP;
		in the approximate regime, the experiments show that
		the approximate marginals generally become more accurate as~$r$ increases.
		Experiments on power-grid benchmarks confirm these properties for
		Gaussian factor graphs, with ENMP attaining exact marginal inference
		on several systems and matching or outperforming the tested
		baselines, including a manually designed GBP region graph on the
		IEEE 14-bus system. Experiments on discrete Ising
		factor graphs confirm that ENMP extends to discrete variables and
		achieves exact marginals as~$r$ increases.

\appendix
\numberwithin{equation}{section}
\numberwithin{figure}{section}
\numberwithin{table}{section}
\numberwithin{algorithm}{section}
\numberwithin{lemma}{section}
\numberwithin{theorem}{section}
\numberwithin{corollary}{section}

\makeatletter
\let\ENMPorigseccntformat\@seccntformat
\renewcommand{\@seccntformat}[1]{%
  \ifstrequal{#1}{section}
    {Appendix~\thesection\quad}
    {\ENMPorigseccntformat{#1}}%
}
\makeatother

\section{Proof of Lemma~\ref{lemma:exact_separation}}
	\label{appendix:exact_separation}
	
	Before proving Lemma~\ref{lemma:exact_separation}, we state an
	auxiliary lemma used in the subsequent proofs. Recall that the
	superscript $r$ is omitted throughout this appendix. 
	For a subgraph $\mathbb H$, a walk $w$, a path $p$, and a
	loop $C$, $V(\mathbb H)$, $V(w)$, $V(p)$, and $V(C)$ denote
	their node sets, respectively.
	
	\vspace{0.8\baselineskip}
	
	\begin{lemma}\label{lemma:cycle_absorption}
		Consider the factor graph
		$\mathbb G=(\mathcal X,\mathcal F,\mathcal E)$.
		For any variable node $x_i\in\mathcal X$, let $N_i$ be its
		extended $r$-neighborhood and $V(N_i)$ its node set.
		If a loop $C$ in $\mathbb G$ satisfies
		$|C|\le r+2$ with $r\ge2$ and
		$V(C)\cap V(N_i)\cap\mathcal F\ne\emptyset$, then
		$V(C)\subseteq V(N_i)$.
	\end{lemma}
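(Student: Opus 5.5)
Let $f_a\in V(C)\cap V(N_i)\cap\mathcal F$. The plan is to exploit the closure property of $N_i$ from Definition~\ref{definition3}: the final set $\mathcal S_i^{(K)}$ is a fixed point of $T$. Hence for every factor node $f_b\in V(N_i)$ we have $V(M_{f_b}^{(r)})\subseteq V(N_i)$. It therefore suffices to show that $V(C)\subseteq V(M_{f_a}^{(r)})$.

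Write the loop as $C=(v_0,v_1,\ldots,v_\ell)$ with $v_0=v_\ell=f_a$ and $\ell=|C|\le r+2$. Because $\mathbb G$ is bipartite, $\ell$ is even and $\ell\ge4$; the requirement $\ell\ge3$ in the loop definition excludes degenerate back-and-forth walks. Then $v_1$ and $v_{\ell-1}$ are variable nodes in $\mathcal N_{f_a}$, and they are distinct because the nodes $v_0,\ldots,v_{\ell-1}$ are pairwise distinct. Removing $f_a$ from $C$ leaves the sequence $p=(v_1,v_2,\ldots,v_{\ell-1})$. Consecutive nodes of $p$ are adjacent, its nodes are pairwise distinct, and none of them equals $f_a$. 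So $p$ is a path in $\mathbb G[(\mathcal X\cup\mathcal F)\setminus\{f_a\}]$ between two distinct neighbors of $f_a$, with length $\ell(p)=\ell-2\le r$. Thus $p\in\mathcal P_{f_a}^{(r)}$, and $V(C)=\{f_a\}\cup V(p)\subseteq V(M_{f_a}^{(r)})$.

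Combining the two steps gives $V(C)\subseteq V(M_{f_a}^{(r)})\subseteq T(\mathcal S_i^{(K)})=\mathcal S_i^{(K)}=V(N_i)$, which is the claim.

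The main point needing care is the closure argument. Definition~\ref{definition3} stops at the first $K$ with $\mathcal S_i^{(K+1)}=\mathcal S_i^{(K)}$, and I will spell out that this equality is exactly $T$-invariance: every factor node in $\mathcal S_i^{(K)}$ has its preliminary neighborhood already contained in the set. The same invariance also holds for the output of Algorithm~\ref{alg1}, since every factor present when the loop terminates has been visited and its preliminary neighborhood added. The hypothesis $r\ge2$ plays no essential role beyond guaranteeing $\ell\ge4$ is admissible; I will note that it is consistent with the bipartite length parity.
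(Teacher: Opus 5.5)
Your proposal is correct and is essentially the paper's own argument: closure of $N_i$ under $T$ (Definition~\ref{definition3}) gives $V(M_{f_a}^{(r)})\subseteq V(N_i)$, and deleting $f_a$ from $C$ leaves a path of length $|C|-2\le r$ between two distinct neighbors of $f_a$, so this path lies in $\mathcal P_{f_a}^{(r)}$. Your extra care about rotating the loop to start at $f_a$, the bipartite parity, and why $\mathcal S_i^{(K)}$ is $T$-invariant only spells out what the paper's proof uses implicitly.
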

	
	\vspace{-0.8\baselineskip}
	
	\begin{pf}
		Choose $f_b\in V(C)\cap V(N_i)\cap\mathcal F$. Since
		$f_b\in V(N_i)\cap\mathcal F$,
		Definition~\ref{definition3} gives
		$V(M_{f_b})\subseteq V(N_i)$.
		Thus, it suffices to show that
		$V(C)\subseteq V(M_{f_b})$.
		
		Since $\mathbb G$ is bipartite, the two neighbors of $f_b$ on $C$
		are two distinct variable nodes, denoted by $x_k$ and $x_q$
		(Fig.~\ref{fig:proof_1}). Removing $f_b$ and its two incident
		edges from $C$ leaves a path $p$ from $x_k$ to $x_q$ with
		$\ell(p)=|C|-2\le r$. Since
		$x_k,x_q\in\mathcal N_{f_b}$, $x_k\ne x_q$, and $p$ does not
		pass through $f_b$, Definition~\ref{definition1} gives
		$p\in\mathcal P_{f_b}$ and hence
		$V(p)\subseteq V(M_{f_b})$.
		
		Finally, since $f_b\in V(M_{f_b})$, we have
		$V(C)=V(p)\cup\{f_b\}\subseteq V(M_{f_b})
		\subseteq V(N_i)$. This proves the lemma.
		\hfill$\square$
	\end{pf}
	
	\vspace{-0.8\baselineskip}
	
	\begin{figure}[!htp]
		\centering
		\includegraphics[width=0.47\textwidth]{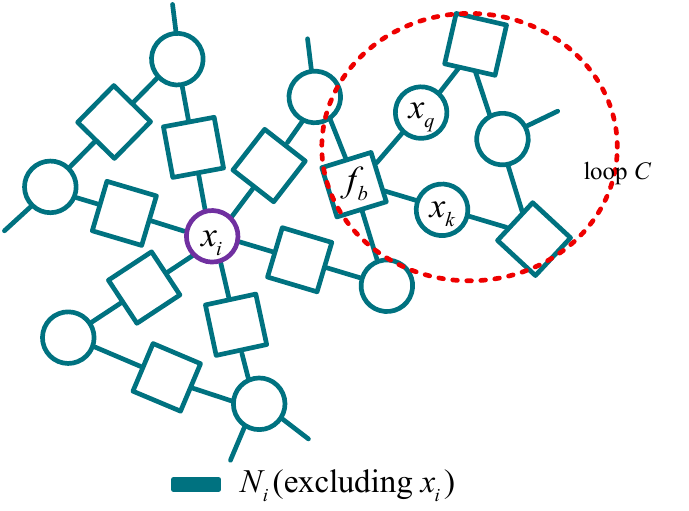}
		\caption{Extended neighborhood $N_i$ and loop $C$, with $f_b\in V(C)\cap V(N_i)\cap\mathcal F$ and its neighbors $x_k,x_q$ on $C$.}
		\label{fig:proof_1}
	\end{figure}

	We next prove Lemma~\ref{lemma:exact_separation} by establishing
	(P1), (P2), (P3) and (P4) one by one.
	Since $\mathbb G$ is a bipartite factor graph, every loop has even length which is at least four. Hence, whenever $\mathbb G$ contains a loop, the length of the longest loop  in $\mathbb{G}$ satisfies $\mathcal L_{\max}\ge 4$. Under the hypothesis $r\ge \mathcal L_{\max}-2$ of Lemma~\ref{lemma:exact_separation}, it follows that $r\ge 2$, so Lemma~\ref{lemma:cycle_absorption} can be applied.

	\subsection{Proof of (P1)}
	
	We prove that
	$V(N_{j\setminus i})\cap V(N_i)=\{x_j\}$.
	By Definition~\ref{definition4},
	$\Delta_{j\setminus i}
	=\{x_j\}\cup
	(V(N_j)\setminus V(N_i))$, and hence
	$\Delta_{j\setminus i}\cap V(N_i)=\{x_j\}$.
	Moreover,
	\[
	V(N_{j\setminus i})
	=
	\Delta_{j\setminus i}
	\cup
	\bigcup_{f_a\in\Delta_{j\setminus i}\cap\mathcal F}
	\mathcal N_{f_a}.
	\]
	It therefore remains to show that
	$\mathcal N_{f_a}\cap V(N_i)\subseteq\{x_j\}$ for every
	$f_a\in\Delta_{j\setminus i}\cap\mathcal F$.
	
	Suppose, to the contrary, that there exists a factor node
	$f_a\in\Delta_{j\setminus i}\cap\mathcal F$ having a neighbor
	$x_k\in V(N_i)$ with $x_k\ne x_j$. Since
	$f_a\in\Delta_{j\setminus i}\cap\mathcal F$, we have
	$f_a\in V(N_j)\setminus V(N_i)$. Since $N_i$ is
	connected and $x_k,x_j\in V(N_i)$, there is a path in $N_i$
	from $x_k$ to $x_j$. Let $f_b\in V(N_i)$ be the factor node
	adjacent to $x_k$ on this path (Fig.~\ref{fig:proof_2}).
	
	\begin{figure}[!htp]
		\centering
		\includegraphics[width=0.47\textwidth]{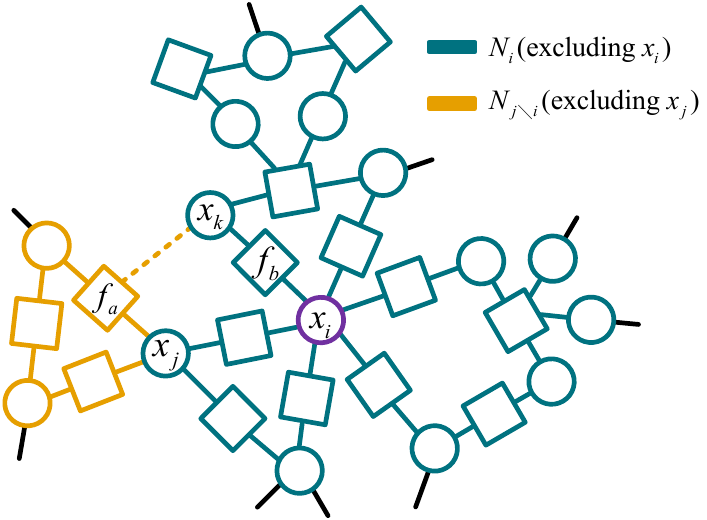}
		\caption{The hypothetical edge $x_k\!-\!f_a$, with $f_b$ on
			the path in $N_i$ from $x_k$ to $x_j$.}
		\label{fig:proof_2}
	\end{figure}
	
	
	Remove $x_k$ and its incident edges from $\mathbb G$. In the
	remaining graph, let $\mathbb G_a$ be the subgraph consisting of
	all nodes connected to $f_a$ by a path, together with all edges
	between these nodes. Define $\mathbb G_b$ analogously with
	respect to $f_b$. Then either
	$\mathbb G_a=\mathbb G_b$ or
	$V(\mathbb G_a)\cap V(\mathbb G_b)=\emptyset$.

	\emph{Case 1: $\mathbb G_a=\mathbb G_b$.}
	There is a path from $f_a$ to $f_b$ in
	$\mathbb G\setminus\{x_k\}$. Together with the edges
	$(f_a,x_k)$ and $(x_k,f_b)$, it forms a loop $C$ satisfying
	$|C|\le\mathcal L_{\max}\le r+2$. Since
	$f_b\in V(N_i)$, Lemma~\ref{lemma:cycle_absorption} gives
	$V(C)\subseteq V(N_i)$ and thus
	$f_a\in V(N_i)$, a contradiction.
	
	\emph{Case 2: $\mathbb G_a\ne\mathbb G_b$.}
	After $x_k$ and its incident edge $(x_k,f_b)$ being removed from
	the path in $N_i$ from $x_k$ to $x_j$ introduced above,
	the remaining nodes and edges form a path from
	$f_b$ to $x_j$. Hence, $x_j\in V(\mathbb G_b)$.
	Every neighbor of $x_j$ remains connected to $x_j$ after
	$x_k$ is removed and therefore belongs to
	$V(\mathbb G_b)$.
	Hence, $\mathcal N_{x_j}\subseteq V(\mathbb G_b)$.
	Since $x_k$ separates $V(\mathbb G_b)$ from all nodes outside
	$V(\mathbb G_b)\cup\{x_k\}$, every path used to construct $M_j$
	is contained in $V(\mathbb G_b)\cup\{x_k\}$. Therefore,
	$V(M_j)\subseteq V(\mathbb G_b)\cup\{x_k\}$.
	
	The same argument gives
	$V(M_{f_c})\subseteq V(\mathbb G_b)\cup\{x_k\}$ for every
	factor node $f_c\in V(\mathbb G_b)$. Starting from $V(M_j)$,
	each node-set expansion in Definition~\ref{definition3}
	therefore includes only nodes in
	$V(\mathbb G_b)\cup\{x_k\}$. By induction,
	$V(N_j)\subseteq V(\mathbb G_b)\cup\{x_k\}$.
	
	Since $\mathbb G_a$ and $\mathbb G_b$ have disjoint node sets,
	$f_a\in V(\mathbb G_a)$ implies
	$f_a\notin V(\mathbb G_b)\cup\{x_k\}$. Together with
	$V(N_j)\subseteq V(\mathbb G_b)\cup\{x_k\}$, this gives
	$f_a\notin V(N_j)$, contradicting
	$f_a\in V(N_j)$.
	
	Both cases give contradictions. Thus,
	$\mathcal N_{f_a}\cap V(N_i)\subseteq\{x_j\}$ for every
	$f_a\in\Delta_{j\setminus i}\cap\mathcal F$, and therefore
	$V(N_{j\setminus i})\cap V(N_i)=\{x_j\}$.	
	\hfill$\square$

	\subsection{Proof of (P2)}
	\noindent\textit{Part 1:
		$\bm D_{j\setminus i}$ induces a connected subgraph of $\mathbb G$.}
	
	It suffices to show that every residual neighborhood appearing in
	$\bm D_{j\setminus i}$ is connected and that successive residual
	neighborhoods have a nonempty intersection. We first prove that
	$N_{j\setminus i}$ is connected.
	
	By Definitions~\ref{definition3} and~\ref{definition4},
	$V(N_{j\setminus i})\subseteq V(N_j)$. Indeed,
	$\Delta_{j\setminus i}\subseteq V(N_j)$, and the neighbors of every
	factor node in $\Delta_{j\setminus i}$ also belong to $V(N_j)$.
	
	Let $u\in V(N_{j\setminus i})\setminus\{x_j\}$. Since $N_j$
	is connected, there is a path
	\[
	p=(v_0,v_1,\ldots,v_t),\qquad v_0=u,\quad v_t=x_j,
	\]
	in $N_j$. By (P1), $u\notin V(N_i)$. Let $v_s$ be the first
	node on $p$ that belongs to $V(N_i)$.
	
	If $v_s$ were a factor node, Definition~\ref{definition3} would
	give $v_{s-1}\in\mathcal N_{v_s}\subseteq V(N_i)$,
	contradicting the choice of $v_s$. Hence $v_s$ is a variable node.
	Since $\mathbb G$ is bipartite, $v_{s-1}$ is a factor node.
	Moreover,
	\[
	v_{s-1}\in
	V(N_j)\setminus V(N_i)
	\subseteq\Delta_{j\setminus i}.
	\]
	Definition~\ref{definition4} then gives
	$v_s\in V(N_{j\setminus i})$. By (P1), $v_s=x_j$, and hence
	$s=t$.
	The resulting path is illustrated in
	Fig.~\ref{fig:proof_2_1}.
	
	\begin{figure}[!htp]
		\centering
		\includegraphics[width=0.47\textwidth]{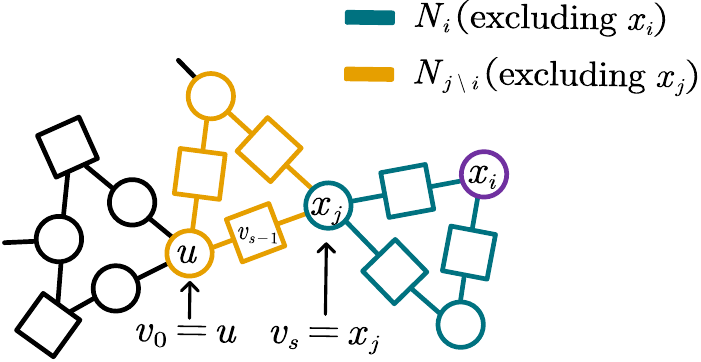}
		\caption{Illustration of the path $p$ from $v_0=u$ to
			$v_s=x_j$. The factor node $v_{s-1}$ is the last node on $p$
			outside $N_i$.}
		\label{fig:proof_2_1}
	\end{figure}

	It follows that $v_0,\ldots,v_{t-1}$ belong to
	$V(N_j)\setminus V(N_i)$, while
	$v_t=x_j$. Therefore,
	\[
	V(p)\subseteq\Delta_{j\setminus i}
	\subseteq V(N_{j\setminus i}).
	\]
	Thus, every node in $N_{j\setminus i}$ is connected to $x_j$
	within $N_{j\setminus i}$, so $N_{j\setminus i}$ is connected.
	
	The same argument applies to every residual neighborhood appearing
	in $\bm D_{j\setminus i}$. Moreover, if
	$x_k\in\bm x_{N_{j\setminus i}\setminus j}$, then
	\[
	x_k\in
	V(N_{j\setminus i})
	\cap
	V(N_{k\setminus j}).
	\]
	Thus, successive residual neighborhoods are connected and have a
	nonempty intersection. Since the union of two connected subgraphs
	with a nonempty intersection is connected, repeated application
	of this fact shows that $\bm D_{j\setminus i}$ induces a connected
	subgraph of $\mathbb G$.
	
	\medskip
	\noindent\textit{Part 2:
		$\bm D_{j\setminus i}\cap V(N_i)=\{x_j\}$.}
	
	Let $x_i\in\mathcal X$ and
	$x_j\in\bm x_{N_i\setminus i}$. By (P1),
	$V(N_{j\setminus i})\cap V(N_i)=\{x_j\}$.
	
	Suppose, to the contrary, that a residual neighborhood in
	$\bm D_{j\setminus i}$ intersects
	$V(N_i)\setminus\{x_j\}$. 
	By~\eqref{Eq:Dji_def}, choose $m\ge2$ and variable nodes
	$x_{s_0},x_{s_1},\ldots,x_{s_m}$, with
	$x_{s_0}=x_i$ and $x_{s_1}=x_j$, such that
	\[
	x_{s_{\ell+1}}\in
	\bm x_{N_{s_\ell\setminus s_{\ell-1}}\setminus s_\ell},
	\qquad \ell=1,2,\ldots,m-1,
	\]
	and $N_{s_m\setminus s_{m-1}}$ is the first corresponding
	residual neighborhood that intersects
	$V(N_i)\setminus\{x_j\}$. Thus,
	\[
	\begin{aligned}
		V(N_{s_m\setminus s_{m-1}})
		\cap(V(N_i)\setminus\{x_j\})&\ne\emptyset,\\
		V(N_{s_\ell\setminus s_{\ell-1}})
		\cap(V(N_i)\setminus\{x_j\})&=\emptyset,
		\quad \ell<m.
	\end{aligned}
	\]
	We now use these residual neighborhoods to construct a loop $C$
	satisfying $|C|\le\mathcal L_{\max}\le r+2$ that contains a
	factor node in $N_i$ and a node outside $N_i$. The factor node
	in $N_i$ allows Lemma~\ref{lemma:cycle_absorption} to be applied,
	which gives $V(C)\subseteq V(N_i)$. This contradicts the
	existence of a node of $C$ outside $V(N_i)$.

	We first construct a path $p$ from $x_j$ to a node
	$w\in V(N_i)\setminus\{x_j\}$ such that
	$V(p)\cap V(N_i)=\{x_j,w\}$.
	By Definition~\ref{definition4}, successive residual
	neighborhoods meet at $x_{s_2},x_{s_3}\ldots,x_{s_m}$. Since every
	residual neighborhood is connected by Part~1, concatenating
	paths within these residual neighborhoods gives a walk from
	$x_j$ to $x_{s_m}$ through
	$x_{s_2},x_{s_3},\ldots,x_{s_m}$.
	Since $N_{s_m\backslash s_{m-1}}$ is connected and intersects
	$V(N_i)\setminus\{x_j\}$, there is a path in
	$N_{s_m\backslash s_{m-1}}$ from $x_{s_m}$ to a node of
	$V(N_i)\setminus\{x_j\}$. By the choice of $m$,
	$x_{s_m}\notin V(N_i)\setminus\{x_j\}$. Let $w$ be the first
	node on this path, starting from $x_{s_m}$, that belongs to
	$V(N_i)\setminus\{x_j\}$, and let $v$ be the node immediately
	preceding $w$ on this path.
	By the choice of $w$, $v\notin V(N_i)$. If $w$ were a factor
	node, then $w\in V(N_i)\cap\mathcal F$. By
	Definition~\ref{definition3}, every neighbor of $w$ would belong
	to $V(N_i)$. Since $v\in\mathcal N_w$, this would give
	$v\in V(N_i)$, a contradiction. Hence, $w$ is a variable node.
	Combining these two parts gives a walk
	$\widetilde p$ from $x_j$ to $w$. By the choice of $m$ and $w$,
	\[
	V(\widetilde p)\cap V(N_i)=\{x_j,w\}.
	\]
	Let $p$ be a walk of minimum length from $x_j$ to $w$ whose
	nodes all belong to $V(\widetilde p)$.
	By its minimality, $p$ has no repeated nodes and hence is a path.
	Therefore,
	\[
	V(p)\cap V(N_i)=\{x_j,w\}.
	\]
	\begin{figure}[!htp]
		\centering
		\includegraphics[width=0.47\textwidth]{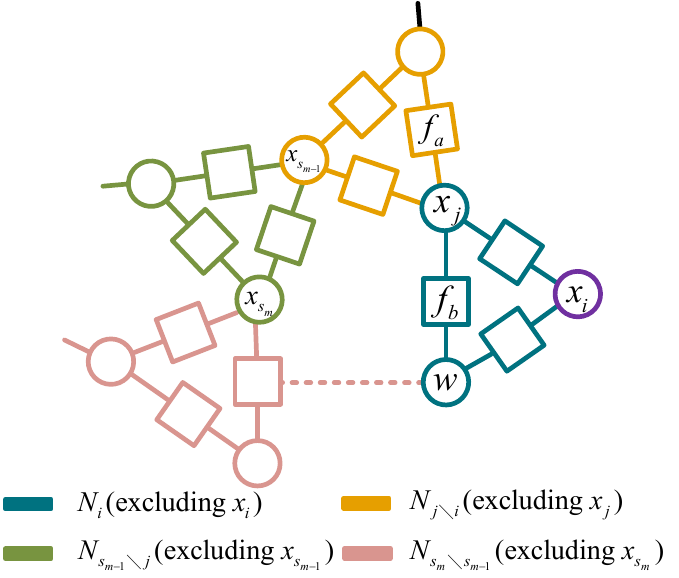}
		\caption{The path $p$ from $x_j$ to the hypothetical common
			node $w$ through successively generated residual
			neighborhoods.}
		\label{fig:proof_3}
	\end{figure}
	We next construct a path $q$ from $x_j$ to $w$ within $N_i$.
	Such a path exists because $N_i$ is connected and contains both
	$x_j$ and $w$. Since
	$V(p)\cap V(N_i)=\{x_j,w\}$ and
	$V(q)\subseteq V(N_i)$, the paths $p$ and $q$ intersect only
	at $x_j$ and $w$. Hence, their union forms a loop $C$ satisfying
	$|C|\le\mathcal L_{\max}\le r+2$.
	
	Since $w\ne x_j$, the path $q$ contains at least one edge.
	Because $\mathbb G$ is bipartite and $x_j$ is a variable node,
	$q$ contains a factor node
	$f_b\in V(N_i)\cap\mathcal F$. Since $V(q)\subseteq V(C)$, we have
	$f_b\in V(C)\cap V(N_i)\cap\mathcal F$. Therefore,
	Lemma~\ref{lemma:cycle_absorption} gives
	$V(C)\subseteq V(N_i)$.
	
	However, $x_j$ and $w$ are distinct variable nodes. Since
	$\mathbb G$ is bipartite, the path $p$ contains at least one
	factor node and therefore at least one node other than
	$x_j$ and $w$.
	The relation
	$V(p)\cap V(N_i)=\{x_j,w\}$ shows that every such node lies
	outside $V(N_i)$. Since $V(p)\subseteq V(C)$, this contradicts
	$V(C)\subseteq V(N_i)$.

	Hence, the assumed residual neighborhood
	$N_{s_m\setminus s_{m-1}}$ does not exist. Together with (P1)
	and~\eqref{Eq:Dji_def}, this gives
	$\bm D_{j\setminus i}\cap V(N_i)=\{x_j\}$.
	
	\medskip
	\noindent\textit{Part 3:
		$\bm D_{j_1\setminus i}\cap\bm D_{j_2\setminus i}
		=\emptyset$.}
	
	Let $x_{j_1},x_{j_2}\in\bm x_{N_i\setminus i}$ with
	$x_{j_1}\ne x_{j_2}$. By Part~2,
	$\bm D_{j_1\setminus i}\cap V(N_i)=\{x_{j_1}\}$, and
	$\bm D_{j_2\setminus i}\cap V(N_i)=\{x_{j_2}\}$.
	
	Suppose, to the contrary, that there is a node
	$v_z\in\bm D_{j_1\setminus i}\cap\bm D_{j_2\setminus i}$. If
	$v_z\in V(N_i)$, Part~2 gives both $v_z=x_{j_1}$ and
	$v_z=x_{j_2}$, contradicting $x_{j_1}\ne x_{j_2}$. Hence
	$v_z\notin V(N_i)$ (Fig.~\ref{fig:proof_4}).
	
	\begin{figure}[!htp]
		\centering
		\includegraphics[width=0.47\textwidth]{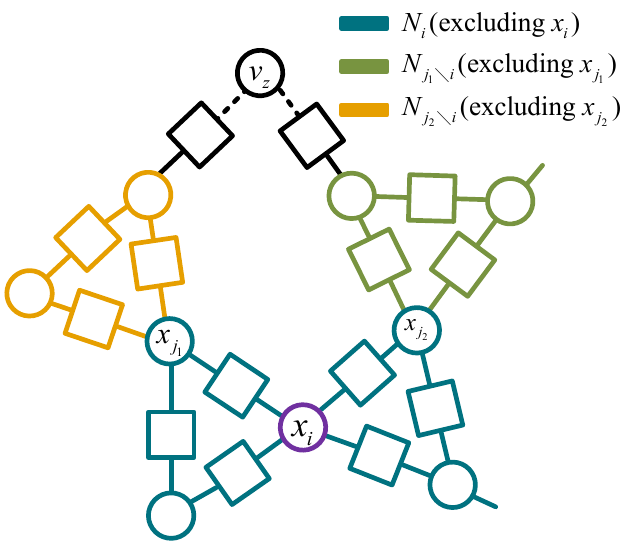}
		\caption{The hypothetical common node $v_z$ of
			$\bm D_{j_1\setminus i}$ and $\bm D_{j_2\setminus i}$.}
		\label{fig:proof_4}
	\end{figure}

	By Part~1, each of
	$\bm D_{j_1\setminus i}$ and $\bm D_{j_2\setminus i}$ induces a
	connected subgraph of $\mathbb G$. Under the assumption that they
	share $v_z$, there is a path from $x_{j_1}$ to $v_z$ within
	$\bm D_{j_1\setminus i}$ and a path from $v_z$ to $x_{j_2}$
	within $\bm D_{j_2\setminus i}$. Hence, their union contains a path
	$\tau$ from $x_{j_1}$ to $x_{j_2}$.

	Since $N_i$ is connected and contains $x_{j_1}$ and $x_{j_2}$,
	there is also a path $\rho$ in $N_i$ between them. The paths
	$\tau$ and $\rho$ meet only at $x_{j_1}$ and $x_{j_2}$, so their
	union forms a loop $C$ satisfying
	$|C|\le\mathcal L_{\max}\le r+2$.
	
	The path $\rho$ connects two distinct variable nodes and therefore
	contains a factor node
	$f_b\in V(N_i)\cap\mathcal F$. By
	Lemma~\ref{lemma:cycle_absorption},
	$V(C)\subseteq V(N_i)$. Since
	$V(\tau)\subseteq V(C)$, this contradicts
	$V(\tau)\cap V(N_i)=\{x_{j_1},x_{j_2}\}$.
	
	Therefore,
	$\bm D_{j_1\setminus i}\cap\bm D_{j_2\setminus i}=\emptyset$.
	\hfill$\square$
	
	\subsection{Proof of (P3)}
	
	Let $x_{j_1},x_{j_2},\ldots,x_{j_n}$ satisfy the conditions in
	(P3), with $j_0=i$. We prove that
	$x_i,x_{j_1},\ldots,x_{j_n}$ are pairwise distinct.
	Suppose, to the contrary, that
	$x_{j_s}=x_{j_t}$ for some $0\le t<s\le n$.

	We first show that $s\ge t+2$. Since
	$x_{j_1}\in\bm x_{N_i\setminus i}$, we have
	$x_{j_1}\ne x_i$. Similarly,
	$x_{j_{\ell+1}}\in
	\bm x_{N_{j_\ell\setminus j_{\ell-1}}\setminus j_\ell}$
	implies $x_{j_{\ell+1}}\ne x_{j_\ell}$ for
	$\ell=1,2,\ldots,n-1$. Thus, successive variables are distinct.
	If $s=t+1$, then $x_{j_s}\ne x_{j_t}$, contradicting the
	assumption. Therefore, $s\ge t+2$.
	
	We next show that
	$x_{j_s}=x_{j_t}\in\bm D_{j_{t+1}\setminus j_t}\cap
	V(N_{j_t})$. For $\ell=t+1,\ldots,s$,
	$N_{j_\ell\setminus j_{\ell-1}}$ is generated recursively from
	$N_{j_{t+1}\setminus j_t}$. Hence,
	$V(N_{j_\ell\setminus j_{\ell-1}})
	\subseteq\bm D_{j_{t+1}\setminus j_t}$ by~\eqref{Eq:Dji_def}.
	Taking $\ell=s$ and using
	$x_{j_s}\in V(N_{j_s\setminus j_{s-1}})$ gives
	$x_{j_s}\in\bm D_{j_{t+1}\setminus j_t}$. Moreover,
	$x_{j_s}=x_{j_t}$ and $x_{j_t}\in V(N_{j_t})$ give
	$x_{j_s}\in V(N_{j_t})$. Therefore,
	$x_{j_s}\in\bm D_{j_{t+1}\setminus j_t}\cap
	V(N_{j_t})$.
	
	We now derive a contradiction from (P2). Applying (P2) with
	$i=j_t$ and $j=j_{t+1}$ yields
	$\bm D_{j_{t+1}\setminus j_t}\cap V(N_{j_t})
	=\{x_{j_{t+1}}\}$. Since
	$x_{j_s}\in\bm D_{j_{t+1}\setminus j_t}\cap
	V(N_{j_t})$, it follows that
	$x_{j_s}=x_{j_{t+1}}$. Together with
	$x_{j_s}=x_{j_t}$, this implies
	$x_{j_t}=x_{j_{t+1}}$, contradicting
	$x_{j_{t+1}}\ne x_{j_t}$.
	
	Hence, no indices $0\le t<s\le n$ satisfy
	$x_{j_s}=x_{j_t}$. Thus,
	$x_i,x_{j_1},\ldots,x_{j_n}$ are pairwise distinct.
	Consequently, every sequence of successively generated residual
	neighborhoods contains no repeated variable, and the recursive
	expansion in~\eqref{Eq:recursive_layers} is acyclic.

	If $\mathcal T_i^{(\ell)}\ne\emptyset$, then
	\eqref{Eq:recursive_layers} gives $\ell-1$ successive
	generations from $N_i$ to a neighborhood in
	$\mathcal T_i^{(\ell)}$. These generations determine the
	variables $x_i,x_{j_1},\ldots,x_{j_{\ell-1}}$, where
	$N_i\in\mathcal T_i^{(1)}$ and
	$N_{j_k\setminus j_{k-1}}\in\mathcal T_i^{(k+1)}$ for
	$k=1,2,\ldots,\ell-1$, with $j_0=i$. By (P3), these $\ell$
	variables are pairwise distinct, so $\ell\le|\mathcal X|$.
	Therefore, if $\mathcal T_i^{(K)}$ is the last nonempty layer,
	then $K\le|\mathcal X|$ and
	$\mathcal T_i^{(K+1)}=\emptyset$.
	\hfill$\square$
	
	\subsection{Proof of (P4)}
	
	Let $x_i\in\mathcal X$, and consider the recursive expansion
	starting from $N_i$ defined in~\eqref{Eq:recursive_layers}. By
	(P3), this expansion terminates after $K$ layers.
	
	By~\eqref{Eq:fai_1}, proving~\eqref{Eq:P_S_1} reduces to showing
	that every factor node in $\mathcal F$ belongs to exactly one
	factor-node set $\bm f_N$, where
	$N\in\bigcup_{\ell=1}^{K}\mathcal T_i^{(\ell)}$. Equivalently,
	we prove
	\[
	\mathcal F
	=
	\bigcup_{\ell=1}^{K}
	\bigcup_{N\in\mathcal T_i^{(\ell)}}\bm f_N
	\]
	and
	$\bm f_N\cap\bm f_{N'}=\emptyset$ for all distinct
	$N,N'\in\bigcup_{\ell=1}^{K}\mathcal T_i^{(\ell)}$.
	
	\medskip
	\noindent\textit{Part 1:
		Every $f_l\in\mathcal F$ belongs to at least one $\bm f_N$.}
	
	For the given variable node $x_i$, let
	\begin{equation}\label{Eq:Ui_hierarchy}
		\bm U_i
		=
		\bigcup_{\ell=1}^{K}
		\bigcup_{N\in\mathcal T_i^{(\ell)}}
		V(N)
	\end{equation}
	denote the union of the node sets of all neighborhoods in the
	recursive expansion starting from $N_i$. Since
	$\bm f_N=V(N)\cap\mathcal F$, it is sufficient to prove that
	$
	\bm U_i=\mathcal X\cup\mathcal F$,
	that is, $\bm U_i$ contains every node of $\mathbb G$.
	
	First, we establish
	\begin{equation}\label{Eq:Na_in_Ui}
		x_a\in\bm U_i\cap\mathcal X
		\quad\Longrightarrow\quad
		V(N_a)\subseteq\bm U_i.
	\end{equation}
	The result is immediate for $x_a=x_i$, since
	$N_i\in\mathcal T_i^{(1)}$. 
	Now consider
	$x_a\in\bm U_i\cap\mathcal X$ with $x_a\ne x_i$.
	By~\eqref{Eq:Ui_hierarchy} and~\eqref{Eq:recursive_layers},
	there exist $m\ge1$ and variables
	$x_{a_0},x_{a_1},\ldots,x_{a_m}$, with $a_0=i$ and $a_m=a$,
	such that
	\[
	N_{a_s\setminus a_{s-1}}
	\in\mathcal T_i^{(s+1)},
	\qquad s=1,2,\ldots,m,
	\]
	and these residual neighborhoods are generated successively.
	For each $s$, \eqref{Eq:Ui_hierarchy} and
	Definition~\ref{definition4} give
	\[
	V(N_{a_s\setminus a_{s-1}})\subseteq\bm U_i,
	\quad
	V(N_{a_s})
	\subseteq
	V(N_{a_{s-1}})
	\cup
	V(N_{a_s\setminus a_{s-1}}).
	\]
	Starting from $V(N_{a_0})=V(N_i)\subseteq\bm U_i$,
	these relations successively give
	$V(N_{a_s})\subseteq\bm U_i$ for
	$s=1,2,\ldots,m$. In particular,
	$V(N_a)\subseteq\bm U_i$, proving
	\eqref{Eq:Na_in_Ui}.
	
	Next, we show that every neighbor of a node in $\bm U_i$ also
	belongs to $\bm U_i$. 
	
	If $u=x_a\in\mathcal X$, then
	\eqref{Eq:Na_in_Ui} gives
	$
	\mathcal N_u \subseteq V(N_a) \subseteq \bm U_i$.
	
	If $u=f_b\in\mathcal F$, then
	\eqref{Eq:Ui_hierarchy} gives
	$f_b\in V(N_i)$ or
	$f_b\in V(N_{a\setminus c})$ for some residual neighborhood
	$N_{a\setminus c}$.
	Since
	$V(N_{a\setminus c})\subseteq V(N_a)$,
	we have $f_b\in V(N_a)$ for some
	$x_a\in\bm U_i\cap\mathcal X$.
	Definition~\ref{definition3} therefore gives
	$\mathcal N_u\subseteq V(N_a)\subseteq\bm U_i$.

	Therefore,
	\begin{equation}\label{Eq:Ui_neighbor}
		u\in\bm U_i
		\quad\Longrightarrow\quad
		\mathcal N_u\subseteq\bm U_i.
	\end{equation}
	
	Finally, let $v\in\mathcal X\cup\mathcal F$. Since $\mathbb G$
	is connected, there is a path
	$p=(v_0,v_1,\ldots,v_t)$ with $v_0=x_i$ and $v_t=v$.
	We have $v_0=x_i\in\bm U_i$. Moreover,
	$v_{s+1}\in\mathcal N_{v_s}$, so
	\eqref{Eq:Ui_neighbor} successively gives
	$v_s\in\bm U_i$ for $s=0,1,\ldots,t$. 
	In particular,
	$v=v_t\in\bm U_i$. Since $v$ was arbitrary,
	$\mathcal X\cup\mathcal F\subseteq\bm U_i$. The reverse inclusion
	follows from~\eqref{Eq:Ui_hierarchy}; hence,
	$\bm U_i=\mathcal X\cup\mathcal F$.
	
	Therefore, every $f_l\in\mathcal F$ belongs to
	$V(N)$ for some
	$N\in\bigcup_{\ell=1}^{K}\mathcal T_i^{(\ell)}$.
	Since $f_l$ is a factor node, $f_l\in\bm f_N$. This proves
	Part~1.

	\textit{Part 2: Each factor node belongs to at most one $\bm f_N$.}
	
	Let
	\[
	N,N'
	\in
	\bigcup_{\ell=1}^{K}\mathcal T_i^{(\ell)},
	\qquad N\ne N'.
	\]
	We prove that $\bm f_N\cap\bm f_{N'}=\emptyset$.
	By the recursive construction and (P3), there is a finite
	sequence of successively generated neighborhoods from $N_i$ to
	$N$, and another from $N_i$ to $N'$. 
	Let $N^*$ be the last neighborhood common to the two sequences.
	Then either $N^*=N_i$ or
	$N^*=N_{c\setminus d}$ for some
	$x_c,x_d\in\mathcal X$. When $N^*=N_i$, set $x_c=x_i$.
	We distinguish whether one sequence ends at $N^*$ or both
	continue beyond $N^*$.

	\emph{Case 1: $N^*$ equals $N$ or $N'$.}
	Without loss of generality, let $N^*=N$. Let
	$N_{s\setminus c}$ be the first residual neighborhood generated
	after $N^*$ in the sequence ending at $N'$, as illustrated in
	Fig.~\ref{fig:proof_5}. 
	Then, \eqref{Eq:Dji_def} gives $V(N')\subseteq\bm D_{s\setminus c}$.
	Hence, $\bm f_{N'}\subseteq\bm D_{s\setminus c}$.
	By (P2),
	$
	\bm D_{s\setminus c}\cap V(N_c)=\{x_s\}$, we have
	$\bm f_{N'}\cap V(N_c)=\emptyset$.
	Moreover, $N=N^*$ is either $N_c$ or
	$N_{c\setminus d}$. Hence,
	$V(N)\subseteq V(N_c)$, and therefore
	$\bm f_N\subseteq V(N_c)$. It follows that
	$\bm f_N\cap\bm f_{N'}=\emptyset$.

	\begin{figure}[!htp]
		\centering
		\includegraphics[width=0.46\textwidth]{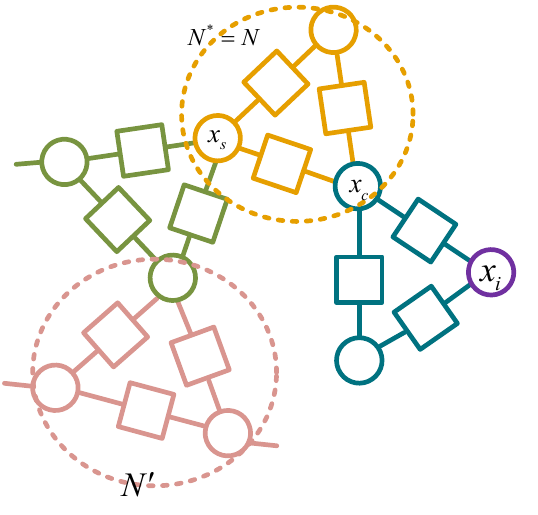}
		\caption{Case~1, where $N^*=N$ and the generation sequence to
			$N'$ continues through $x_s$.}
		\label{fig:proof_5}
	\end{figure}
	
	\emph{Case 2: $N^*$ equals neither $N$ nor $N'$.}
	The two sequences continue from $N^*$ through variables $x_p$
	and $x_q$, respectively, as illustrated in
	Fig.~\ref{fig:proof_6}. Since $N^*$ is their last common
	neighborhood, $x_p\ne x_q$. Moreover,
	\[
	\bm f_N\subseteq\bm D_{p\setminus c},
	\qquad
	\bm f_{N'}\subseteq\bm D_{q\setminus c}.
	\]
	By (P2),
	$\bm D_{p\setminus c}\cap\bm D_{q\setminus c}=\emptyset$.
	Therefore, $\bm f_N\cap\bm f_{N'}=\emptyset$.
	
	\begin{figure}[!htp]
		\centering
		\includegraphics[width=0.46\textwidth]{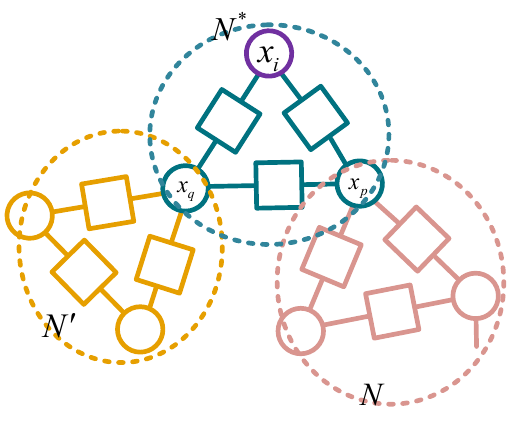}
		\caption{Case~2, where the two generation sequences continue
			from $N^*$ through the distinct variables $x_p$ and $x_q$.}
		\label{fig:proof_6}
	\end{figure}
	
	Since $N$ and $N'$ were arbitrary, both cases show that
	$\bm f_N\cap\bm f_{N'}=\emptyset$ for any two distinct
	neighborhoods in the recursive expansion. Hence, no factor node
	belongs to more than one $\bm f_N$. This proves Part~2.
	
	Parts~1 and~2 show that every factor node
	$f_l\in\mathcal F$ belongs to exactly one $\bm f_N$.
	Using~\eqref{Eq:fai_1}, we therefore obtain
	\[
	\begin{aligned}
		\prod_{\ell=1}^{K}
		\prod_{N\in\mathcal T_i^{(\ell)}}
		\Phi_N(\bm x_N)
		&=
		\prod_{\ell=1}^{K}
		\prod_{N\in\mathcal T_i^{(\ell)}}
		\prod_{f_l\in\bm f_N}
		f_l(\bm x_l)\\
		&=
		\prod_{f_l\in\mathcal F}
		f_l(\bm x_l)
		=
		\psi(\bm x).
	\end{aligned}
	\]
	Thus, the hierarchical product
	in~\eqref{Eq:P_S_1} equals $\psi(\bm x)$, proving (P4).
	This also completes the proof of
	Lemma~\ref{lemma:exact_separation}.
	\hfill$\square$
	
	\section{Implementation Details of Algorithms for DC State Estimation}
	\label{sec:S1}
	
	\subsection{ENMP for DC State Estimation}
	\label{AppendixA}
	
	To apply ENMP (Algorithm~\ref{alg2}) to the DC state estimation
	problem, we specialize the general Gaussian update rules in
	Section~\ref{Sec:GaussianMe}. For any variable node $x_i$, the
	local factor set satisfies
	$\bm{f}_{N_i}\subseteq
	\{f_i\}_{i=1}^{n_x}\cup\{g_j\}_{j=1}^{n_z}$.
	
	For the message $q_{i\leftarrow j}(x_j)$, the aggregated
	canonical parameters in \eqref{Eq:Gaussian_Aggregation} become
	\begin{equation}\label{Eq:DC_Aggregation_Total}
		\begin{split}
			\bm{\Lambda}_{N_{j\setminus i}}
			&=
			\bm{\Lambda}_{N_{j\setminus i}}^{\mathrm{loc}}
			+
			\sum_{x_k\in\bm{x}_{N_{j\setminus i}\setminus j}}
			[\Lambda_{j\leftarrow k}]_{\mathrm{ext}},
			\\
			\bm{\eta}_{N_{j\setminus i}}
			&=
			\bm{\eta}_{N_{j\setminus i}}^{\mathrm{loc}}
			+
			\sum_{x_k\in\bm{x}_{N_{j\setminus i}\setminus j}}
			[\eta_{j\leftarrow k}]_{\mathrm{ext}},
		\end{split}
	\end{equation}
	where
	\begin{equation*}
		\begin{split}
			\bm{\Lambda}_{N_{j\setminus i}}^{\mathrm{loc}}
			&=
			\sum_{g_m\in\bm{f}_{N_{j\setminus i}}}
			[\bm{a}_m^{\mathsf T}\Lambda_{z_m}\bm{a}_m]_{\mathrm{ext}}
			+
			\sum_{f_k\in\bm{f}_{N_{j\setminus i}}}
			[\Lambda_{x_k}]_{\mathrm{ext}},
			\\
			\bm{\eta}_{N_{j\setminus i}}^{\mathrm{loc}}
			&=
			\sum_{g_m\in\bm{f}_{N_{j\setminus i}}}
			[\bm{a}_m^{\mathsf T}\Lambda_{z_m}z_m]_{\mathrm{ext}}
			+
			\sum_{f_k\in\bm{f}_{N_{j\setminus i}}}
			[\Lambda_{x_k}\mu_k]_{\mathrm{ext}}.
		\end{split}
	\end{equation*}
	The message parameters
	$\Lambda_{i\leftarrow j}$ and
	$\eta_{i\leftarrow j}$ are then obtained from
	\eqref{Eq:Gaussian_Params}.
	
	Similarly, the aggregated canonical parameters for the belief
	$q_i(x_i)$ are
	\begin{equation}\label{Eq:DC_Final_Belief_Params}
		\begin{split}
			\bm{\Lambda}_{N_i}
			&=
			\bm{\Lambda}_{N_i}^{\mathrm{loc}}
			+
			\sum_{x_j\in\bm{x}_{N_i\setminus i}}
			[\Lambda_{i\leftarrow j}]_{\mathrm{ext}},
			\\
			\bm{\eta}_{N_i}
			&=
			\bm{\eta}_{N_i}^{\mathrm{loc}}
			+
			\sum_{x_j\in\bm{x}_{N_i\setminus i}}
			[\eta_{i\leftarrow j}]_{\mathrm{ext}},
		\end{split}
	\end{equation}
	where
	\begin{equation*}
		\begin{split}
			\bm{\Lambda}_{N_i}^{\mathrm{loc}}
			&=
			\sum_{g_m\in\bm{f}_{N_i}}
			[\bm{a}_m^{\mathsf T}\Lambda_{z_m}\bm{a}_m]_{\mathrm{ext}}
			+
			\sum_{f_k\in\bm{f}_{N_i}}
			[\Lambda_{x_k}]_{\mathrm{ext}},
			\\
			\bm{\eta}_{N_i}^{\mathrm{loc}}
			&=
			\sum_{g_m\in\bm{f}_{N_i}}
			[\bm{a}_m^{\mathsf T}\Lambda_{z_m}z_m]_{\mathrm{ext}}
			+
			\sum_{f_k\in\bm{f}_{N_i}}
			[\Lambda_{x_k}\mu_k]_{\mathrm{ext}}.
		\end{split}
	\end{equation*}
	The belief
	canonical parameters are computed using
	\eqref{Eq:GaussianBelief_Params}, and the estimated mean
	$\hat{\mu}_{i,\mathrm{ENMP}}$ and variance
	$\hat{\Sigma}_{i,\mathrm{ENMP}}$ are obtained from
	\eqref{Eq:GaussianBelief}.
	
	In Algorithm~\ref{alg2}, the message update in Step~9 uses
	\eqref{Eq:DC_Aggregation_Total} and
	\eqref{Eq:Gaussian_Params}, while the belief computation in Step~15
	uses \eqref{Eq:DC_Final_Belief_Params},
	\eqref{Eq:GaussianBelief_Params}, and
	\eqref{Eq:GaussianBelief}.

	\subsection{GBP for DC state estimation}\label{AppendixD}
	
	To apply GBP~\cite{Yedidia2005Constructing} to the DC state estimation
	problem, we construct a region graph over $\mathbb{G}=(\mathcal{X},\mathcal{F},\mathcal{E})$
	and specialize the message update rules to Gaussian distributions.
	
	\subsubsection{Region graph construction}
	A region graph $\mathcal{G}_R=(\mathcal{R},\mathcal{E}_R)$ organizes
	$\mathbb{G}$ into regions, where each region $r\in\mathcal{R}$ contains
	a subset of variable nodes and all factor nodes whose scope is fully
	contained within that variable set.
	Following~\cite{Yedidia2005Constructing}, let $\mathcal{P}_r$,
	$\mathcal{C}_r$, and $\mathcal{D}_r$ denote the sets of parents, children,
	and descendants of region $r$, respectively, and let
	$\mathcal{D}_r^{+}=\{r\}\cup\mathcal{D}_r$ denote region $r$ together with
	all its descendants.

	Two region-graph configurations are evaluated on the IEEE 14-bus system.
	GBP-A uses three large regions:
	$\mathcal{R}_1^{\mathrm{A}}=\{x_1,x_2,x_3,x_4,x_5,x_6\}$,
	$\mathcal{R}_2^{\mathrm{A}}=\{x_4,x_7,x_8,x_9\}$, and
	$\mathcal{R}_3^{\mathrm{A}}=\{x_5,x_6,x_9,x_{10},x_{11},x_{12},x_{13},x_{14}\}$,
	together with three intersection regions
	$s_{12}^{\mathrm{A}}=\{x_4\}$,
	$s_{13}^{\mathrm{A}}=\{x_5,x_6\}$, and
	$s_{23}^{\mathrm{A}}=\{x_9\}$.
	GBP-B uses two large regions:
	$\mathcal{R}_1^{\mathrm{B}}=\{x_1,x_2,x_3,x_4,x_5,x_6\}$ and
	$\mathcal{R}_2^{\mathrm{B}}=\{x_4,x_5,x_6,x_7,x_8,x_9,x_{10},x_{11},x_{12},x_{13},x_{14}\}$,
	together with one intersection region
	$s_{12}^{\mathrm{B}}=\{x_4,x_5,x_6\}$.
	Each region also contains all factor nodes whose neighboring variable
	nodes are contained in the region.
	In both configurations, each intersection region $s_{ij}$ has the two
	large regions sharing it as its parents in $\mathcal{G}_R$.

	\subsubsection{Gaussian message update}
	Since all factors in the DC state estimation factor graph are Gaussian,
	the message $m_{rs}(\bm{x}_s)$ from region $r$ to its child $s$
	is Gaussian, parameterized by precision matrix $\bm{\Lambda}_{rs}$ and
	potential vector $\bm{\eta}_{rs}$.
	
	The local precision matrix and potential vector accumulated over the
	factors in $r$ but not in $s$ (denoted $\bm{f}_{r\setminus s}$) are:
	\begin{subequations}\label{Eq:GBP_loc}
		\begin{align}
			\bm{\Lambda}_{r\setminus s}^{\mathrm{loc}} &\!=\!
			\sum_{g_m\in\bm{f}_{r\setminus s}}
			\!\bigl[\bm{a}_m^{\mathrm{T}}\Lambda_{z_m}\bm{a}_m\bigr]_{\mathrm{ext}}
			\!+\!\sum_{f_k\in\bm{f}_{r\setminus s}}
			\!\bigl[\bm{\Lambda}_{x_k}\bigr]_{\mathrm{ext}},\\
			\bm{\eta}_{r\setminus s}^{\mathrm{loc}} &\!=\! 
			\sum_{g_m\in\bm{f}_{r\setminus s}}\!\!\!\bigl[\bm{a}_m^{\mathrm{T}}\Lambda_{z_m}z_m\bigr]_{\mathrm{ext}}
			\!\!+\!\!\sum_{f_k\in\bm{f}_{r\setminus s}}
			\!\!\bigl[\bm{\Lambda}_{x_k}\bm{\mu}_k\bigr]_{\mathrm{ext}},
		\end{align}
	\end{subequations}
	where $[\cdot]_{\mathrm{ext}}$ denotes zero-padded expansion to the
	dimension of $\bm{x}_r$.
	Incorporating incoming messages from regions outside $\mathcal{D}_r^{+}$
	to nodes in $\mathcal{D}_r^{+}\setminus\mathcal{D}_s^{+}$,
	the aggregated precision and potential over $\bm{x}_r$ are:
	\begin{subequations}\label{Eq:GBP_agg}
		\begin{align}
			\bm{\Lambda}_r^{\mathrm{agg}} &= \bm{\Lambda}_{r\setminus s}^{\mathrm{loc}}
			+\!\!\!\sum_{\substack{u\in\mathcal{R}\setminus\mathcal{D}_r^{+} \\
					v\in\mathcal{D}_r^{+}\setminus\mathcal{D}_s^{+},\;u\in\mathcal{P}_v}}
			\!\!\!\bigl[\bm{\Lambda}_{uv}\bigr]_{\mathrm{ext}},\\
			\bm{\eta}_r^{\mathrm{agg}} &= \bm{\eta}_{r\setminus s}^{\mathrm{loc}}
			+\!\!\!\sum_{\substack{u\in\mathcal{R}\setminus\mathcal{D}_r^{+} \\
					v\in\mathcal{D}_r^{+}\setminus\mathcal{D}_s^{+},\;u\in\mathcal{P}_v}}
			\!\!\!\bigl[\bm{\eta}_{uv}\bigr]_{\mathrm{ext}}.
		\end{align}
	\end{subequations}
	The message parameters are obtained by marginalizing over
	$\bm{x}_{r\setminus s}$ via the Schur complement.
	Let $\bm{\Sigma}_r^{\mathrm{agg}}=(\bm{\Lambda}_r^{\mathrm{agg}})^{-1}$.
	Then,
	\begin{equation}\label{Eq:GBP_msg}
		\bm{\Lambda}_{rs}=
		\bigl([\bm{\Sigma}_r^{\mathrm{agg}}]_{ss}\bigr)^{-1},\quad
		\bm{\eta}_{rs}=
		\bm{\Lambda}_{rs}\bigl[\bm{\Sigma}_r^{\mathrm{agg}}\,\bm{\eta}_r^{\mathrm{agg}}\bigr]_s,
	\end{equation}
	where $[\cdot]_{ss}$ and $[\cdot]_s$ denote the block sub-matrix and
	sub-vector corresponding to $\bm{x}_s$, respectively.
	
	\subsubsection{Belief and marginal}
	Upon convergence, the belief of region $r$ is a Gaussian with parameters,
	\begin{subequations}\label{Eq:GBP_bel}
		\begin{align}
			\bm{\Lambda}_r^{\mathrm{bel}} &= \bm{\Lambda}_r^{\mathrm{loc}}
			+\sum_{\substack{v\in\mathcal{D}_r^{+} \\
					u\in\mathcal{P}_v\setminus\mathcal{D}_r^{+}}}
			\bigl[\bm{\Lambda}_{uv}\bigr]_{\mathrm{ext}},\\
			\bm{\eta}_r^{\mathrm{bel}} &= \bm{\eta}_r^{\mathrm{loc}}
			+\sum_{\substack{v\in\mathcal{D}_r^{+} \\
					u\in\mathcal{P}_v\setminus\mathcal{D}_r^{+}}}
			\bigl[\bm{\eta}_{uv}\bigr]_{\mathrm{ext}},
		\end{align}
	\end{subequations}
	where $\bm{\Lambda}_r^{\mathrm{loc}}$ and $\bm{\eta}_r^{\mathrm{loc}}$ follow
	\eqref{Eq:GBP_loc} with the summation extended to all factors in $r$
	(i.e., $\bm{f}_r$ instead of $\bm{f}_{r\setminus s}$).
	The marginal mean $\hat{\mu}_{i,\text{GBP}}$ and variance $\hat{\Sigma}_{i,\text{GBP}}$ of
	variable $x_i$ in the smallest region $r\ni x_i$ are given by
	\begin{equation}\label{Eq:GBP_marginal}
		\bm{\Sigma}_r^{\mathrm{bel}} \! = \!(\bm{\Lambda}_r^{\mathrm{bel}})^{-1},
		\hat{\mu}_{i,\text{GBP}}  \!=  \!\bigl[\bm{\Sigma}_r^{\mathrm{bel}}\,\bm{\eta}_r^{\mathrm{bel}}\bigr]_i,
		\hat{\Sigma}_{i,\text{GBP}} \!= \! \bigl[\bm{\Sigma}_r^{\mathrm{bel}}\bigr]_{ii}.
	\end{equation}
	
	The complete GBP algorithm for DC state estimation is given in
	Algorithm~\ref{alg_GBP}.
	
	\begin{algorithm}
		\caption{GBP for DC state estimation.}\label{alg_GBP}
		\begin{algorithmic}[1]
			\REQUIRE Factor graph $\mathbb{G}$, region graph $\mathcal{G}_R$,
			priors $\{\mu_{x_i},\Lambda_{x_i}^{-1}\}_{i=1}^{n_x}$,
			measurements $\{z_j,\Lambda_{z_j}^{-1}\}_{j=1}^{n_z}$,
			maximum iterations $\beta$, convergence threshold $\delta$;
			\ENSURE Beliefs $b(x_i)$, $i=1,2,\ldots,n_x$;
			\STATE Let $t=1$, $e=\infty$; initialize $\bm{\Lambda}_{rs}^{(0)}=10^{-4}\mathbf{I}$,
			$\bm{\eta}_{rs}^{(0)}=\mathbf{0}$ for all $(r,s)$ with $s\in\mathcal{C}_r$;
			\WHILE{$e>\delta$ \textbf{and} $t\leq\beta$}
			\FOR{each $(r,s)$ with $s\in\mathcal{C}_r$}
			\STATE Compute $(\bm{\Lambda}_{rs}^{(t)},\bm{\eta}_{rs}^{(t)})$
			via \eqref{Eq:GBP_loc}--\eqref{Eq:GBP_msg};
			\ENDFOR
			\STATE $e\leftarrow\max_{r,s}\,\left\| m_{rs}^{(t)} - m_{rs}^{(t-1)} \right\|_{\infty}$;
			\STATE $t\leftarrow t+1$;
			\ENDWHILE
			\FOR{each $r\in\mathcal{R}$}
			\STATE Compute $(\bm{\Lambda}_r^{\mathrm{bel}},\bm{\eta}_r^{\mathrm{bel}})$
			via \eqref{Eq:GBP_bel};
			\ENDFOR
			\FOR{$i=1,2,\ldots,n_x$}
			\STATE Select the smallest region $r\in\mathcal{R}$ with $x_i\in r$; compute
			$(\hat{\mu}_{i,\text{GBP}},\hat{\Sigma}_{i,\text{GBP}})$ via \eqref{Eq:GBP_marginal};
			\STATE $b(x_i)\leftarrow\mathcal{N}(x_i;\hat{\mu}_{i,\text{GBP}},\hat{\Sigma}_{i,\text{GBP}})$;
			\ENDFOR
			\RETURN $b(x_i)$, $i=1,2,\ldots,n_x$.
		\end{algorithmic}
	\end{algorithm}

	\subsection{sGaBP-m for DC state estimation}\label{AppendixC}
	The sGaBP-m algorithm serves as a robust baseline by introducing the regularization parameter $\lambda$ to enforce convergence~\cite{Kamper2020}.
	However, sGaBP-m operates on a pairwise Markov graph rather than a factor graph. Thus, to apply sGaBP-m, we first convert the factor graph model in \eqref{Eq:JPDF_PS} into its canonical Gaussian form defined by a global precision matrix $\mathbf{S} \in \mathbb{R}^{n_x \times n_x}$ and a potential vector $\bm{b} \in \mathbb{R}^{n_x}$, which are derived as,
	\begin{equation}\label{Eq:CanonicalForm}
		\begin{split}
			\mathbf{S} &= \bm{\Lambda}_{\text{prior}} + \mathbf{A}^T \bm{\Lambda}_z \mathbf{A}, \\
			\bm{b} &= \bm{\Lambda}_{\text{prior}} \bm{\mu}_{\text{prior}} + \mathbf{A}^T \bm{\Lambda}_z \bm{z},
		\end{split}
	\end{equation}
	where $\bm{\Lambda}_{\text{prior}} = \text{diag}(\Lambda_{x_1}, \dots, \Lambda_{x_{n_x}})$, $\bm{\mu}_{\text{prior}} = [\mu_{1}, \dots, \allowbreak \mu_{n_x}]^T$, and $\bm{\Lambda}_z = \text{diag}(\Lambda_{z_1}, \dots, \Lambda_{z_{n_z}})$.
	Once $\mathbf{S}$ and $\bm{b}$ are obtained, a weighted 
	undirected graph is constructed over the variable nodes, 
	with the edge weight between nodes $i$ and $j$ being defined as
	\begin{equation}\label{Eq:MetisWeight}
		w_{ij} = \frac{|S_{ij}|}{\sqrt{S_{ii}\,S_{jj}}}, 
		\quad i \neq j,
	\end{equation}
	where $S_{ij}$ denotes the $(i,j)$-th element of 
	$\mathbf{S}$. An edge is included only when 
	$S_{ij} \neq 0$, i.e., when variables $x_i$ and 
	$x_j$ are coupled in the pairwise Markov graph.
	The variable set $\mathcal{X}$ is then partitioned 
	into $p = \lceil \alpha\, n_x \rceil$ disjoint 
	clusters using the METIS balanced graph partitioning 
	algorithm~\cite{Karypis1998METIS}, which produces 
	balanced partitions with minimal edge cut with 
	respect to the weights in~\eqref{Eq:MetisWeight}.

	Let $\mathcal{C}_c \subset \mathcal{X}$ denote the subset of variable nodes
	belonging to the $c$-th cluster, so that the clusters partition the
	variable set, $\bigcup_{c=1}^p \mathcal{C}_c = \mathcal{X}$.
	To facilitate the algorithmic description, we adopt a block notation
	indexed by clusters. 
	Let $\mathbf{S}_{cc'} = \mathbf{S}[\mathcal{C}_c,
	\mathcal{C}_{c'}]$ denote the block of $\mathbf{S}$ with rows indexed by
	$\mathcal{C}_c$ and columns by $\mathcal{C}_{c'}$ (the diagonal block
	$\mathbf{S}_{cc}$ being the within-cluster precision on $\mathcal{C}_c$),
	in contrast to the scalar entry $S_{ij}$ in~\eqref{Eq:MetisWeight}.
	Similarly, let $\bm{b}_c$ and $\bm{\mu}_c$ denote the sub-vectors of
	$\bm{b}$ and of the cluster mean on $\mathcal{C}_c$, respectively.
	Let $d_c = |\mathcal{C}_c|$ be the dimension of $\mathcal{C}_c$, and
	$\mathcal{A}_c$ the set of indices of clusters neighboring $\mathcal{C}_c$.
	Based on these definitions, the sGaBP-m algorithm is summarized as Algorithm~\ref{alg_sgabpm}.

	\begin{algorithm}
		\caption{The sGaBP-m algorithm for inferring $x_k$.}\label{alg_sgabpm}
		\begin{algorithmic}[1]
			\REQUIRE Precision matrix $\mathbf{S}$, potential vector $\bm{b}$, clusters $\{\mathcal{C}_c\}_{c=1}^p$, parameters $\lambda, \beta, \delta$;
			\ENSURE Marginal beliefs $b(x_k)$ for all variable nodes $k=1, 2, \ldots, n_x$;
			\STATE Let $t = 0$ and error $e = \infty$;
			\STATE Initialize $\bm{\mu}_c^{(-1)}=\mathbf{0}$, $\mathbf{Q}_{c'c}^{(0)}=\mathbf{0}$, and $\bm{v}_{c'c}^{(0)}=\mathbf{0}$ for all clusters $c=1,2,\ldots,p$ and all $c'\in\mathcal{A}_c$;
			\WHILE{$e > \delta$ \textbf{and} $t < \beta$}
			\FOR{$c=1$ to $p$ }
			\STATE $\mathbf{P}_{c}^{(t)} = \lambda \mathbf{I}_{d_{c}} + \mathbf{S}_{cc} + \sum_{c' \in \mathcal{A}_c} \mathbf{Q}_{c'c}^{(t)}$;
			\STATE $\bm{h}_{c}^{(t)} = \bm{b}_{c} + \sum_{c' \in \mathcal{A}_c} \bm{v}_{c'c}^{(t)}$;
			\STATE $\bm{\mu}_{c}^{(t)} = [\mathbf{P}_{c}^{(t)}]^{-1} [\lambda\bm{\mu}_{c}^{(t-1)} + \bm{h}_{c}^{(t)}]$;
			\ENDFOR
			\STATE $e \leftarrow \max_{c}\left\|\mathbf{S}_{cc}\bm{\mu}_{c}^{(t)}+\sum_{c'\in\mathcal{A}_c}\mathbf{S}_{cc'}\bm{\mu}_{c'}^{(t)}-\bm{b}_{c}\right\|_{\infty}$;
			\STATE \textbf{If} $e \le \delta$ \textbf{break};
			\FOR{$c=1$ to $p$ \textbf{and} each neighbor $c' \in \mathcal{A}_c$}
			\STATE $\mathbf{Q}_{cc'}^{(t+1)} = -\mathbf{S}_{c'c} [\mathbf{P}_{c}^{(t)} - \mathbf{Q}_{c'c}^{(t)}]^{-1} \mathbf{S}_{cc'}$;
			\STATE $\bm{v}_{cc'}^{(t+1)} = -\mathbf{S}_{c'c} [\mathbf{P}_{c}^{(t)} - \mathbf{Q}_{c'c}^{(t)}]^{-1} [ \lambda\bm{\mu}_{c}^{(t-1)} + \bm{h}_{c}^{(t)} - \bm{v}_{c'c}^{(t)} ]$;
			\ENDFOR
			\STATE $t \leftarrow t+1$;
			\ENDWHILE
			\STATE Compute $\hat{\mathbf{P}}_c^{(t)} = \mathbf{S}_{cc} + \sum_{c' \in \mathcal{A}_c} \mathbf{Q}_{c'c}^{(t)}$ for all $c$;
			\STATE \textbf{Return} $b(x_k) = \mathcal{N}\!\left( x_k \mid [\bm{\mu}_{c}^{(t)}]_{k}, [(\hat{\mathbf{P}}_{c}^{(t)})^{-1}]_{kk} \right)$, where $x_k \in \mathcal{C}_c$.
		\end{algorithmic}
	\end{algorithm}

	\subsection{BP for DC state estimation}\label{AppendixB}
	For comparative analysis, we implement the BP algorithm for the DC state estimation problem.  
	We adopt the BP message update rules in (7), (12), and (14) of~\cite{pulgar2018fast}.
	
	The message from a factor node $f_i$ to a variable node $x_i$ can be expressed as $m_{f_{i} \rightarrow x_i} = p(x_i) = \mathcal{N}(x_i \mid \mu_{x_i}, \Lambda_{x_i}^{-1}).$
	The message from a factor node $g_{j}$ to a variable node $x_i$ can be expressed as $m_{g_{j} \rightarrow x_i}\propto \mathcal{N}(x_i \mid \mu_{g_{j}}^{x_i}, {\Lambda_{g_{j}}^{x_i}}^{-1})$, where the mean $\mu_{g_{j}}^{x_i}$ and precision matrix $\Lambda_{g_{j}}^{x_i}$ can be obtained as,
	\begin{equation}\label{Eq:mess_g_a_TBP}
		\begin{split}
			\mu_{g_{j}}^{x_i} = & {\textstyle \frac{1}{a_{i,j}} \left (z_j - \sum_{x_u\in \mathcal{N}_{g_j}\setminus{x_i} } a_{u,j} \mu_{x_u}^{g_j} \right)}, \\
			{\Lambda_{g_{j}}^{x_i}}^{-1} = & {\textstyle\frac{1}{a_{i,j}^2} \left (\Lambda_{z_j}^{-1} + \sum_{x_u\in \mathcal{N}_{g_j}\setminus{x_i} } a_{u,j}^2 {\Lambda_{x_u}^{g_j}}^{-1} \right)},
		\end{split}
	\end{equation}
	where $\mathcal{N}_{g_j}$ denotes the set of variable nodes directly connected to the factor node $g_j$.
	The message from a variable node $x_i$ to a factor node $g_{j}$ is $m_{x_i \rightarrow g_{j}}\propto \mathcal{N}(x_i \mid \mu_{x_i}^{g_{j}}, {\Lambda_{x_i}^{g_{j}}}^{-1})$, where,
	\begin{equation}\label{Eq:mess_a_g_TBP}
		\begin{split}
			\mu_{x_i}^{g_{j}} = & {\Lambda_{x_i}^{g_{j}}}^{-1} {\textstyle \left (\sum_{g_u\in \mathcal{N}_{x_i} \setminus{g_j}}  {\Lambda_{g_{u}}^{x_i}}\mu_{g_{u}}^{x_i} + {\Lambda_{x_i}}\mu_{x_i} \right)},\\
			{\Lambda_{x_i}^{g_{j}}} = & {\textstyle \sum_{g_u\in \mathcal{N}_{x_i} \setminus{g_j}}  {\Lambda_{g_{u}}^{x_i}} + {\Lambda_{x_i}} },
		\end{split}
	\end{equation}
	where $\mathcal{N}_{x_i}$ denotes the set of factor nodes connected to variable node $x_i$.
	
	Upon convergence, the belief $b(x_i)$ is given by,
	\begin{equation}\label{Eq:belief_TBP}
		\begin{split}
			b(x_i) =& m_{f_i \rightarrow x_i} {\textstyle \prod_{\substack{g_u \in \mathcal{N}_{x_i}}} m_{g_{u} \rightarrow x_i} } \\
			=& \mathcal{N}(x_i \!\mid\! \mu_{x_i}, \Lambda_{x_i}^{-1}) {\textstyle \prod_{\substack{g_u \in \mathcal{N}_{x_i}}} \!\!\!\mathcal{N}(x_i \!\mid\! \mu_{g_{u}}^{x_i}, (\Lambda_{g_{u}}^{x_i})^{-1})} \\
			\propto& \mathcal{N}(x_i \!\mid\! \hat{\mu}_{i,\text{BP}}, {\hat{\Lambda}_{i,\text{BP}}}^{-1}).
		\end{split}
	\end{equation}
	Let $m_{g_{j} \rightarrow x_i}^t$ and $m_{x_i\rightarrow g_{j}}^t$ denote the messages passed from factor node $g_j$ to variable node $x_i$ and from variable node $x_i$ to factor node $g_j$ at the $t$-th iteration, respectively.
	The BP algorithm is summarized as Algorithm~\ref{alg_bp}.
	\begin{algorithm}
		\caption{The BP algorithm for inferring $x_i$.}\label{alg_bp}
		\begin{algorithmic}[1]
			\REQUIRE The factor graph $\mathbb{G}$, priors
			$\{\mu_{x_i},\Lambda_{x_i}^{-1}\}_{i=1}^{n_x}$,
			measurements
			$\{z_j,\Lambda_{z_j}^{-1}\}_{j=1}^{n_z}$,
			parameters $\beta,\delta$;
			\ENSURE Beliefs $b(x_i)$, $i=1, 2,\ldots, n_x$;
			\STATE  Let $t = 1$ and error $e = \infty$;
			\STATE For all connected nodes $x_i$ ($i=1, 2, \ldots, n_x$) and $g_j$ ($j=1, 2,\ldots ,n_z$), initialize messages $m_{x_i \rightarrow g_j}^0$ and $m_{g_j \rightarrow x_i}^0$ as Gaussian distributions;
			\WHILE{$e > \delta$ \textbf{and} $t \le \beta$}
			\FORALL{factor node $g_j$ ($j=1, 2,\ldots ,n_z$) \textbf{and} each neighbor $x_i \in \mathcal{N}_{g_j}$}
			\STATE Compute $m_{g_{j} \rightarrow x_i}^t$ via \eqref{Eq:mess_g_a_TBP} using incoming messages $\{ m_{x_k \rightarrow g_j}^{t-1} \}_{x_k \in \mathcal{N}_{g_j} \setminus \{x_i\}}$ from iteration $t-1$;
			\ENDFOR
			\FORALL{variable node $x_i$ ($i=1, 2,\ldots, n_x$) \textbf{and} each neighbor $g_j \in \mathcal{N}_{x_i}$}
			\STATE Compute $m_{x_i\rightarrow g_{j}}^t$ via \eqref{Eq:mess_a_g_TBP} using incoming messages $\{ m_{g_k \rightarrow x_i}^{t-1} \}_{g_k \in \mathcal{N}_{x_i} \setminus \{g_j\}}$ from iteration $t-1$;
			\ENDFOR
			\STATE $e \leftarrow \max_{i,j} \left\| m_{g_j \rightarrow x_i}^{(t)} - m_{g_j \rightarrow x_i}^{(t-1)} \right\|_{\infty}$;
			\STATE $t \leftarrow t + 1$;
			\ENDWHILE
			\STATE Calculate beliefs via \eqref{Eq:belief_TBP} for all $i \in \{1, 2,\ldots, n_x\}$; 
			\STATE \textbf{Return} $b(x_i)$.
		\end{algorithmic}
	\end{algorithm}

	\section{Implementation Details of Algorithms for the Ising Model}\label{sec:S2}

	The following three subsections adapt Algorithms~\ref{alg2},
	\ref{alg_GBP}, and \ref{alg_bp} to the Ising factor graph.
	Three aspects change uniformly across all the three algorithms.
	
	First, the inputs are changed as follows. 
	The Gaussian priors
	$\{\mu_{x_i},\Lambda_{x_i}^{-1}\}$ and measurements
	$\{z_j,\Lambda_{z_j}^{-1}\}$ are replaced by
	$\{h_i\}_{i=1}^{n_x}$ and $\{J_{ij}\}_{f_{ij}\in\mathcal{F}}$,
	with $x_i\in\{-1,+1\}$.
	The factor set consists of factors $f_i(x_i)=\exp(h_ix_i)$
	and pairwise factors $f_{ij}(x_i,x_j)=\exp(J_{ij}x_ix_j)$.
	
	Second, all messages are discrete distributions over $\{-1,+1\}$
	rather than Gaussian distributions parameterized by precision
	and potential; accordingly, the Gaussian message formulas and
	the Schur-complement marginalization are replaced by finite
	summations over $\{-1,+1\}$.
	
	The specific message formulas for each algorithm are given
	in Sections~\ref{AppendixE}--\ref{AppendixF} of Appendix~\ref{sec:S2}.

	\subsection{ENMP for Ising model}\label{AppendixE}
	
	The local potential function over the residual $r$-neighborhood
	$N_{j\setminus i}$ is
	\begin{equation}\label{Eq:Ising_Phi_res}
		\Phi_{N_{j\setminus i}}(\bm{x}_{N_{j\setminus i}})
		= \!\!\!\!\!\!\prod_{f_{kl}\in\bm{f}_{N_{j\setminus i}}}\!  \!\!\!\!\!\!\exp(J_{kl}x_kx_l)
		\cdot \!\!\!\!\!\!\!\prod_{f_k\in\bm{f}_{N_{j\setminus i}}} \!\!\!\!\!\!\!\exp(h_kx_k).
	\end{equation}
	The message in~\eqref{Eq:q_i_j_marginal_mess1_int} reduces to
	\begin{equation}\label{Eq:Ising_ENMP_msg}
		q_{i\leftarrow j}(x_j) = \!\!\!
		\sum_{\bm{x}_{N_{j\setminus i}\setminus j}}
		\!\!\!\Phi_{N_{j\setminus i}}(\bm{x}_{N_{j\setminus i}})
		\!\!\!\prod_{x_k\in\bm{x}_{N_{j\setminus i}\setminus j}}
		\!\!\!\!\!\!q_{j\leftarrow k}(x_k),
	\end{equation}
	and the belief in~\eqref{Eq:q_i_marginal_mess1_int} reduces to
	\begin{equation}\label{Eq:Ising_ENMP_bel}
		q_i(x_i) = \frac{1}{Z_i}
		\sum_{\bm{x}_{N_i\setminus i}}
		\Phi_{N_i}(\bm{x}_{N_i})
		\prod_{x_j\in\bm{x}_{N_i\setminus i}}
		q_{i\leftarrow j}(x_j),
	\end{equation}
	where $\Phi_{N_i}$ follows~\eqref{Eq:Ising_Phi_res} with
	the summation extended to $\bm{f}_{N_i}$.
	
	In Algorithm~\ref{alg2}, Step~9 uses~\eqref{Eq:Ising_ENMP_msg} and
	Step~15 uses~\eqref{Eq:Ising_ENMP_bel}; all other steps are unchanged.

	\subsection{GBP for Ising model}\label{AppendixG}
	Since the Ising factor graph is constructed on the
	IEEE 14-bus topology, the region
	graph configurations GBP-A and GBP-B defined in
	Appendix~\ref{AppendixD} are directly reused. Only the message
	representation changes, with Gaussian parameters being replaced by discrete
	distributions over $\{-1,+1\}$ and Schur-complement marginalization being replaced by finite summation.
	The local potential over factors in region $r$ but not in
	child $s$ is
	\begin{equation}\label{Eq:Ising_GBP_Phi}
		\Phi_{r\setminus s}(\bm{x}_r)
		= \!\!\!\prod_{f_{kl}\in\bm{f}_{r\setminus s}}\!\!\!\!\exp(J_{kl}x_kx_l)
		\cdot\!\!\!\prod_{f_k\in\bm{f}_{r\setminus s}}\!\!\!\!\exp(h_kx_k).
	\end{equation}
	For the Ising model, the corresponding GBP message is computed
    by discrete marginalization as
	\begin{equation}\label{Eq:Ising_GBP_msg}
		m_{rs}(\bm{x}_s)
		\propto
		\sum_{\bm{x}_{r\setminus s}}
		\Phi_{r\setminus s}(\bm{x}_r)
		\!\!\!\prod_{\substack{u\in\mathcal{R}\setminus\mathcal{D}_r^{+}\\
				v\in\mathcal{D}_r^{+}\setminus\mathcal{D}_s^{+},
				\;u\in\mathcal{P}_v}}
		\!\!\!m_{uv}(\bm{x}_v),
	\end{equation}
	where the sum runs over all $2^{|\bm{x}_{r\setminus s}|}$
	configurations of $\bm{x}_{r\setminus s}$.
	Upon convergence, the belief of region $r$ is
	\begin{equation}\label{Eq:Ising_GBP_bel}
		b_r(\bm{x}_r)
		\propto \Phi_r(\bm{x}_r)
		\!\!\!\prod_{\substack{v\in\mathcal{D}_r^{+}\\
				u\in\mathcal{P}_v\setminus\mathcal{D}_r^{+}}}
		\!\!\!m_{uv}(\bm{x}_v),
	\end{equation}
	where $\Phi_r(\bm{x}_r)=\prod_{f_{kl}\in\bm{f}_r}\exp(J_{kl}x_kx_l)
	\cdot\prod_{f_k\in\bm{f}_r}\exp(h_kx_k)$.
	The marginal of variable $x_i$ is obtained from the smallest region
	$r\ni x_i$ by
	\begin{equation}\label{Eq:Ising_GBP_marginal}
		b(x_i)\propto\sum_{\bm{x}_{r\setminus i}}b_r(\bm{x}_r).
	\end{equation}
	In Algorithm~\ref{alg_GBP}, Step~4 uses~\eqref{Eq:Ising_GBP_msg},
	Step~10 uses~\eqref{Eq:Ising_GBP_bel}, and Step~13
	uses~\eqref{Eq:Ising_GBP_marginal}; all other steps are unchanged.

	\subsection{BP for Ising model}\label{AppendixF}
	
	The message from factor node $f_i$ to variable node $x_i$ is
	\begin{equation}\label{Eq:Ising_BP_fi}
		m_{f_i\to x_i}(x_i) = \exp(h_ix_i).
	\end{equation}
	The message from pairwise factor node $f_{ij}$ to variable node
	$x_i$ is
	\begin{equation}\label{Eq:Ising_BP_fij}
		m_{f_{ij}\to x_i}(x_i)\propto \!\!\!\!\!\!
		\sum_{x_j\in\{-1,+1\}} \!\!\!\!\!\! \exp(J_{ij}x_ix_j)\,m_{x_j\to f_{ij}}(x_j).
	\end{equation}
	The message from variable node $x_i$ to pairwise factor node
	$f_{ij}$ is
	\begin{equation}\label{Eq:Ising_BP_xi}
		m_{x_i\to f_{ij}}(x_i) \propto
		\exp(h_ix_i)
		\!\!\!\!\!\!\!\prod_{f_{ik}\in\mathcal{N}_{x_i}\setminus\{f_{ij}\}} \!\!\!\!\!\!
		m_{f_{ik}\to x_i}(x_i),
	\end{equation}
	where $\mathcal{N}_{x_i}$ denotes the set of pairwise factor nodes
	connected to $x_i$.
	Upon convergence, the belief is
	\begin{equation}\label{Eq:Ising_belief_BP}
		b(x_i)\propto
		\exp(h_ix_i)
		\prod_{f_{ij}\in\mathcal{N}_{x_i}}m_{f_{ij}\to x_i}(x_i).
	\end{equation}
	In Algorithm~\ref{alg_bp}, Step~5 uses
	\eqref{Eq:Ising_BP_fij}, Step~8 uses~\eqref{Eq:Ising_BP_xi}, 
	and Step~13 uses~\eqref{Eq:Ising_belief_BP};
	all other steps are unchanged.

	\section{Full sGaBP-m Results over the Tested $(\lambda, \alpha)$ Configurations}
	\label{sec:sgabpm_full}

	This section reports the complete sGaBP-m results 
	over all tested configurations of $\lambda$ 
	and $\alpha$. In the main text, for each IEEE 
	test system, only the best-performing configuration 
	is shown as the horizontal baseline, selected by 
	minimizing the AKLD over the tested configurations. 
	The full results are provided here to illustrate 
	the sensitivity of sGaBP-m to these two parameters.

	The tested values of~$\alpha$ are shown on the 
	horizontal axes of 
	Figs.~\ref{fig:sgabpm_full_akld} 
	and~\ref{fig:sgabpm_full_rmse}, with different 
	curves corresponding to different values 
	of~$\lambda$. Under the tested configurations, 
	the AKLD, $\mathrm{RMSE}_{\mu}$, and 
	$\mathrm{RMSE}_{\sigma^2}$ of sGaBP-m vary 
	substantially across the $(\lambda,\alpha)$ 
	configurations. 
	Across all four test systems, lower AKLD values 
	are generally obtained only at the smallest 
	tested~$\alpha$, while larger values of~$\alpha$ 
	combined with many choices of~$\lambda$ lead to 
	AKLD values that reach the prescribed cap 
	of~$10^{10}$. 
	Moreover, when a configuration reaches the prescribed cap, 
	it does so across all $N_{\mathrm{MC}} = 50$ trials, 
	indicating systematic rather than occasional failure.
	For a given~$\alpha$, a larger~$\lambda$ can 
	alleviate non-convergence; however, when 
	sGaBP-m already converges, a larger~$\lambda$ 
	increases the AKLD. 
	The non-convergence observed under many 
	configurations is consistent with the weak 
	diagonal dominance of the precision matrix 
	in the tested IEEE systems, which requires 
	a sufficiently large~$\lambda$ for sGaBP-m 
	to converge. 
	The elevated AKLD values are primarily 
	attributable to inaccurate mean estimates, 
	as $\mathrm{RMSE}_{\mu}$ is generally larger 
	than $\mathrm{RMSE}_{\sigma^2}$ under the 
	tested configurations; moreover, a configuration 
	with low $\mathrm{RMSE}_{\mu}$ does not 
	necessarily yield low $\mathrm{RMSE}_{\sigma^2}$, 
	and many configurations reach the prescribed 
	RMSE cap of $10^{5}$, particularly on the 
	larger test systems.
	\begin{figure}[t]
		\centering
		\includegraphics[width=0.47\textwidth]{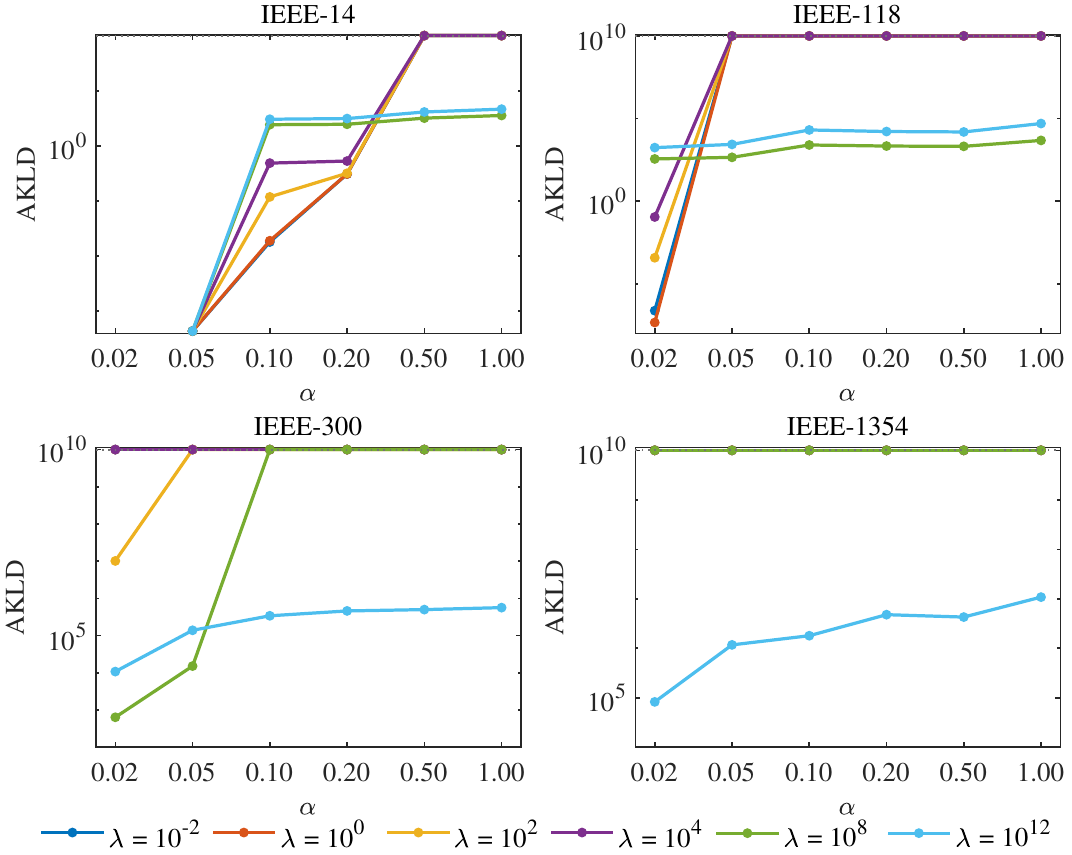}
		\caption{Full AKLD results of sGaBP-m over 
			the tested $(\lambda,\alpha)$ configurations 
			on the IEEE 14-, 118-, 300-, and 1354-bus 
			systems. The horizontal axis denotes~$\alpha$, 
			and different curves correspond to different 
			values of~$\lambda$. The dashed horizontal 
			line indicates the prescribed AKLD cap 
			of $10^{10}$.}
		\label{fig:sgabpm_full_akld}
	\end{figure}
	\begin{figure}[t]
		\centering
		\includegraphics[width=0.47\textwidth]{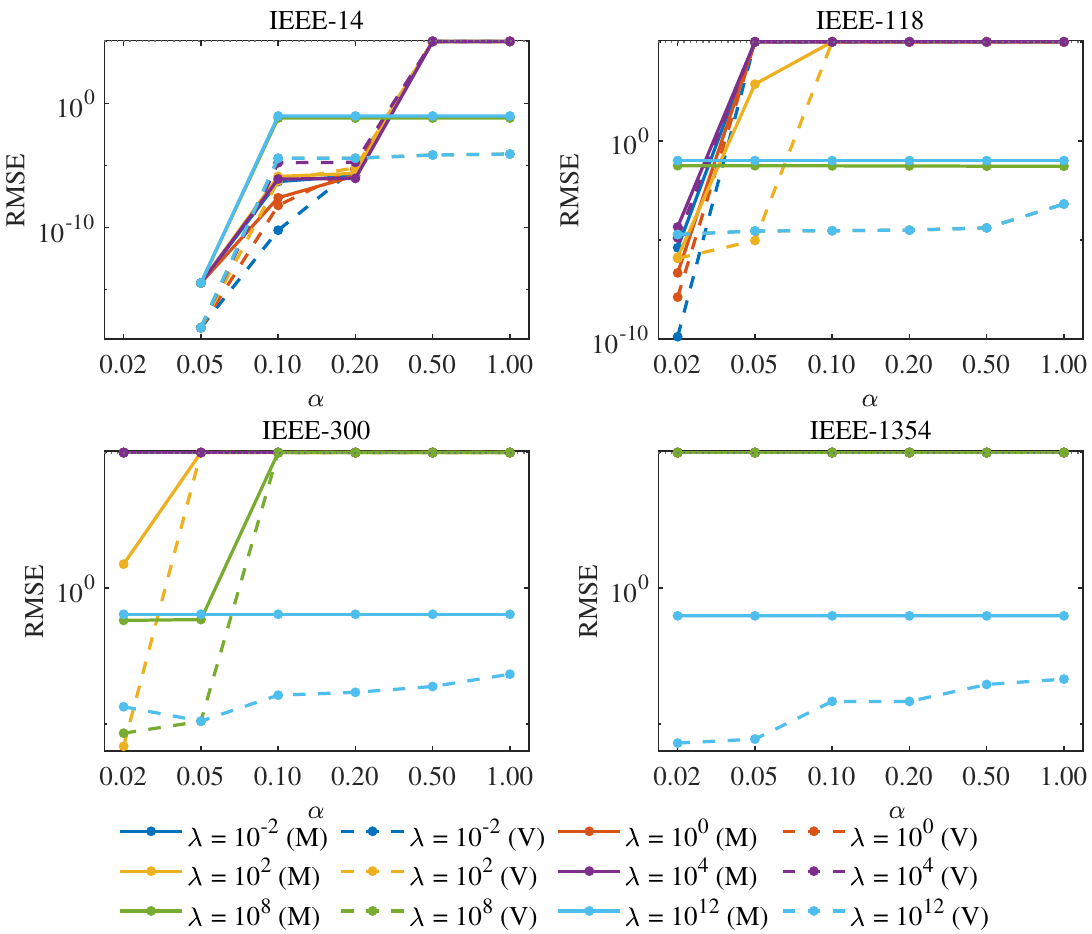}
		\caption{Full $\mathrm{RMSE}_{\mu}$ and 
			$\mathrm{RMSE}_{\sigma^2}$ results of 
			sGaBP-m over the tested $(\lambda,\alpha)$ 
			configurations on the IEEE 14-, 118-, 300-, 
			and 1354-bus systems. The labels ``M'' and 
			``V'' denote $\mathrm{RMSE}_{\mu}$ and 
			$\mathrm{RMSE}_{\sigma^2}$, respectively. 
			The dashed horizontal line indicates the 
			prescribed RMSE cap of $10^{5}$.}
		\label{fig:sgabpm_full_rmse}
	\end{figure}
	
	Overall, under the tested configurations, the 
	accuracy of sGaBP-m depends strongly on the 
	selected $(\lambda,\alpha)$ configuration. 
	On the IEEE 14-bus system, sGaBP-m can achieve 
	accurate estimates under carefully chosen 
	configurations, but its performance becomes 
	increasingly sensitive to $(\lambda,\alpha)$ 
	on larger systems, with many configurations 
	reaching the prescribed AKLD or RMSE caps. 
	The sGaBP-m baseline reported in the main text 
	therefore represents the best-performing 
	configuration over the tested configurations 
	rather than a consistently achievable level 
	of accuracy.

	\section{Maximum Sizes of Extended and Residual Neighborhoods}
	\label{sec:MaximumSN}
	
	Fig.~\ref{fig:size} shows the maximum sizes of the extended neighborhood
	$|\bm{x}_{N_{i}}|$ and the residual neighborhood $|\bm{x}_{N_{j\setminus i}}|$ as a
	function of~$r$ for each tested IEEE system. For all systems, both sizes increase
	with~$r$ and eventually saturate, since each neighborhood is bounded by the finite
	factor graph and stops growing once it has absorbed all reachable loop-induced
	nodes. The extended neighborhood sizes are consistently larger than the residual
	neighborhood sizes across all systems and values of~$r$. The larger IEEE systems
	(IEEE 300-bus and IEEE 1354-bus) exhibit substantially larger neighborhood sizes
	than the smaller systems (IEEE 14-bus and IEEE 118-bus).
	
	\begin{figure}[t]
		\centering\includegraphics[width=0.46\textwidth]{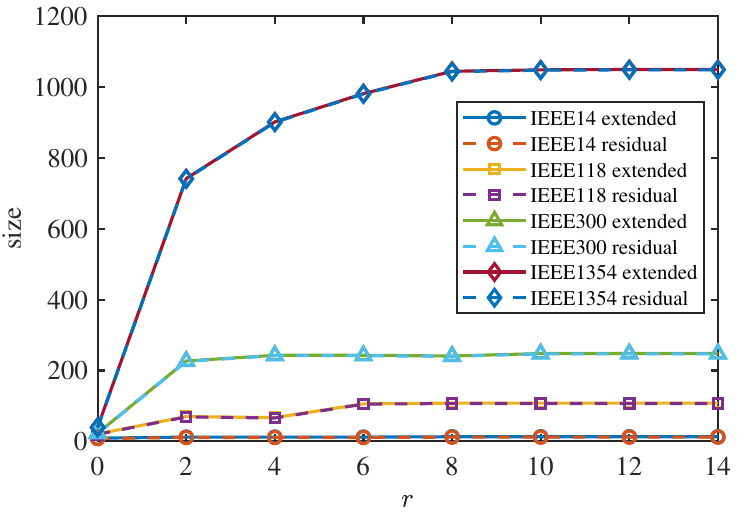}
		\caption{The maximum sizes of $|\bm{x}_{N_{i}}|$ (extended neighborhood) 
			and $|\bm{x}_{N_{j\setminus i}}|$ (residual neighborhood) as a function 
			of~$r$ for the IEEE 14-bus, IEEE 118-bus, IEEE 300-bus, 
			and IEEE 1354-bus systems.}
		\label{fig:size}
	\end{figure}

 	\bibliographystyle{unsrt}        
	\bibliography{autosam_TSP3}           

\end{document}